\documentclass[a4paper, 11pt]{article}

\usepackage{listings}
\title{
Neural networks as fuzzy logic formulas
} 

\date{} 					

\usepackage{authblk}

\author[1]{Damian Heiman}
\author[1]{Antti Kuusisto}
\author[1]{Esko Turunen}
\affil[1]{Mathematics Research Centre, Tampere University, Tampere, Finland}

\usepackage{graphicx} 

 \usepackage[textwidth=151mm, hcentering, vmargin={24mm,32.5mm}, foot=20mm]{geometry}

\usepackage{verbatim}

\usepackage{geometry}
\usepackage{graphicx}
\usepackage{hyperref}
\usepackage{fontawesome5}

\usepackage[utf8]{inputenc}
\usepackage[T1]{fontenc}
\usepackage{amsmath, amssymb, amsthm}
\usepackage{tikz}
\usepackage{lipsum}
\usepackage{pdfpages}
\usepackage{multicol} 
\usepackage{stmaryrd}

\usepackage{thm-restate}
\usepackage{hyperref}

\usepackage{makecell}
\usepackage{mathtools}
\usepackage{colonequals}

\usepackage{marginnote}

\usepackage{enumitem}
\usepackage{xspace}

\usepackage{bm}

\theoremstyle{plain}
\newtheorem{theorem}{Theorem}[section]
\newtheorem{lemma}[theorem]{Lemma}
\newtheorem{proposition}[theorem]{Proposition}
\newtheorem{corollary}[theorem]{Corollary}

\theoremstyle{definition}
\newtheorem{remark}[theorem]{Remark}
\newtheorem{example}[theorem]{Example}
\newtheorem{definition}[theorem]{Definition}

\newcommand{\N}{\mathbb N}
\newcommand{\Z}{\mathbb Z}
\newcommand{\Q}{\mathbb Q}
\newcommand{\R}{\mathbb R}

\newcommand{\cL}{\mathcal{L}}

\newcommand{\cT}{\mathcal{T}}

\newcommand{\id}{\operatorname{id}}

\newcommand{\LPH}{\mathit{L \Pi} \frac{1}{2}}
\newcommand{\RPL}{\mathrm{RPL}}
\newcommand{\RPLb}{\mathrm{PL}}
\newcommand{\RPLM}{\RPL(\odot)}
\newcommand{\RPLMb}{\mathrm{PL}(\odot)}
\newcommand{\PROP}{\mathrm{PROP}}
\newcommand{\VAR}{\mathrm{VAR}}
\newcommand{\imp}{\rightarrow}
\newcommand{\impl}{\imp_{L}}
\newcommand{\impp}{\imp_{P}}

\newcommand{\negl}{\neg_{L}}
\newcommand{\equivl}{\leftrightarrow_{L}}
\newcommand{\half}{\frac{1}{2}}

\newcommand{\ReLU}{\mathrm{ReLU}}

\newcommand{\scale}{\mathrm{scale}}
\newcommand{\lmodel}{V}
\newcommand{\tmodel}{E}

\usepackage{array,multirow}

\newcommand{\defstyle}[1]{\textbf{#1}}

\newcommand{\pand}{\otimes}

\newcommand{\length}{\mathrm{length}}
\newcommand{\subt}{\mathrm{subt}}

\newcommand{\LPHF}{\LPH(\impp^-)}
\newcommand{\LPHFF}{\LPH(\odot^-, \impp^-)}
\newcommand{\coolname}{$\Q[\ReLU, x_i]_{i \in \N}$\xspace}
\newcommand{\coolnameb}{$\R[\ReLU, x_i]_{i \in \N}$\xspace}
\newcommand{\itsum}{\bigodot}

\begin{document}

\maketitle

\begin{abstract}
    Neural networks are a fundamental aspect of modern artificial intelligence, playing a key role in various important machine learning architectures including transformers and graph neural networks. 
    Recently, logical characterisations have been used to study the expressive power of many machine learning architectures, but logical characterisations of plain neural networks have received less attention.
    In this paper, we provide fuzzy logic characterisations of rational-weight $\ReLU$-activated neural networks via
    Rational Pavelka logic ($\RPL$) and an extension of $\RPL$ called $\RPLM_{\leq 1}$, as well as two fragments of $\LPH$ called $\LPHF_{\leq 1}$ and $\LPHFF$.
    The activation values of the neural networks are allowed to be arbitrary real numbers. 
    We also provide fuzzy logic characterisations of a generalised polynomial ring over $\Q$ in countably many variables where the use of the $\ReLU$-function is permitted 
    via the fuzzy logic $\RPLM$ and a fragment of $\LPH$ called $\LPHF$.
\end{abstract}

\section{Introduction}\label{section: introduction}

Neural networks are directed graphs where each vertex (or neuron) obtains an activation value (a real number) either as an input or by aggregating the activation values of its in-neighbours. 
The aggregation scheme we consider is a standard one, i.e., we take a weighted sum of the activation values of in-neighbours, add a bias and apply the activation function 
$\ReLU(x) = \max\{0,x\}$. 
The neural networks we consider are feedforward (i.e., there are no directed cycles in the graph) and fully connected (i.e., the neurons are partitioned into layers, and from each neuron of a layer there is an edge to exactly every neuron of the next layer), and the weights and biases are rational numbers. An illustration of a neural network is given in Figure~\ref{figure: neural network}.

\begin{figure}
    \begin{center}
        \begin{tikzpicture}
            \node[draw, circle] at (0,1) (11) {$i_5$};
            \node[draw, circle] at (0,2) (12) {$i_4$};
            \node[draw, circle] at (0,3) (13) {$i_3$};
            \node[draw, circle] at (0,4) (14) {$i_2$};
            \node[draw, circle] at (0,5) (15) {$i_1$};

            \node[draw, circle] at (5,0) (21) {$h_7$};
            \node[draw, circle] at (5,1) (22) {$h_6$};
            \node[draw, circle] at (5,2) (23) {$h_5$};
            \node[draw, circle] at (5,3) (24) {$h_4$};
            \node[draw, circle] at (5,4) (25) {$h_3$};
            \node[draw, circle] at (5,5) (26) {$h_2$};
            \node[draw, circle] at (5,6) (27) {$h_1$};

            \node[draw, circle] at (10,2) (31) {$o_3$};
            \node[draw, circle] at (10,3) (32) {$o_2$};
            \node[draw, circle] at (10,4) (33) {$o_1$};

            \draw[->] (11) to (21);
            \draw[->] (11) to (22);
            \draw[->] (11) to (23);
            \draw[->] (11) to (24);
            \draw[->] (11) to (25);
            \draw[->] (11) to (26);
            \draw[->] (11) to (27);

            \draw[->] (12) to (21);
            \draw[->] (12) to (22);
            \draw[->] (12) to (23);
            \draw[->] (12) to (24);
            \draw[->] (12) to (25);
            \draw[->] (12) to (26);
            \draw[->] (12) to (27);

            \draw[->] (13) to (21);
            \draw[->] (13) to (22);
            \draw[->] (13) to (23);
            \draw[->] (13) to (24);
            \draw[->] (13) to (25);
            \draw[->] (13) to (26);
            \draw[->] (13) to (27);

            \draw[->] (14) to (21);
            \draw[->] (14) to (22);
            \draw[->] (14) to (23);
            \draw[->] (14) to (24);
            \draw[->] (14) to (25);
            \draw[->] (14) to (26);
            \draw[->] (14) to (27);

            \draw[->] (15) to (21);
            \draw[->] (15) to (22);
            \draw[->] (15) to (23);
            \draw[->] (15) to (24);
            \draw[->] (15) to (25);
            \draw[->] (15) to (26);
            \draw[->] (15) to (27);

            \draw[->] (21) to (31);
            \draw[->] (21) to (32);
            \draw[->] (21) to (33);

            \draw[->] (22) to (31);
            \draw[->] (22) to (32);
            \draw[->] (22) to (33);

            \draw[->] (23) to (31);
            \draw[->] (23) to (32);
            \draw[->] (23) to (33);

            \draw[->] (24) to (31);
            \draw[->] (24) to (32);
            \draw[->] (24) to (33);

            \draw[->] (25) to (31);
            \draw[->] (25) to (32);
            \draw[->] (25) to (33);

            \draw[->] (26) to (31);
            \draw[->] (26) to (32);
            \draw[->] (26) to (33);

            \draw[->] (27) to (31);
            \draw[->] (27) to (32);
            \draw[->] (27) to (33);
            
        \end{tikzpicture}
    \end{center}
    \caption{A visualization of a neural network consisting of three layers: an \emph{input layer} $(i_1, \dots, i_5)$, a \emph{hidden layer} $(h_1, \dots, h_7)$ and an \emph{output layer} $(o_1, o_2, o_3)$.}
    \label{figure: neural network}
\end{figure}

On the formal level, we identify a neuron with its aggregation formula broken down recursively to the 
neurons on the first layer
and we subsequently identify a neural network with the tuple of 
neurons on the last layer.
This makes neurons a special case of polynomials of the polynomial ring $\Q[x_i]_{i \in \N}$ (i.e., the ring of polynomials over rational numbers in countably many variables) extended with the $\ReLU$-function. We refer to these $\ReLU$-polynomials as terms, and we call the collection of these terms \coolname.
Neural networks are a subclass of tuples of terms of \coolname.

We study the expressive power of \coolname and of neural networks via two fuzzy logics that can express rational numbers and multiplication. Fuzzy logics generalise ordinary Boolean logics by allowing the truth value of a formula to be any value in the closed interval $[0,1]$. The first 
fuzzy logic we consider
is the logic $\RPLM$ \cite{Hajek98}, which is an extension of Rational Pavelka Logic ($\RPL$) \cite{hajek1995fuzzy, Pavelka79a, Pavelka79b, Pavelka79c} with the product conjunction~$\odot$, while $\RPL$ extends the classic infinite-valued {\L}ukasiewicz logic \cite{lukasiewicz1930} by adding a truth constant for every rational number between $0$ and $1$.
{\L}ukasiewicz logic is simply a fuzzy version of Boolean propositional logic.
The other fuzzy logic we consider is the logic $\LPH$ \cite{esteva2001} which combines standard {\L}ukasiewicz logic with Product Logic (adding the product conjunction~$\odot$ and the product implication $\impp$) and adds the truth constant $\overline{\half}$. 
The logic $\LPH$ allows the construction of formulae equivalent to rational numbers between $0$ and $1$ by using the product implication $\impp$ which corresponds to division. However, the connective $\impp$ also increases the expressive power of $\LPH$ beyond that of $\RPLM$. We therefore restrict to the fragment $\LPHF$ which limits $\LPH$ by not allowing proposition symbols in the scope of~$\impp$. 

As the values of terms of \coolname and the activation values of neurons are allowed to be arbitrary reals and the truth values of fuzzy logic lie between $0$ and $1$, it is obviously impossible in the general case to find a formula of fuzzy logic whose truth values would be identical to the values of a given term of \coolname. Therefore, we naturally scale the activation values down to the interval $[0,1]$ when translating to the logics. This is the most technical part of the translation, as multiplication behaves differently in the interval $[0,1]$ compared to arbitrary intervals $[-k,k]$. 
No scaling is necessary in the opposite direction.
In our characterisations, we translate tuples of terms to tuples of formulae and vice versa, rather than translating terms to formulae and vice versa.
Our first theorem is as follows:

\begin{restatable*}{theorem}{firsttheorem}\label{theorem: R-algebra = RPLM = LPHF}
    \coolname has the same expressive power (w.r.t. scaling) as $\RPLM$ and $\LPHF$.
\end{restatable*}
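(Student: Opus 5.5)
The plan is to prove the theorem as a cycle of translations, $\Q[\ReLU, x_i]_{i\in\N} \to \RPLM \to \LPHF \to \Q[\ReLU, x_i]_{i\in\N}$. Each arrow will be made precise using the intended notion of ``same expressive power w.r.t.\ scaling''. For the direction from formulae to terms no scaling is needed. For the direction from terms to formulae, a tuple of terms $(t_1,\dots,t_n)$ should correspond to a tuple of formulae whose truth values under a valuation equal $\scale(t_j)$ evaluated at the scaled inputs. Here $\scale$ is an affine map from some interval $[-k,k]$ onto $[0,1]$. The same $k$ must be used for the whole tuple, which is why we translate tuples rather than single terms.

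\textbf{Easy directions.} From $\LPHF$ to terms, I proceed by induction on formulae. Each connective is expressed with $\ReLU$ and polynomial operations, using the following identities on $[0,1]$:
\begin{itemize}
\item proposition symbols map to variables clipped to $[0,1]$, namely $\min(1,\ReLU(x))$, where $\min(a,b)=a-\ReLU(a-b)$;
\item {\L}ukasiewicz implication is $\min(1,1-a+b)$;
\item product conjunction is multiplication;
\item the constant $\overline{\half}$ is the rational $\tfrac12$.
\end{itemize}
Product implication $\impp$ only occurs without proposition symbols in its scope. Such a subformula is therefore a constant: a rational computed from constants, since $\tfrac12$ closes under these operations into $\Q\cap[0,1]$. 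I replace it by that rational.

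From $\RPLM$ to $\LPHF$, I only need every rational truth constant $\overline{r}$. These come from $\overline{\half}$ through closed formulae built with $\impp$ (giving $a/b$), $\odot$ and {\L}ukasiewicz connectives. For instance, $\overline{1/n}$ is obtained from sums and quotients of multiples of $\tfrac12$, and then $\overline{m/n}$ by iterated {\L}ukasiewicz sum.

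\textbf{Main direction: terms to $\RPLM$.} Given terms $t_1,\dots,t_n$, I use the semantics to fix inputs $x_i\in[-1,1]$, with propositions $p_i$ having value $(x_i+1)/2$. I choose a bound $k_s$ for each subterm $s$ by structural recursion, so that $|s|\le k_s$ whenever $|x_i|\le1$; one needs $k_{st}\ge k_sk_t$ and $k_{s+t}\ge k_s+k_t$. I then build a formula $\varphi_s$ with value $(s/k_s+1)/2$, or better, a pair of formulae encoding $\ReLU(s)/k_s$ and $\ReLU(-s)/k_s$. The operations are handled as follows:
\begin{itemize}
\item \emph{Addition and rational scalars} rescale to a common bound using the rational constants and truncated {\L}ukasiewicz sum, which is exact because no truncation occurs after rescaling.
\item \emph{$\ReLU$} is easy on the positive/negative part encoding.
\item \emph{Multiplication} is the obstacle. $\odot$ multiplies values in $[0,1]$, so I write $st = s^+t^+ + s^-t^- - s^+t^- - s^-t^+$ with each part nonnegative. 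Each product $s^\pm t^\pm/(k_sk_t)$ is then exactly $\varphi_{s^\pm}\odot\varphi_{t^\pm}$, after which I recombine with addition, rescaled by $\tfrac14$.
\end{itemize}
Finally I rescale all components of the tuple to one common $k$. The delicate bookkeeping is making the single affine scaling commute with these constructions while keeping everything within the rational constants of $\RPL$.
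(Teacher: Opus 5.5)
Your easy directions are sound. The clipping $\min(1,\ReLU(x))$ is harmless but unnecessary, since term--formula equivalence only involves $\tmodel(x)=\lmodel(p_x)\in[0,1]$. Splitting each subterm into $s^+/k_s$ and $s^-/k_s$ does make multiplication exact with plain $\odot$.

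\textbf{The gap: wrong scaling convention.} Your main direction targets a different notion of scaling from the one in the theorem. Condition~1 of Definition~\ref{definition: logical characterisation} quantifies over \emph{every} $i\in\R_+$. It asks for a single $k\ge i$ such that, whenever $\tmodel(x)\in[-i,i]$ and $\lmodel(p)=\scale_k(\tmodel(x_p))$, we get $\lmodel(\varphi_j)=\scale_k(\tmodel(t_j))$. So the inputs are encoded with the \emph{same} $\scale_k$ as the outputs. You instead fix $i=1$ and let $p$ carry $\scale_1(x)=(x+1)/2$, choosing the output scale $k$ only afterwards. Under the actual definition, $p_x$ only gives you access to $\tmodel(x)/k$. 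Your base case $k_x=1$ is therefore off by a factor of $k$, and $k$ is the very quantity you planned to choose last from the bounds $k_s$. As described, the construction proves a differently normalised statement, not condition~1, and it says nothing about $i>1$.

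\textbf{How to repair it.} This is fixable within your framework, but the fix has to be made explicitly:
\begin{enumerate}
\item Put $m:=\lceil i\rceil$ and run your bound recursion with $k_x:=m$, which is independent of $k$.
\item Take $k$ to be an integer multiple of $m$ with $k\ge\max_j k_{t_j}$; then automatically $k\ge i$.
\item Decode the inputs: $\itsum_{2k/m}\bigl(p_x\ominus\overline{\half}\bigr)$ and $\itsum_{2k/m}\bigl(\overline{\half}\ominus p_x\bigr)$ have values exactly $\max\{0,\tmodel(x)\}/m$ and $\max\{0,-\tmodel(x)\}/m$. There is no truncation because $|\tmodel(x)|\le m$.
\item Encode the output: $\bigl(\overline{\half}\oplus(\varphi_{t^+}\odot\overline{\tfrac{k_t}{2k}})\bigr)\ominus\bigl(\varphi_{t^-}\odot\overline{\tfrac{k_t}{2k}}\bigr)$ has value $\scale_k(\tmodel(t))$.
\end{enumerate}

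\textbf{Comparison with the paper.} The paper avoids the issue by encoding \emph{every} subterm at the single scale $\scale_k$. It chooses $k\ge\max\{j^{2\length(t)},8\}$ so that all subterm values lie in $[-\sqrt{k},\sqrt{k}\,]$, and uses $\oplus'$ and $\odot^k$ (Lemmas~\ref{lemma: addition formula} and~\ref{lemma: multiplication formula}). Inputs then need no decoding. Your sign-split route gives an exact product without the $\odot^k$ computation and its side conditions. The cost is two formulae per subterm, per-subterm scales, and rational rescalings via $\odot\,\overline{q}$.
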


Theorem~\ref{theorem: R-algebra = RPLM = LPHF} also holds when we restrict the degrees of the $\ReLU$-polynomials and apply a corresponding restriction on the logic side. This involves defining fragments $\RPLM_{\leq d}$ and $\LPHF_{\leq d}$ of $\RPLM$ and $\LPHF$, respectively, which limit how proposition symbols are allowed to appear in the scope of the product conjunction $\odot$. We obtain the following result:

\begin{restatable*}{theorem}{optionaltheorem}\label{theorem: coolname with degree bound = RPLM and LPHF with deree bound}
    For each $d \in \N$, \coolname with degree bound $d$ has the same expressive power (w.r.t. scaling) as $\RPLM_{\leq d}$ and $\LPHF_{\leq d}$.
\end{restatable*}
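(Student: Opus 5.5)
I will prove the theorem by refining the translations behind Theorem~\ref{theorem: R-algebra = RPLM = LPHF} and checking that they respect degree. First I fix the notions. The degree of a term of \coolname is defined inductively: variables have degree $1$ and constants degree $0$, and we set $\deg(s+t)=\max$, $\deg(s\cdot t)=\deg s+\deg t$, $\deg(\ReLU(t))=\deg t$. On the logic side, $\RPLM_{\leq d}$ counts, along each branch of the formula tree, how many factors of a $\odot$ contain proposition symbols. Łukasiewicz connectives and $\ReLU$-like truncations do not increase this count, and truth constants (and in $\LPHF$, $\impp$-subformulas without proposition symbols) have degree $0$. The claim is then two inclusions, plus the equivalence of $\RPLM_{\leq d}$ with $\LPHF_{\leq d}$.

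\emph{Logic to terms.} This is routine. The standard semantics of each connective is a $\ReLU$-polynomial: $\neg x = 1-x$, $x\oplus y = 1-\ReLU(1-x-y)$, $x\odot y = xy$, and rational constants are constants. By induction on formulas, a formula of $\odot$-depth $\leq d$ yields a term of degree $\leq d$. For $\LPHF_{\leq d}$, a proposition-free $\impp$-subformula evaluates to a fixed rational, so it becomes a constant. This also gives the mutual translation between $\RPLM_{\leq d}$ and $\LPHF_{\leq d}$, since the rational constants of $\RPL$ are definable in $\LPHF$ by proposition-free formulas (using $\overline{\half}$, $\odot$ and $\impp$) of degree $0$.

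\emph{Terms to logic.} This is the main step. Take the scaling used for Theorem~\ref{theorem: R-algebra = RPLM = LPHF}: for a variable bound $k$, values in $[-k,k]$ are encoded in $[0,1]$ by $x\mapsto \frac{x+k}{2k}$, or by an appropriate power-of-$k$ rescaling for products. I then translate inductively on terms. Affine combinations with rational coefficients and $\ReLU$ go into $\RPL$ without any $\odot$, using truncated sums and rational constants. Multiplication $s\cdot t$ requires a product of the scaled versions of $s$ and $t$. Expanding $(a+c)(b+c')$ produces one genuine $\odot$ between formulas of degrees $\deg s$ and $\deg t$, together with lower-degree corrections that are linear in $s$ and $t$. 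The result therefore has degree $\deg s+\deg t$.

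The obstacle is that the scaling factor of a term depends on its degree: a degree-$e$ term over $[-k,k]$ is bounded by roughly $C k^{e}$. So I scale each term by a constant depending on its degree and size bound, and at each product I reconcile the scales of $s$ and $t$ through multiplication by rational constants. Multiplication by a rational constant must be expressed without raising the degree. Scaling down by $r\in[0,1]$ can be written as $\overline{r}\odot\varphi$, and I must declare that $\odot$ with a constant factor does not count toward the degree; this should match the fragment's definition, which only limits proposition symbols inside the scope of $\odot$. Scaling up by an integer is done with $\oplus$ and is safe only when no truncation occurs, which the chosen scale guarantees.

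Finally, since the theorem concerns tuples, all components are scaled by a common factor. This is the maximum over the components, which causes no degree problems. It is also compatible with the version stated for neural networks, whose neurons have bounded degree only trivially (degree $\leq 1$, corresponding to $\RPLM_{\leq 1}$), which is consistent with the abstract's characterisation.
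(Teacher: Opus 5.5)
Your overall architecture matches the paper's. The paper obtains this theorem by carrying degree clauses through the two compositional translations, Lemmas~\ref{lemma: terms to logic} and \ref{lemma: logic to terms}. Your logic-to-terms direction and your comparison of $\RPLM_{\leq d}$ with $\LPHF_{\leq d}$ are fine, once the fragment index is read correctly.

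That index is not a count ``along each branch'' or an ``$\odot$-depth''. It is additive over \emph{both} factors of $\odot$, exactly like term degree: proposition-free subformulas get $0$, $p$ gets $1$, $\alpha\odot\beta$ gets the sum, and $\impl$ gets the maximum. For instance, $(p\odot q)\odot(r\odot s)$ has only two $\odot$'s on every branch, yet it lies in $\RPLM_{\leq 4}$ and in no smaller fragment. Your translation sends it to a term of degree $4$, so with a depth-type count your claim ``$\odot$-depth $\leq d$ yields degree $\leq d$'' is false. You already use additivity in the product step, so adopt it throughout.

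The genuine gap is in the main step, terms to logic. Your expansion of $(a+c)(b+c')$ is the right idea, but it is never turned into a formula, and the scaling set-up around it does not fit the definition.
\begin{itemize}
\item $(i,k)$-equivalence uses one and the same $\scale_k$ for inputs and output, with variables in $[-i,i]$ and $k\geq i$ free.
\item If, as you write, the variables range over $[-k,k]$ and are encoded by $\scale_k$, then for $k>1$ already $xy$ at $x=y=k$ would need the truth value $\scale_k(k^2)=\frac{1+k}{2}>1$.
\item Your degree-dependent scales $\sim Ck^e$ leave the output at a scale other than the input scale.
\end{itemize}

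The missing idea is to decouple $i$ from $k$: take $k$ so large that every \emph{subterm} value lies in $[-\sqrt{k},\sqrt{k}]$. The paper shows $|\tmodel(t')|\leq j^{\length(t')}$ for $j=\max\{i,r_{\max},2\}$ and takes any integer $k\geq\max\{j^{2\length(t)},8\}$. Then one scale serves all subterms, and a product is returned to scale $k$ at once by $\varphi\odot^k\psi=\itsum_k\bigl((\itsum_2(\varphi\odot\psi)\oplus\overline{\tfrac{1}{2k}})\ominus(\varphi\oplus'\psi)\bigr)$.
\begin{itemize}
\item Lemma~\ref{lemma: multiplication formula} verifies that no truncation occurs; this is where $k\geq 8$ and $|\ell|,|\ell'|\leq\sqrt{k}$ are used. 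That verification is exactly what your appeal to ``the chosen scale'' leaves open.
\item Lemma~\ref{lemma: degrees of special formulae} places $\varphi\odot^k\psi$ in $\cL_{\leq e+e'}$ when $\varphi\in\cL_{\leq e}$ and $\psi\in\cL_{\leq e'}$.
\item Because this works for every integer $k$ above a threshold, taking the largest threshold over the components handles tuples.
\end{itemize}

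There are two smaller slips.
\begin{itemize}
\item $\overline{r}\odot\varphi$ moves the centre from $\half$ to $\frac{r}{2}$, so you must add $\overline{\tfrac{1-r}{2}}$. Alternatively, as the paper does, use $\overline{\scale_k(r)}\odot^k\varphi$.
\item Rational coefficients cannot be realised in $\RPL$ at a fixed scale, because $\RPL$-definable functions are piecewise linear with integer coefficients on the proposition symbols. The paper needs the inflation Lemma~\ref{lemma: neural network inflation} for that, though this is irrelevant for $\RPLM_{\leq d}$.
\end{itemize}
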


When characterising neural networks, we consider on the one hand the fragments $\RPLM_{\leq 1}$ and $\LPHF_{\leq 1}$, which restrict the respective logics $\RPLM$ and $\LPHF$ by demanding that all proposition symbols in the scope of a product conjunction $\odot$ appear on the same side of the connective. On the other hand, we also consider $\RPL$ and $\LPHFF$, where $\LPHFF$ limits both $\odot$ and $\impp$ by not allowing proposition symbols in the scope of either connective. We find that we can translate $\RPLM_{\leq 1}$ and $\LPHF_{\leq 1}$ into neural networks and that we can translate neural networks into $\RPL$ and $\LPHFF$. 
Our third result is then given as follows:

\begin{restatable*}{theorem}{secondtheorem}\label{theorem: neural networks = RPLM1 = LPHF1}
    Neural networks have the same expressive power (w.r.t. scaling) as the logics $\RPLM_{\leq 1}$, $\LPHF_{\leq 1}$, $\RPL$ and $\LPHFF$.
\end{restatable*}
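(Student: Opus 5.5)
The equivalence is established as a cycle of translations: neural networks into $\RPL$, $\RPL$ into $\LPHFF$, $\LPHFF$ into $\LPHF_{\leq 1}$ and $\RPLM_{\leq 1}$, and $\RPLM_{\leq 1}$, $\LPHF_{\leq 1}$ back into neural networks. Some of these steps are cheap. $\RPL$ embeds in $\RPLM_{\leq 1}$ trivially. Rational constants of $\RPL$ are definable in $\LPH$ by closed formulas using $\impp$ and $\overline{\half}$; such formulas contain no proposition symbols, so the embedding lands in $\LPHFF$. Moreover $\LPHFF \subseteq \LPHF_{\leq 1}$ syntactically. In the other direction, a closed subformula in the scope of $\odot$ or $\impp$ has a constant rational truth value. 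Replacing each such subformula by the corresponding rational constant shows $\LPHF_{\leq 1}$ reduces to $\RPLM_{\leq 1}$. It likewise shows $\LPHFF$ reduces to $\RPL$, since $c \odot \varphi$ with constant $c$ is a scalar multiple that can then be expressed in $\RPL$, as explained below. So the real content lies in two directions: neural networks $\to \RPL$, and $\RPLM_{\leq 1} \to$ neural networks.

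\textbf{Logic to networks.} I translate each formula of $\RPLM_{\leq 1}$ by induction on structure into a $\ReLU$-term of degree at most one, i.e.\ a composition of affine maps and $\ReLU$. The connectives are handled as follows.
\begin{itemize}
\item {\L}ukasiewicz implication has value $\min\{1, 1-x+y\} = 1-\ReLU(x-y)$.
\item Strong conjunction is $\ReLU(x+y-1)$.
\item Negation is $1-x$, and constants are biases.
\item In $\varphi\odot\psi$ at most one side contains proposition symbols, so the other side is a rational constant $c$ and the product is the linear term $c\cdot y$.
\end{itemize}
Since input values are scaled into $[0,1]$, this step is straightforward. The remaining work is to turn an arbitrary nested $\ReLU$-affine term into a fully connected layered network. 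For this I pad with identity neurons, using $x=\ReLU(x)-\ReLU(-x)$ to pass values through layers, and use zero weights for missing edges. If the paper earlier established this normal form for degree-one terms of \coolname, I invoke it directly.

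\textbf{Networks to $\RPL$.} This is the main obstacle, because of scaling. Inputs range over $\R$, but the logic only sees values in $[0,1]$. I would use the paper's scaling convention, presumably a fixed linear map from $[-k,k]$ onto $[0,1]$ with the bound $k$ chosen per network. The layers are then translated one at a time, keeping a separate scaling factor per layer. Each neuron computes $\ReLU(\sum_j w_j x_j + b)$. Under scaling, a weighted sum becomes a convex-type combination of scaled values, which is $\RPL$-definable: multiplication by a rational $c\in[0,1]$ is definable in $\RPL$ (e.g.\ via $\overline{c}\odot$ replaced by {\L}ukasiewicz-definable division by integers), and truncated sums $\oplus$ are available. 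Because the layer's scaling factor grows with $\sum|w_j|+|b|$, the sum never saturates. $\ReLU$ is then $\max$ with the scaled image of $0$, which is expressible with the lattice connectives of {\L}ukasiewicz logic.

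The technical core is choosing the per-layer scalings so that no truncation ever occurs, and then rescaling the outputs to the common scale demanded by the definition of expressive power. For the rescaling, only division by positive integers is needed. Division by $n$ is not directly a connective in $\RPL$, but multiplication by the rational constant $1/n$ is definable. This follows from McNaughton-type definability of rational-coefficient piecewise linear functions, of which $x\mapsto x/n$ is an instance. Alternatively, it follows from the result for \coolname with degree bound $1$, Theorem~\ref{theorem: coolname with degree bound = RPLM and LPHF with deree bound}, which I would reuse if the degree-one translation there already lands in $\RPL$ after constant elimination. Combining the directions then gives Theorem~\ref{theorem: neural networks = RPLM1 = LPHF1}.
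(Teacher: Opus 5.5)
Your logic-to-network direction is sound and is essentially the paper's Lemma~\ref{lemma: logic to neurons}: implication becomes $\overline{1}-\ReLU(x-y)$, the closed side of $\odot$ becomes a weight, and missing edges get zero weights. Since all values lie in $[0,1]$, plain $\ReLU$ pass-through neurons already suffice.

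The gap is in networks $\to\RPL$. You claim that multiplication by a rational $c\in[0,1]$, in particular $u\mapsto u/n$, is $\RPL$-definable. This is false. McNaughton's theorem gives \emph{integer} coefficients, and rational truth constants only shift intercepts. A direct induction on formulae shows that every $\RPL$ formula in one proposition symbol has a continuous piecewise affine truth function with finitely many pieces and integer slopes. The rational-coefficient representation you have in mind belongs to Rational {\L}ukasiewicz logic, which has division operators.

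Since $(i,k)$-equivalence applies the same $\scale_k$ to inputs and outputs, slopes are carried over unchanged. So per-layer scales are impossible as well: passing from scale $k_1$ to $k_2>k_1$ is $u\mapsto\half+\frac{k_1}{k_2}(u-\half)$, which has non-integer slope. Your fallback via Theorem~\ref{theorem: coolname with degree bound = RPLM and LPHF with deree bound} fails for the same reason. Lemma~\ref{lemma: terms to logic} produces subformulae $\overline{c}\odot\psi$ with $\psi$ not closed, and constant elimination cannot remove them. Your remark that $c\odot\varphi$ ``can then be expressed in $\RPL$'' is the same error. For $\LPHFF\to\RPL$ it is harmless, since there no proposition symbol occurs under $\odot$.

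The paper avoids division by a different route. Lemma~\ref{lemma: neural network inflation} turns a proto-neuron $t$ into an integer-only $s$ equivalent to $\overline{D}t$. Integer weights are then simulated at one fixed scale by $\itsum'_n$ and $\negl$, and $D$ is absorbed via $\scale_{kD}(D\,\tmodel(t))=\scale_k(\tmodel(t))$.

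Be aware that this identity rescales only the output. $\varphi_s$ is $(i,kD)$-equivalent to $s$ and so expects inputs $\scale_{kD}(\tmodel(x))$. But $(i,k)$-equivalence feeds it $\scale_k(\tmodel(x))=\scale_{kD}(D\,\tmodel(x))$. Hence the step $\lmodel(\varphi_s)=\scale_{kD}(\tmodel(s))$ in the proof of Lemma~\ref{lemma: NN-terms to logic} is unjustified.

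The integer-slope obstruction shows that no argument can close this gap. Take the network $(\ReLU(\overline{\half}x+\overline{0}))$, with $\tmodel(x)\in[0,i]$ and all other variables $0$. An $(i,k)$-equivalent formula $\varphi$ would need $\lmodel(\varphi)=\frac14+\half\lmodel(p_x)$ for all $\lmodel(p_x)\in[\half,\half+\frac{i}{2k}]$. No formula of $\RPL$ achieves this, and neither does any formula of $\LPHFF$ (replace its closed subformulae by rational constants).

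So the statement itself fails for $\RPL$ and $\LPHFF$ once non-integer weights are allowed. The networks-to-$\RPL$/$\LPHFF$ direction is available only for networks computing functions with integer slopes, e.g.\ integer weights. For $\RPLM_{\leq 1}$ and $\LPHF_{\leq 1}$, do not route through $\RPL$. Apply Lemma~\ref{lemma: terms to logic} with degree bound $1$ directly, where $\overline{c}\odot{}$ supplies the non-integer slopes.
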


Note that in the results listed above, the notion of expressive power is tied to scaling; it does \emph{not} mean that we could, e.g., translate each formula $\varphi$ of $\RPLM_{\leq 1}$ into a formula~$\varphi'$ of $\RPL$ such that $\varphi'$ would always receive the same truth value as $\varphi$. However, it \emph{does} follow that we could obtain a formula $\varphi'$ of $\RPL$ that arithmetically simulates the truth values obtained by $\varphi$.
A summary of our fuzzy logic characterisations is given in Table~\ref{table: result summary}.
In Appendix~\ref{appendix: lukasiewicz logic}, we discuss why no analogous characterisation can be given via infinite-valued {\L}ukasiewicz logic.

\begin{table*}
\caption{A summary of our results. The symbol $\equiv$ denotes same expressive power (subject to scaling when translating from left to right).}
\begin{center}
\def\arraystretch{1.5}
\begin{tabular}{|c|}
    \hline
    \coolname $\equiv$ $\RPLM$, $\LPHF$ \\
    \hline
    \coolname with degree bound $d$ $\equiv$ $\RPLM_{\leq d}$, $\LPHF_{\leq d}$ \\
    \hline
    Neural networks $\equiv$ $\RPLM_{\leq 1}$, $\LPHF_{\leq 1}$, $\RPL$, $\LPHFF$ \\
    \hline
\end{tabular}
\end{center}
\label{table: result summary}
\end{table*}

\textbf{Related work}

Past fuzzy logic characterisations of neural networks have been obtained via representation theorems, showing that a given extension of {\L}ukasiewicz logic is capable of expressing exactly all McNaughton functions with appropriate restrictions placed on the coefficients. For example, a well-known result by McNaughton \cite{mcnaughton1951} states that plain {\L}ukasiewicz logic can express exactly all McNaughton functions with integer coefficients, linking {\L}ukasiewicz logic to neural networks that use the truncated identity $f(x) = \min\{1, \max\{0,x\}\}$ (also known as the truncated $\ReLU$) as the activation function. 
Amato, Di Nola and Gerla~\cite{amato2005} gave a fuzzy logic characterisation for rational-parameter neural networks via Rational {\L}ukasiewicz logic via McNaughton functions with rational coefficients. An analogous result was given by Di Nola, Gerla and Leustean in \cite{di2013}, connecting {\L}ukasiewicz logic extended with multiplication with reals and neural networks of the above kind with real weights and biases via a representation theorem relating to McNaughton functions with real coefficients. Unlike the present paper, the activation values of the neural networks in the above characterisations, inputs and outputs included, were in the range $[0,1]$. The activation function was the truncated identity function, whereas the present paper uses the ordinary $\ReLU$-function. Moreover, the above articles did not consider \coolname or any analogous generalisation of neural networks.

There also exists a characterisation of neural networks via fuzzy logic that is not quite a straightforward logical characterisation. Castro and Trillas \cite{castro1998} gave a characterisation of neural networks that use real parameters and apply a squashing function on every layer via linear combinations of squashed transformations of propositions of {\L}ukasiewicz logic.

In contrast to fuzzy logic characterisations, non-fuzzy logical characterisations of neural networks trace back to McCulloch and Pitts \cite{mcculloch1943}, who characterised neural networks with Boolean activation values and threshold activation functions via temporal propositional expressions. Much more recently, Ahvonen, Heiman and Kuusisto \cite{DBLP:conf/csl/AhvonenHK24} gave a logical characterisation of recurrent neural networks operating with floating-point numbers via a recursive propositional logic called Boolean Network Logic (BNL). The same authors also gave an analogous characterisation of feedforward neural networks in \cite{ahvonen2026}. Whereas the activation values, inputs and outputs included, are Boolean in \cite{mcculloch1943} and floating-point numbers in \cite{DBLP:conf/csl/AhvonenHK24, ahvonen2026}, and whereas the truth values in the characterising logics are Boolean, in the current paper all activation values are real numbers, and likewise the truth values of the characterising fuzzy logics are reals from the closed interval $[0,1]$.

Regarding connections between $\LPH$ and real algebra, Marchioni and Montagna \cite{MarchioniM07} gave a constructive translation from terms of the language of ordered fields to terms of $\LPH$-algebra. This bears a close similarity to Lemma~\ref{lemma: terms to logic} of the present paper, where we give a constructive translation from terms of \coolname to formulae of the logic $\LPHF$. As in the current paper, the correspondence between terms and their translations was shown by the semantics of the field of reals and the canonical semantics of $\LPH$-algebras over the interval $[0,1]$. The translation in \cite{MarchioniM07} is uniform, based on a single non-linear function that bijectively maps the open interval $]0,1[$ to $\R$, and it was used to further interpret the universal theory of the reals in the equational theory of $\LPH$-algebras. By contrast, the translation in the current paper is based on a linear function, but it is non-uniform; for each positive integer $k$, we use a separate linear function that maps the interval $[-k,k]$ to the interval $[0,1]$. 
In Lemma~\ref{lemma: logic to terms}, we provide a reverse translation from $\LPHF$ to \coolname, but no reverse translation from terms of $\LPH$-algebra to terms of the language of ordered fields was given in \cite{MarchioniM07}.
Also, \coolname is more expressive than the language of ordered fields as it also includes rational constants and the function $\ReLU$, and likewise, our logic $\LPHF$ is less expressive than the logic $\LPH$ or the corresponding $\LPH$-algebra used in \cite{MarchioniM07}, i.e., we can express a wider variety of terms in a weaker logic.

Recently, there have been many logical characterisations of graph neural networks (GNNs). In \cite{barcelo2020}, Barceló et al. characterised GNNs in restriction to first-order logic (FO) via graded modal logic. They also showed that every property expressible by a formula of two-variable FO with counting is expressible by a GNN with global readout. Recently in \cite{hauke2025}, Hauke and Wa{\l}{\k{e}}ga showed that the reverse does not hold. 
In the wake of \cite{barcelo2020}, there have been numerous other (non-fuzzy) logical characterisations of GNNs \cite{grohe2024, benedikt2024, nunn2024, grau2025, schonherr2025, walega2026, luo2026} and of recurrent GNNs \cite{pflueger2024, ahvonen2024, selin2025}, and the lists sampled here are still growing.

There are various studies extending fuzzy logics with truth constants that generalise the ordinary truth constants $\top$ (``always true'') and $\bot$ (``always false'') used in Boolean logics. While the {\L}ukasiewicz implication $\varphi \impl \psi$ is valid exactly when the truth value of $\psi$ is always at least as great as the truth value of $\varphi$, the inclusion of truth constants $\overline{r}$ for all $r \in S$ for some $S \subseteq [0,1]$ allows the truth degrees of formulae to be studied independently. In particular, it allows formulae $\overline{r} \impl \psi$ which are valid exactly when the truth value of $\psi$ is always at least as great as $r$. This allows generalising the notions of truth and provability of formulae to cover various degrees of truth and provability. Pavelka-style completeness is when these generalised notions of truth and provability coincide, named after Jan Pavelka who showed in \cite{Pavelka79a, Pavelka79b, Pavelka79c} that {\L}ukasiewicz logic extended with a truth constant for each real in $[0,1]$, called Pavelka Logic, enjoys this property. While the number of truth constants makes the language of Pavelka Logic nondenumerable, Hájek showed in 
\cite{hajek1995fuzzy}
that Rational Pavelka Logic, obtained from Pavelka Logic by including truth constants only for rational numbers in $[0,1]$, also enjoys Pavelka-style completeness. In \cite{Hajek98}, Hájek also showed Pavelka-style completeness for the logic $\RPLM$ obtained from Rational Pavelka Logic by adding the product conjunction. Extensions of other fuzzy logics with truth constants have been studied, e.g., \cite{esteva2009, savicky2006, esteva2000, esteva2001, horvcik2004}. So far, truth constants do not appear to have a status other than as a technical means to calibrate degrees of truth., i.e., they do not appear to hold any independent status. By contrast, truth constants play a central role in the present paper, and our results establish a natural connection between truth constants and the weights and biases of neural networks.

\section{Preliminaries}\label{section: Preliminaries}

Let $\N$ denote the set of non-negative integers, $\Z$ the set of integers, $\Q$ the set of rational numbers and $\R$ the set of real numbers. We let $\N_+$, $\Z_+$, $\Q_+$ and $\R_+$ denote the restrictions of $\N$, $\Z$, $\Q$ and $\R$ to positive numbers. Given $a,b \in \R$ such that $a \leq b$, we let $[a,b]$ denote the closed interval $\{\, x \in \R \mid a \leq x \leq b \,\}$. Likewise, we define $]a,b[\,\, := [a,b] \setminus \{a,b\}$
and $[0,\infty[\,\, := \{\, x \in \R \mid x \geq 0 \,\}$.

Let $\PROP = \{\,p_i \mid i \in \N\,\}$ be a countably infinite set of proposition symbols $p_0, p_1, \dots$ and 
let $\VAR = \{\,x_i \mid i \in \N\,\}$ be a countably infinite set of variables $x_0, x_1, \dots$. We may use metasymbols $p, q, r, \dots$ to denote proposition symbols in $\PROP$ and $x, y, z, \dots$ to denote variables in $\VAR$. 
We fix a bijection $b : \PROP \to \VAR$, and for each $p \in \PROP$ we let $x_p$ denote $b(p)$, and for each $x \in \VAR$ we let $p_x$ denote $b^{-1}(x)$.

A \defstyle{valuation} is a function $\lmodel \colon \PROP \to [0,1]$ assigning a value $\lmodel(p) \in [0,1]$ to each proposition symbol $p \in \PROP$. 
Intuitively, $\lmodel(p)$ is the ``degree of truth'' of the proposition symbol $p$.
An \defstyle{evaluation map} is a function $\tmodel \colon \VAR \to \R$ that assigns a real value $\tmodel(x) \in \R$ to each variable $x \in \VAR$.

\subsection{Neural networks and $\ReLU$-polynomials}

In this section, we introduce neural networks and neurons as well as the generalised polynomial ring \coolname that subsumes neurons.

\subsubsection{\coolname}\label{subsection: coolname}

We begin by defining a generalised ring structure that our neural networks will be embedded in.
It is based on the polynomial ring $\Q[x_i]_{i \in \N}$, i.e., the polynomial ring over $\Q$ with countably many variables. We do not require these polynomials to be in a reduced form; for example, we consider both $x(y+z)$ and $xy + xz$ to be valid polynomial expressions. We only make one modification to $\Q[x_i]_{i \in \N}$: we add the function $\ReLU : \R \to [0,\infty[$, $\ReLU(x) = \max\{0,x\}$ as another building block.
We call such modified polynomials \emph{$\ReLU$-polynomials} or \emph{terms}, and the structure consisting of them as \coolname.

More formally, \defstyle{$\ReLU$-polynomials} or \defstyle{terms of \coolname} are constructed recursively as follows: 
\begin{itemize}
    \item For each $r \in \Q$, $\overline{r}$ is a term.
    \item Each $x \in \VAR$ is a term.
    \item If $t$ and $t'$ are terms, then $(t + t')$ and $(t \cdot t')$ are terms.
    \item If $t$ is a term, then $\ReLU(t)$ is a term.
\end{itemize}
When there is no danger of confusion, we omit the parentheses from $(t + t')$ and $(t \cdot t')$.
For brevity, we will usually write $tt'$ to denote $t \cdot t'$. We will also write $-t$ to denote $\overline{-1} \cdot t$ and $t - t'$ to denote $t + (-t')$.

We interpret terms of \coolname over the field of real numbers as follows.
An evaluation map $\tmodel$
fixes a value for each term when $+$, $\cdot$ and $\ReLU$ are given natural interpretations over the field of reals.
\begin{itemize}
    \item If $t = \overline{r}$ for some $r \in \Q$, then $\tmodel(t) = r$.
    \item If $t = t' + t''$ for some terms $t'$ and $t''$, then $\tmodel(t) = \tmodel(t') + \tmodel(t'')$.\footnote{Here and throughout the paper, we abuse notation and use $+$ and $\cdot$ to denote both a symbol in the syntax of \coolname and the standard sum and multiplication over the field of real numbers. We could use different symbols like $\boxplus$ and $\boxdot$ in the syntax to differentiate the syntax and the semantics, but this would be more likely to increase confusion than decrease it. In the case of rationals however, it is useful to make the difference between syntax and semantics explicit, because a rational number can be represented in infinitely many ways, all of which can be covered with a single constant symbol.}
    \item If $t = t' \cdot t''$ for some terms $t'$ and $t''$, then $\tmodel(t) = \tmodel(t')\tmodel(t'')$.
    \item If $t = \ReLU(t')$ for some term $t'$, then $\tmodel(t) = \max\{0, \tmodel(t')\}$.
\end{itemize}
For each $k \in \Z_+$ and terms $t_1, \dots, t_k$, we write $\sum_{i = 1}^{k}t_i$ to denote the term $t_1 + \cdots + t_n$. The omitted parentheses in the sum can be placed in some canonical way, but given the semantics above, $+$ is associative and thus it does not make a difference how the parentheses are placed.

\begin{example}
    Let $t := \ReLU(x + xy) + y$. Now $t$ is a term of \coolname, and given an evaluation map $\tmodel$ such that $\tmodel(x) = 3$ and $\tmodel(y) = 0.5$, we have 
    \[
        \tmodel(t) = \max\{0,3 + 3 \cdot 0.5\} + 0.5 = \max\{0,4.5\} + 0.5 = 4.5 + 0.5 = 5.
    \]
\end{example}

Next, we define 
the degree of terms.
Intuitively, degree relates to the maximum number of times that a variable is multiplied with another variable.

More formally, the \defstyle{degree} of a term $t$, denoted $\deg(t)$, is defined recursively as follows:
\begin{itemize}
    \item If $t = \overline{r}$ for some $r \in \Q$, then $\deg(t) = 0$.
    \item If $t = x$ for some $x \in \VAR$, then $\deg(t) = 1$.
    \item If $t = t' + t''$ for some terms $t'$ and $t''$, then $\deg(t) = \max\{\deg(t'), \deg(t'')\}$.
    \item If $t = t' t''$ for some terms $t'$ and $t''$, then $\deg(t) = \deg(t') + \deg(t'')$.
    \item If $t = \ReLU(t')$ for some term $t'$, then $\deg(t) = \deg(t')$.
\end{itemize}
When we discuss \defstyle{\coolname with degree bound $d \in \N$}, we mean the collection of exactly those terms $t$ of \coolname for which $\deg(t) \leq d$.

\subsubsection{Neural networks}\label{section: Neural networks}

Next, we discuss neural networks. We begin by defining an auxiliary notion of a proto-neuron, which we will use throughout the paper. Then we define neural networks and neurons. Proto-neurons resemble neurons, but offer a more flexible syntax to work with, which is useful when constructing translations.

First, we consider proto-neurons, which are intuitively terms of \coolname of degree at most $1$. Syntactically, \defstyle{proto-neurons} are constructed recursively as follows:
\begin{itemize}
    \item For each $r \in \Q$, $\overline{r}$ is a proto-neuron.
    \item Each $x \in \VAR$ is a proto-neuron.
    \item If $t$ and $t'$ are proto-neurons, then $t+t'$ is a proto-neuron.
    \item If $t$ and $t'$ are proto-neurons such that $\deg(t) + \deg(t') \leq 1$, then $tt'$ is a proto-neuron.
    \item If $t$ is a proto-neuron, then $\ReLU(t)$ is a proto-neuron.
\end{itemize}

\begin{restatable}{lemma}{degreelemma}\label{lemma: proto-neuron degrees}
    For all terms $t$ of \coolname, $t$ is a proto-neuron if and only if $\deg(t) \leq 1$.
\end{restatable}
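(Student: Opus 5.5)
Both directions go by structural induction on terms, following the five clauses of the recursive definitions of terms and of proto-neurons. Since every proto-neuron is syntactically a term of \coolname (each proto-neuron clause is a special case of a term clause), the claim is meaningful for proto-neurons as they are.

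For the forward direction, we induct on the construction of a proto-neuron $t$ and show $\deg(t) \leq 1$.
\begin{itemize}
    \item Constants have degree $0$ and variables have degree $1$.
    \item If $t = t' + t''$ with $t', t''$ proto-neurons, the induction hypothesis gives $\deg(t'), \deg(t'') \leq 1$, so $\deg(t) = \max\{\deg(t'), \deg(t'')\} \leq 1$.
    \item If $t = t't''$, the proto-neuron clause itself requires $\deg(t') + \deg(t'') \leq 1$, and this sum is exactly $\deg(t)$.
    \item If $t = \ReLU(t')$, then $\deg(t) = \deg(t') \leq 1$ by the induction hypothesis.
\end{itemize}

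For the converse, we induct on the construction of a term $t$ of \coolname and show that $\deg(t) \leq 1$ implies $t$ is a proto-neuron.
\begin{itemize}
    \item Constants $\overline{r}$ and variables $x$ are proto-neurons outright.
    \item If $t = t' + t''$ and $\deg(t) \leq 1$, then $\max\{\deg(t'), \deg(t'')\} \leq 1$, so both summands have degree at most $1$. By the induction hypothesis both are proto-neurons, and the sum clause applies.
    \item If $t = \ReLU(t')$, then $\deg(t') = \deg(t) \leq 1$, so $t'$ is a proto-neuron and hence so is $t$.
    \item If $t = t't''$, then $\deg(t') + \deg(t'') = \deg(t) \leq 1$. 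Degrees are non-negative integers (a trivial side induction, since every clause of $\deg$ yields a value in $\N$), so each factor has degree at most $1$. The induction hypothesis makes $t'$ and $t''$ proto-neurons, and because the side condition $\deg(t') + \deg(t'') \leq 1$ holds, the product clause yields that $t't''$ is a proto-neuron.
\end{itemize}

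No step is a genuine obstacle. The one point needing care is the product case of the converse: we must record that degrees are non-negative before concluding that each factor has degree at most $1$, so that the induction hypothesis applies to both $t'$ and $t''$.
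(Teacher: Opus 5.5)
Your proof is correct and is essentially the paper's argument: a structural induction with the same case analysis on constants, variables, sums, products and $\ReLU$. The differences are cosmetic. You split the biconditional into two inductions, the forward one running over the proto-neuron construction, whereas the paper proves both directions at once by induction on terms. You also state explicitly that degrees are non-negative, which the paper's product case uses without comment.
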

\begin{proof}
    We prove the claim by induction over the structure of $t$; the details are in Appendix~\ref{appendix: neural networks}.
\end{proof}

\begin{example}
    The term $3(x+y)$ is a proto-neuron, but the term $(x+3)y$ is not. In general, $t$ is a proto-neuron if 
    it does not involve taking the product of two terms that both contain variables.
\end{example}

Proto-neurons, as the name suggests, resemble the neurons of a neural network. However, neurons only constitute a subclass of proto-neurons. Next, we introduce the concept of a neuron, which is closely related to the architecture of neural networks, which we shall define simultaneously. For an intuitive description of neural networks and neurons, see Section~\ref{section: introduction}.

A \defstyle{rational-parameter $\ReLU$-activated neural network of depth $d \in \N$} is defined recursively as follows. A neural network of depth $0$ is a tuple $(y_1, \dots, y_k)$ where $k \in \Z_+$ and $y_1, \dots, y_k$ are distinct variables in $\VAR$. Next, assume we have defined neural networks of depth $d$. A neural network of depth $d+1$ is a tuple $N = (n_1, \dots, n_k)$ where $k \in \Z_+$ and each $n_j$ is a proto-neuron of the form $\ReLU\left(\sum_{i=1}^{\ell} \overline{w_{i,j}} n_i' + \overline{b_j} \right)$ where $\ell \in \Z_+$, $w_{i,j}, b_j \in \Q$ and $(n_1', \dots, n_{\ell}')$ is a neural network of depth~$d$. Note that $\ell$ and the neurons $n_1', \dots, n_{\ell}'$ are the same regardless of $j$. The components $n_1, \dots, n_k$ of a neural network are called \defstyle{neurons}; in particular, they are the \defstyle{output neurons of $N$}. Each $\overline{w_{i,j}}$ is called the \defstyle{weight} from the neuron $n_i'$ to the neuron $n_j$,\footnote{Technically, there may be multiple such weights if some of the neurons $n_1', \dots, n_{\ell}'$ are identical to each other, since our definition of a neuron simply refers to an aggregation formula, but in such situations we naturally consider the identical neurons to be distinct from each other since they occupy separate positions in the neural network.} and each $\overline{b_j}$ is called the \defstyle{bias} of the neuron $n_j$. 

\begin{example}
    The proto-neuron $-x$ is not a neuron and neither is $\ReLU(\overline{2}x) + \ReLU(\overline{3}y)$, but the following 
    proto-neuron is a neuron:
    \[
        \ReLU\left(\overline{1} \cdot \ReLU(\overline{2}x + \overline{0}y + \overline{0}) + \overline{1} \cdot \ReLU(\overline{0}x + \overline{3}y + \overline{0}) + \overline{0}\right).
    \]
\end{example}

\subsection{Logics}\label{section: logics}
 
In this section, we introduce the variety of fuzzy logics that will be used to characterise \coolname and neural networks. Key roles here are played by Rational Pavelka Logic $\RPL$ as well as another fuzzy logic called $\LPH$. Both of these are extensions of {\L}ukasiewicz logic, which we will define first.
 
\subsubsection{{\L}ukasiewicz logic}

The first logic we define is {\L}ukasiewicz logic, which is syntactically analogous to propositional logic but interpreted differently. While the canonical semantics for propositional logic gives each formula a truth value $0$ (false) or $1$ (true), the canonical semantics for infinite-valued {\L}ukasiewicz logic generalises this, giving each formula a truth value from the interval $[0,1]$ intuitively allowing degrees of truth between true and false. Under this interpretation, certain arithmetic operations like sum and subtraction are definable, though truncated at the end points $0$ and $1$ of the interval.

The \defstyle{$\PROP$-formulae of infinite-valued {\L}ukasiewicz logic} \cite{lukasiewicz1930} (also called {\L}ukasiewicz logic or the {\L}ukasiewicz-Tarski logic) are defined by the following grammar:
\[
    \varphi ::= \overline{0} \,|\, p \,|\, \varphi \impl \varphi,
\]
where $p \in \PROP$. Intuitively, $\overline{0}$ is always false, and $\varphi \impl \psi$ is read ``if $\varphi$, then $\psi$''. Note that similar to terms of \coolname, logic formulae are also terms, only of a different language, but we call them formulae to keep the terminologies distinct from each other.

The semantics of infinite-valued {\L}ukasiewicz logic is defined over MV-algebras \cite{chang1958, mundici2011}. Rather than presenting the full variety of MV-algebras, we present the semantics given by the canonical MV-algebra defined over the real interval $[0,1]$ which satisfies all the axioms of {\L}ukasiewicz logic.
Let $\lmodel : \PROP \to [0,1]$ be a valuation. We extend the function~$\lmodel$ to determine the truth value of any formula of {\L}ukasiewicz logic as follows:
\begin{itemize}
    \item $\lmodel(\overline{0}) := 0$.
    \item $\lmodel(\varphi \impl \psi) := \min\{1, 1 - \lmodel(\varphi) + \lmodel(\psi)\}$ for all formulae $\varphi$ and $\psi$ of {\L}ukasiewicz logic.
\end{itemize}
For the reader more familiar with propositional logic, it may be useful to check how the connective $\impl$ and subsequent connectives behave when $\lmodel(\varphi), \lmodel(\psi) \in \{0,1\}$. In the case of $\impl$, such an examination retrieves the canonical Boolean semantics for implication.

We define the following abbreviated connectives with the intuitive meaning given in parentheses: $\negl$~(negation), $\oplus$~(addition), $\otimes$~(addition with bias $-1$), $\ominus$~(subtraction), $\land$~(minimum), $\lor$~(maximum) and $\equivl$~(equivalence). The connectives $\negl$, $\oplus$ and $\ominus$ will be crucial later on, whereas the others may be of independent interest to the reader. We give a definition for each connective on the left and the resulting truth value of the formula in a valuation $\lmodel$ on the right (the details are left for the reader):
\[
\begin{aligned}
    &\negl \varphi := \varphi \impl \overline{0}, & &\lmodel(\negl \varphi) = 1 - \lmodel(\varphi), \\
    &\varphi \oplus \psi := \negl \varphi \impl \psi, & &\lmodel(\varphi \oplus \psi) = \min\{1, \lmodel(\varphi) + \lmodel(\psi)\}, \\
    &\varphi \pand \psi := \negl(\varphi \impl \negl\psi), & &\lmodel(\varphi \pand \psi) = \max\{0, \lmodel(\varphi) + \lmodel(\psi) - 1\}, \\
    &\varphi \ominus \psi := \varphi \pand \negl \psi, & &\lmodel(\varphi \ominus \psi) = \max\{0, \lmodel(\varphi) - \lmodel(\psi)\}, \\
    &\varphi \land \psi := \varphi \pand (\varphi \impl \psi), & &\lmodel(\varphi \land \psi) = \min\{\lmodel(\varphi), \lmodel(\psi)\}, \\
    &\varphi \lor \psi := (\varphi \impl \psi) \impl \psi, & &\lmodel(\varphi \lor \psi) = \max\{\lmodel(\varphi), \lmodel(\psi)\}, \\
    &\varphi \equivl \psi := (\varphi \impl \psi) \land (\psi \impl \varphi), & &\lmodel(\varphi \equivl \psi) = 1 - |\lmodel(\varphi) - \lmodel(\psi)|.
\end{aligned}
\]
These connectives are naturally definable in the same way in all extensions of {\L}ukasiewicz logic.

\subsubsection{Rational Pavelka Logic}

Next, we define Rational Pavelka Logic ($\RPL$), which extends {\L}ukasiewicz logic by adding a truth constant for each rational number in the interval $[0,1]$. Pavelka logic was originally introduced in \cite{Pavelka79a, Pavelka79b, Pavelka79c}, where a truth constant was included for each real number in the interval $[0,1]$ including for irrational numbers. Later in \cite{hajek1995fuzzy} the rational version of Pavelka Logic was introduced, where truth constants are only included for rational numbers.

The \defstyle{$\PROP$-formulae of Rational Pavelka Logic ($\RPL$)} are defined by the following grammar:
\[
    \varphi ::= \overline{r} \,|\, p \,|\, \varphi \impl \varphi,
\]
where $r \in \Q \cap [0,1]$ and $p \in \PROP$. Intuitively, $\overline{r}$ is a formula that always has the truth value $r$.

The canonical semantics of $\RPL$ over the real interval $[0,1]$ extends the canonical semantics of {\L}ukasiewicz logic as follows. Let $\lmodel : \PROP \to [0,1]$ be a valuation. The semantics for proposition symbols and $\impl$ is defined as for {\L}ukasiewicz logic. For truth constants, we define
\begin{itemize}
    \item $\lmodel(\overline{r}) := r$ for each $r \in \Q \cap [0,1]$.
\end{itemize}

\subsubsection{$\RPLM$}

In \cite{Hajek98}, Hájek introduced the logic $\RPLM$, which extends $\RPL$ with the product conjunction $\odot$.\footnote{In the literature on MV-algebras, the connective $\odot$ commonly refers to the dual operator of $\oplus$ denoted in this paper by $\otimes$. In this work, $\odot$ denotes a primitive connective that is not derived from $\oplus$ or $\impl$.} As the name suggests, the product conjunction allows multiplying the truth value of a formula with the truth value of another formula.

The \defstyle{$\PROP$-formulae of $\RPLM$} are defined by the following grammar:
\[
    \varphi ::= \overline{r} \,|\, p \,|\, \varphi \impl \varphi \,|\, \varphi \odot \varphi,
\]
where $r \in \Q \cap [0,1]$ and $p \in \PROP$. Intuitively, $\odot$ corresponds to multiplication.

The canonical semantics of $\RPLM$ over the real interval $[0,1]$ extends the semantics of $\RPL$ as follows. Let $\lmodel : \PROP \to [0,1]$ be a valuation. 
The semantics for truth constants, proposition symbols and $\impl$ is defined as in $\RPL$. For $\odot$, we define
\begin{itemize}
    \item $\lmodel(\varphi \odot \psi) := \lmodel(\varphi) \lmodel(\psi)$ for all formulae $\varphi$ and $\psi$ of $\RPLM$.
\end{itemize}

\subsubsection{$\LPH$}\label{section: LPH}

The logic $\LPH$ is another extension of {\L}ukasiewicz logic. Rather than adding a truth constant for each rational number in $[0,1]$ like $\RPL$ and $\RPLM$, $\LPH$ only adds one for~$\half$. 
On top of the product conjunction $\odot$, $\LPH$ adds the corresponding residual division connective: the product implication $\impp$.

The \defstyle{$\PROP$-formulae of $\LPH$} are obtained by the following grammar:
\[
    \textstyle{\varphi ::= \overline{0} \,|\, \overline{\half} \,|\, p \,|\, \varphi \impl \varphi \,|\, \varphi \odot \varphi \,|\, \varphi \impp \varphi,}
\]
where $p \in \PROP$. Intuitively, $\impp$ corresponds to division.

The canonical semantics of $\LPH$ over $[0,1]$ is defined as follows. Let $\lmodel : \PROP \to [0,1]$ be a valuation.
The semantics for truth constants, proposition symbols, $\impl$ and $\odot$ is defined as in $\RPLM$. The semantics of $\impp$ is defined as follows. Given formulae $\varphi$ and $\psi$ of $\LPH$, 
\[
\lmodel(\varphi \impp \psi) := \begin{cases}
    1, &\text{ if } \lmodel(\varphi) \leq \lmodel(\psi) \\
    \frac{\lmodel(\psi)}{\lmodel(\varphi)}, &\text{otherwise}.
\end{cases}
\]
Note that $\lmodel(\varphi) = 0$ implies $\lmodel(\varphi) \leq \lmodel(\psi)$. Thus, the denominator of $\frac{\lmodel(\psi)}{\lmodel(\varphi)}$ never goes to zero.

It is shown in \cite{esteva2001} that for each $r \in \Q \cap [0,1]$, we can construct a formula $\underline{r}$ of $\LPH$ without proposition symbols such that $\lmodel(\underline{r}) = r$ for each valuation $\lmodel$. We show an alternative construction in Appendix~\ref{appendix: defining rationals}.

\subsubsection{Fragments of $\RPLM$ and $\LPH$}\label{section: fragments}

In this section, we define various syntactic fragments of $\LPH$ and $\RPLM$. We first establish two fragments of $\LPH$ that place limitations on the use of the connectives $\odot$ and $\impp$. Then, we define a hierarchy of fragments of $\LPH$ and $\RPLM$, starting with fragments containing no proposition symbols and fragment-by-fragment allowing more proposition symbols nested within product conjunctions $\odot$.

We start with an auxiliary fragment of $\LPH$ that consists of the formulae of $\LPH$ that do not contain proposition symbols. This means that the truth values of formulae do not depend on the valuation. More formally, the \defstyle{formulae of $\LPH_{\leq 0}$} are constructed according to the following grammar:
\[
    \textstyle{\varphi ::= \overline{0} \,|\, \overline{\half} \,|\, \varphi \impl \varphi \,|\, \varphi \odot \varphi \,|\, \varphi \impp \varphi.}
\]

Next, we define a fragment of $\LPH$ where proposition symbols do not appear in the scope of the connectives $\odot$ and $\impp$. This means the truth values of multiplications and divisions do not depend on the valuation. The \defstyle{$\PROP$-formulae of $\LPHFF$} are constructed according to the grammar:
\[
    \textstyle{\varphi ::= \psi \,|\, p \,|\, \varphi \impl \varphi,}
\]
where $\psi$ is a formula of $\LPH_{\leq 0}$ and $p \in \PROP$.

Analogously, we define a fragment of $\LPH$ that only restricts the use of $\impp$, i.e., only the truth values of divisions do not depend on the valuation. The \defstyle{$\PROP$-formulae of $\LPHF$} are constructed according to the following grammar:
\[
    \varphi ::= \psi \,|\, p \,|\, \varphi \impl \varphi \,|\, \varphi \odot \varphi,
\]
where $\psi$ is a formula of $\LPH_{\leq 0}$ and $p \in \PROP$.

Next, we define a hierarchy of syntactic fragments of both $\RPLM$ and $\LPHF$.
First, we define the fragments $\RPLM_{\leq 0}$ and $\LPHF_{\leq 0}$ which do not contain proposition symbols. In the case of $\LPHF$, the \defstyle{formulae of $\LPHF_{\leq 0}$} are simply those of the logic $\LPH_{\leq 0}$ defined above. The \defstyle{formulae of $\RPLM_{\leq 0}$} are obtained according to the following grammar:
\[
    \varphi ::= \overline{r} \,|\, \varphi \impl \varphi \,|\, \varphi \odot \varphi,
\]
where $r \in \Q \cap [0,1]$.

Next, let $\cL$ be either $\RPLM$ or $\LPHF$ and assume we have defined the fragment $\cL_{\leq n}$ where $n \in \N$. The \defstyle{$\PROP$-formulae of $\cL_{\leq n+1}$} are obtained according to the following grammar:
\[
    \varphi ::= p \,|\, \psi \,|\, \alpha \odot \beta \,|\, \varphi \impl \varphi,
\]
where $p \in \PROP$, $\psi$ is a formula of $\cL_{\leq n}$, and $\alpha$ and $\beta$ are formulae of $\cL_{\leq i}$ and $\cL_{\leq j}$ respectively for any $i,j \in \N$ such that $i + j \leq n + 1$. If $n \geq 1$, then the atomic proposition symbols $p$ could be omitted from the grammar, as they would be implicitly included with~$\psi$.
The fragments $\cL_{\leq 0}$, $\cL_{\leq 1}$, $\dots$ form a hierarchy where each fragment is subsumed by the next.

\begin{lemma}\label{lemma:degrees of formulae}
    Let $\cL$ be $\RPLM$ or $\LPHF$. Let $i,j \in \N$, let $\varphi$ be a formula of $\cL_{\leq i}$ and let $\psi$ be a formula of $\cL_{\leq j}$. Then the following hold.
    \begin{itemize}
        \item For each $r \in \Q \cap [0,1]$, $\overline{r}$ is a formula of $\RPLM_{\leq 0}$ and $\underline{r}$ is a formula of $\LPHF_{\leq 0}$.
        \item Each $p \in \PROP$ is a formula of $\cL_{\leq 1}$.
        \item $\negl \varphi$ is a formula of $\cL_{\leq i}$.
        \item $\varphi \impl \psi$, $\varphi \oplus \psi$, $\varphi \otimes \psi$ and $\varphi \ominus \psi$ are formulae of $\cL_{\leq \max\{i, j\}}$.
        \item $\varphi \odot \psi$ is a formula of $\cL_{\leq i + j}$.
    \end{itemize}
\end{lemma}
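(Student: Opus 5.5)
The plan is to verify each item directly from the grammars of the fragments, using one auxiliary fact first: \emph{monotonicity of the hierarchy}. Every formula of $\cL_{\leq n}$ is a formula of $\cL_{\leq m}$ whenever $n \leq m$. This holds by construction, since the grammar for $\cL_{\leq n+1}$ includes every $\psi$ of $\cL_{\leq n}$ as a production. A trivial induction on $m - n$ then gives the inclusion; the paper notes this when it says the fragments form a hierarchy. With monotonicity in hand, I will prove the five items in order.

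For the first item, $\overline{r}$ is an atom of the $\RPLM_{\leq 0}$ grammar. For $\underline{r}$, I will use that the construction of $\underline{r}$ in \cite{esteva2001} (or in Appendix~\ref{appendix: defining rationals}) contains no proposition symbols. It therefore lies in $\LPH_{\leq 0}$, which is by definition $\LPHF_{\leq 0}$. If needed, I would check that the construction uses only $\overline{0}$, $\overline{\half}$, $\impl$, $\odot$ and $\impp$. For the second item, $p$ is an explicit production of $\cL_{\leq 1}$ (the case $n=0$).

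For $\impl$, I set $k = \max\{i,j\}$ and lift $\varphi,\psi$ into $\cL_{\leq k}$ by monotonicity. The closure of $\cL_{\leq k}$ under $\impl$ then needs a short case split. If $k \geq 1$, closure is a production of the grammar for $\cL_{\leq k}$. If $k = 0$, the $\RPLM_{\leq 0}$ grammar contains $\impl$ directly, and $\LPH_{\leq 0}$ does as well. Negation is $\varphi \impl \overline{0}$. Here $\overline{0}$ lies in $\cL_{\leq 0}$: for $\RPLM$ it is $\overline{r}$ with $r=0$, and for $\LPHF$ it is the atom $\overline{0}$ of $\LPH_{\leq 0}$. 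Hence $\negl\varphi \in \cL_{\leq \max\{i,0\}} = \cL_{\leq i}$. Then $\oplus$, $\otimes$ and $\ominus$ are compositions of $\negl$ and $\impl$ per their definitions, so repeated application of the previous two facts keeps the degree at $\max\{i,j\}$.

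The only mildly delicate case is $\odot$. If $i+j = 0$, then both $\varphi$ and $\psi$ are in $\cL_{\leq 0}$, whose grammar (for $\RPLM_{\leq 0}$ and $\LPH_{\leq 0}$) is closed under $\odot$. If $i+j = n+1 \geq 1$, the production $\alpha \odot \beta$ of $\cL_{\leq n+1}$ applies directly with $\alpha = \varphi \in \cL_{\leq i}$ and $\beta = \psi \in \cL_{\leq j}$. The main thing to watch is that no hidden case mismatches, i.e. that the $k=0$ base grammars really contain $\impl$ and $\odot$; I expect no real obstacle.
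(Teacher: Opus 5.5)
Your proof is correct and follows the paper's own argument. The items for $\overline{r}$, $p$, $\impl$ and $\odot$ come straight from the grammars of the fragments, $\underline{r}$ lies in $\LPHF_{\leq 0}$ because it contains no proposition symbols, and $\negl$, $\oplus$, $\otimes$, $\ominus$ reduce to $\impl$ and $\overline{0}$; you just make explicit the monotonicity of the hierarchy and the base-case closure of $\RPLM_{\leq 0}$ and $\LPH_{\leq 0}$ under $\impl$ and $\odot$, which the paper leaves implicit.
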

\begin{proof}
    The cases for $\overline{r}$, $p$, $\varphi \impl \psi$ and $\varphi \odot \psi$ follow from the definition of the fragments $\cL_{\leq n}$. The formula $\underline{r}$ is built without proposition symbols and thus a formula of $\LPHF_{\leq 0}$. The cases for $\negl \varphi$, $\varphi \oplus \psi$, $\varphi \otimes \psi$ and $\varphi \ominus \psi$ are then easy to check by the definitions of the abbreviated connectives $\negl$, $\oplus$, $\otimes$ and $\ominus$, which are constructed using only the truth constant $\overline{0}$ and the implication $\impl$.
\end{proof}

\section{Equivalence}\label{section: Equivalence}

In this section, we define various notions of equivalence between terms of \coolname and/or formulae of logics. 
The most basic notions are centred around formulae and/or terms receiving identical values with matching valuations and evaluation maps. A more involved notion of equivalence involves scaling the values of real intervals down to the interval $[0,1]$, since unlike terms of \coolname, the truth values of formulae of $\RPLM$ and $\LPH$ are always in $[0,1]$. 

\subsection{Non-scaled equivalence}

In this section, we consider notions of equivalence based on receiving identical values with identical inputs. We start by defining equivalence between terms.

\begin{definition}
    We say that terms $t$ and $t'$ of \coolname are \defstyle{equivalent} if we have $\tmodel(t) = \tmodel(t')$ for each evaluation map $\tmodel$.
    For two tuples of terms $T = (t_1, \dots, t_k)$ and $T' = (t_1', \dots, t_k')$ of equal length, we say that $T$ and $T'$ are \defstyle{equivalent} if the terms $t_i$ and~$t_i'$ are equivalent for each $i \in \{1, \dots, k\}$.
\end{definition}

\begin{example}
    The terms $x+y$ and $\overline{\half} x + \overline{\half} x + \overline{2}y - y$ are equivalent. The terms $x$ and $\ReLU(x)$ are not equivalent: if $\tmodel$ is an evaluation map such that $\tmodel(x) < 0$, then we get $\tmodel(\ReLU(x)) = \max\{0,\tmodel(x)\} = 0 \neq \tmodel(x)$.
\end{example}

Next, we establish equivalence between formulae of logics.

\begin{definition}
    We say that two formulae $\varphi$ and $\psi$ of $\RPLM$ or $\LPH$ are \defstyle{equivalent} if $\lmodel(\varphi) = \lmodel(\psi)$ for each valuation~$\lmodel$.
\end{definition}

\begin{example}
    The formulae $\overline{\frac{1}{4}}$ and $\overline{\half} \odot \overline{\half}$ are equivalent. So are the formulae $\varphi$ and $\negl\negl\varphi$.
\end{example}

Lastly, we consider normal equivalence between terms of \coolname and formulae of logics.

\begin{definition}
    We say that a term $t$ of \coolname and a formula $\varphi$ of $\RPLM$ or $\LPH$ are \defstyle{equivalent} if 
    given any evaluation map $\tmodel$ and valuation $\lmodel$ such that $\tmodel(x) = \lmodel(p_x)$ for all $x \in \VAR$: 
    \[
        \tmodel(t) = \lmodel(\varphi).
    \]
\end{definition}

\subsection{Connecting fragments of $\RPLM$ and $\LPH$}

In this section, we establish a non-scaled notion of expressive power between logics and use it to match fragments of $\RPLM$ and $\LPH$.

\begin{definition}
    We say that two logics $\cL$ and $\cL'$ \defstyle{have the same expressive power} if for each formula of $\cL$ there is an equivalent formula of $\cL'$ and vice versa.
\end{definition}

We next establish natural connections between fragments of $\RPLM$ and $\LPH$.

\begin{proposition}\label{proposition: RPLM_d = LPHF_d}
    For all $n \in \N$, the logics $\RPLM_{\leq n}$ and $\LPHF_{\leq n}$ have the same expressive~power.
\end{proposition}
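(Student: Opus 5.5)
We prove, by induction on $n$, that every formula of $\RPLM_{\leq n}$ has an equivalent formula in $\LPHF_{\leq n}$, and conversely. Both translations are defined by structural recursion on formulae. They are the identity on proposition symbols and commute with $\impl$ and $\odot$. Only the constants need special treatment.

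\emph{From $\RPLM_{\leq n}$ to $\LPHF_{\leq n}$.} We replace every truth constant $\overline{r}$, for $r \in \Q \cap [0,1]$, by the formula $\underline{r}$ of $\LPH$ from \cite{esteva2001} (or Appendix~\ref{appendix: defining rationals}). This formula has no proposition symbols, so by Lemma~\ref{lemma:degrees of formulae} it lies in $\LPHF_{\leq 0}$, and $\lmodel(\underline{r}) = r$ for every valuation. Equivalence then follows by a routine induction, since the semantics of $p$, $\impl$ and $\odot$ coincide in the two logics.

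To show that the translation stays inside $\LPHF_{\leq n}$, we proceed by induction on $n$, following the grammar of $\cL_{\leq n+1}$:
\begin{itemize}
\item a symbol $p$ maps to itself;
\item a subformula $\psi \in \RPLM_{\leq n}$ maps into $\LPHF_{\leq n} \subseteq \LPHF_{\leq n+1}$ by the induction hypothesis;
\item for $\alpha \odot \beta$ with $\alpha \in \RPLM_{\leq i}$, $\beta \in \RPLM_{\leq j}$ and $i+j \leq n+1$, we have $i, j \leq n+1$. If $i, j \leq n$, the induction hypothesis applies. If one of them is $n+1$, the other is $0$, and we need a separate argument: an inner induction on formula size within level $n+1$.
\end{itemize}
To avoid this awkwardness, the cleanest route is to prove a single statement by induction on the structure of the formula: \emph{if $\varphi \in \RPLM_{\leq m}$, then $\varphi^* \in \LPHF_{\leq m}$}. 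Every clause of the grammar then maps to the same clause of the target grammar. For $\impl$ we use Lemma~\ref{lemma:degrees of formulae}, and for $\odot$ we use $\cL_{\leq i}\odot\cL_{\leq j} \subseteq \cL_{\leq i+j}$.

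\emph{From $\LPHF_{\leq n}$ to $\RPLM_{\leq n}$.} Again $p$, $\impl$ and $\odot$ are translated homomorphically, and the same structural induction gives the fragment bound. The remaining case is a maximal subformula $\psi$ belonging to $\LPH_{\leq 0}$. Such a formula may contain $\impp$, which $\RPLM$ lacks. However, $\psi$ contains no proposition symbols, so its value is a fixed number $c$, independent of the valuation. We replace $\psi$ by $\overline{c}$, which is a formula of $\RPLM_{\leq 0}$ provided $c \in \Q \cap [0,1]$.

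\emph{Main obstacle.} The crux is therefore to show that every formula of $\LPH_{\leq 0}$ has a rational value. We prove this by induction on the formula:
\begin{itemize}
\item $0$ and $\half$ are rational;
\item if $a, b \in \Q \cap [0,1]$, then $\min\{1, 1-a+b\}$ and $ab$ are rational and lie in $[0,1]$;
\item the value of $\impp$ is either $1$ or $b/a$ with $a > b \geq 0$, hence $a \neq 0$, so it is rational and lies in $[0,1]$.
\end{itemize}
With this in hand, both translations preserve values for every valuation and respect the level $n$, which proves the proposition.
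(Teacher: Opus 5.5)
Your proof is correct and takes essentially the same approach as the paper. Like the paper, you keep $p$, $\impl$ and $\odot$ unchanged and exchange proposition-free parts for truth constants of the other logic, using that every $\LPH_{\leq 0}$ formula has a constant rational value (the paper calls this easy to see; you verify it explicitly, including the $\impp$ case). Your simultaneous structural induction over all levels is a more careful version of the paper's ``trivial'' induction step on $n$: it correctly covers the case $\alpha \odot \beta$ with $i = n+1$, $j = 0$, which a bare induction on $n$ would not reach directly.
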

\begin{proof}
    We show the claim by induction over $n$. The interesting part is the base case where $n = 0$, as the induction step is then trivial by the identical way in which the fragments $\RPLM_{\leq n}$ and $\LPHF_{\leq n}$ are constructed from $\RPLM_{\leq 0}$ and $\LPHF_{\leq 0}$, respectively. First, note that for each formula $\varphi$ of either $\RPLM_{\leq 0}$ or $\LPHF_{\leq 0}$, we have $\lmodel(\varphi) = \lmodel'(\varphi) = r \in \Q \cap [0,1]$ for all valuations $\lmodel$ and $\lmodel'$; this is easy to see because $\varphi$ contains no proposition symbols, which guarantees that the truth value of $\varphi$ is a constant and indeed a rational number. Now, by the semantics of $\underline{r}$ and Lemma~\ref{lemma:degrees of formulae}, if $\varphi$ is a formula of $\RPLM_{\leq 0}$, then $\underline{r}$ is an equivalent formula of $\LPHF_{\leq 0}$. Likewise, if $\varphi$ is a formula of $\LPHF_{\leq 0}$, then $\overline{r}$ is an equivalent formula of $\RPLM_{\leq 0}$.
\end{proof}

By similar reasoning, we also obtain the following two propositions.

\begin{proposition}\label{proposition: RPLM = LPHF}
    $\RPLM$ and $\LPHF$ have the same expressive power.
\end{proposition}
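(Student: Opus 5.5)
We argue by structural induction, translating formulae in both directions while leaving proposition symbols and the connectives $\impl$ and $\odot$ untouched; only the variable-free building blocks need to be replaced. The key observation, already used in the proof of Proposition~\ref{proposition: RPLM_d = LPHF_d}, is that every formula of $\RPLM_{\leq 0}$ and every formula of $\LPH_{\leq 0}$ has a constant truth value, which is a rational number in $[0,1]$. For $\LPH_{\leq 0}$ this holds because $\overline{0}$ and $\overline{\half}$ are rational, and rationals in $[0,1]$ are closed under the operations $\min\{1,1-a+b\}$, $ab$, and $b/a$ (the last taken when $a>b$, and otherwise giving $1$). This closure is checked by a trivial induction.

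\emph{From $\RPLM$ to $\LPHF$.} Define a map $\varphi \mapsto \varphi^*$ by $\overline{r}^* := \underline{r}$, $p^* := p$, $(\varphi \impl \psi)^* := \varphi^* \impl \psi^*$, and $(\varphi \odot \psi)^* := \varphi^* \odot \psi^*$. By Lemma~\ref{lemma:degrees of formulae}, $\underline{r}$ is a formula of $\LPH_{\leq 0}$, so $\varphi^*$ is generated by the grammar of $\LPHF$. Since $\lmodel(\underline{r}) = r$ for every valuation $\lmodel$, and the semantics of $\impl$ and $\odot$ coincide in the two logics, a straightforward induction gives $\lmodel(\varphi^*) = \lmodel(\varphi)$ for all $\lmodel$.

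\emph{From $\LPHF$ to $\RPLM$.} Define $\varphi \mapsto \varphi^\dagger$ on the grammar of $\LPHF$. If $\varphi = \psi$ is a formula of $\LPH_{\leq 0}$ with constant value $r$, set $\psi^\dagger := \overline{r}$, which is legal in $\RPLM$ because $r \in \Q \cap [0,1]$ by the observation above. Otherwise set $p^\dagger := p$ and commute with $\impl$ and $\odot$. Induction again gives equivalence. The only subtlety is that the grammar of $\LPHF$ is ambiguous: a proposition-free formula could be parsed either as $\psi$ or through the $\impl$/$\odot$ clauses. This is harmless, since every parse yields an equivalent result. Alternatively, we fix the convention that maximal proposition-free subformulae are always treated via the $\psi$ clause.

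The main (minor) obstacle is therefore just the rationality claim for $\LPH_{\leq 0}$, in particular that division via $\impp$ keeps values rational and inside $[0,1]$. This follows immediately from the semantics of $\impp$, since the quotient $b/a$ is only taken when $a > b \geq 0$. Everything else is routine, which justifies the paper's remark that the argument is the same as that of Proposition~\ref{proposition: RPLM_d = LPHF_d}.
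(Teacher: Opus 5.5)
Your proposal is correct and follows essentially the paper's route. The paper obtains this ``by similar reasoning'' to Proposition~\ref{proposition: RPLM_d = LPHF_d}: proposition-free subformulae have constant rational truth values in $[0,1]$, so they are swapped between $\overline{r}$ and $\underline{r}$, and the rest of the formula is translated homomorphically. Your closure argument for $\LPH_{\leq 0}$ (including the $\impp$ case) and your remark on the parsing ambiguity of the $\LPHF$ grammar fill in details that the paper leaves implicit.
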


\begin{proposition}\label{proposition: RPL = LPHFF}
    $\RPL$ and $\LPHFF$ have the same expressive power.
\end{proposition}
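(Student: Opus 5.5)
The plan is to copy the proof of Proposition~\ref{proposition: RPLM_d = LPHF_d}. I would translate formulae by structural induction, replacing each proposition-free subformula by the matching constant of the other logic.

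\textbf{Step 1: closed formulae are rational constants.} First I would show that every formula of $\LPH_{\leq 0}$, and every formula of $\RPLM_{\leq 0}$, has a constant truth value $r \in \Q \cap [0,1]$. The proof is an induction over the formula. The operations $\min\{1, 1 - a + b\}$, $ab$, and $\impp$ (which gives $1$ or $b/a$ with $a > 0$) all map rationals in $[0,1]$ to rationals in $[0,1]$. This is the fact already used in Proposition~\ref{proposition: RPLM_d = LPHF_d}.

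\textbf{Step 2: $\RPL$ versus $\LPHFF$.} From $\RPL$ to $\LPHFF$, I induct on the grammar $\varphi ::= \overline{r} \mid p \mid \varphi \impl \varphi$:
\begin{itemize}
\item replace each $\overline{r}$ by $\underline{r}$, which is a formula of $\LPH_{\leq 0}$ and hence an allowed $\psi$ in the grammar of $\LPHFF$;
\item keep each $p$;
\item translate $\varphi \impl \varphi'$ componentwise.
\end{itemize}
Equivalence follows because $\lmodel(\underline{r}) = r$ and the semantics is compositional. For the converse I induct on the grammar $\varphi ::= \psi \mid p \mid \varphi \impl \varphi$ of $\LPHFF$. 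Each $\psi \in \LPH_{\leq 0}$ has constant value $r$ by Step 1, so I replace it by $\overline{r}$; propositions and $\impl$ are handled as before.

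\textbf{Step 3: $\RPLM$ versus $\LPHF$.} The same argument works with the extra clause $\varphi \odot \varphi'$, again translated componentwise, since $\odot$ is present in both logics with the same semantics. The $\psi$-clause of $\LPHF$ is handled through Step 1. In the direction from $\RPLM$ to $\LPHF$, each $\overline{r}$ becomes $\underline{r}$, which is admissible as a $\psi$.

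\textbf{Main obstacle.} The only point needing care is Step 1 for $\impp$, that is, making sure the value stays rational and in $[0,1]$. This holds because division is applied only when $0 < a$ and $b < a$, so $0 \le b/a < 1$ and the quotient of rationals is rational. Everything else is routine compositional bookkeeping.
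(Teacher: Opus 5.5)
Your proposal is correct and is essentially the paper's argument: the paper obtains this proposition ``by similar reasoning'' to Proposition~\ref{proposition: RPLM_d = LPHF_d}. That reasoning is that proposition-free formulae take a constant value in $\Q \cap [0,1]$, so one exchanges each $\overline{r}$ for $\underline{r}$ and each $\LPH_{\leq 0}$-subformula for the matching $\overline{r}$, keeping $p$ and $\impl$ fixed, which is exactly your structural induction. Your Step~1 usefully spells out the rationality check for $\impp$, which the paper leaves implicit. Your Step~3 is not needed here, since it is the separate Proposition~\ref{proposition: RPLM = LPHF}.
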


\subsection{Scaled equivalence}

In this section we define scaled notions of equivalence. First, we define a necessary scaling function.
\begin{definition}
    Let $k \in \R_+$. We define the bijection $\scale_k : \R \to \R$, $\scale_k(x) = \frac{k + x}{2k}$. 
\end{definition}
Clearly, $\scale_k$ scales the interval $[-k,k]$ to the interval $[0,1]$, as we have $\scale_k(-k) = 0$, $\scale_k(0) = \half$ and $\scale_k(k) = 1$. 
The idea behind the following notion of equivalence is that the truth values of formulae being restricted to $[0,1]$, a formula can simulate a term  
when both the inputs and outputs are scaled down using $\scale_k$. This naturally means that the term can only gain values within the interval $[-k,k]$, and to accomplish this, the input interval also naturally needs to be restricted, as otherwise a term such as $x$ could obtain arbitrarily great values.

\begin{definition}
    Let $i,k \in \R_+$ such that $i \leq k$. Let $\varphi$ be a formula of $\RPLM$ or $\LPH$ and let $t$ be a term of \coolname. We say that $\varphi$ is \defstyle{$(i,k)$-equivalent} to $t$ if 
    given any evaluation map $\tmodel$ such that $\tmodel(x) \in [-i,i]$ for each $x \in \VAR$ and valuation
    $\lmodel$ such that $\lmodel(p) = \scale_k(\tmodel(x_p))$ for all $p \in \PROP$, we have
    \[
        \lmodel(\varphi) = \scale_k(\tmodel(t)).
    \]
\end{definition}

Lastly, we use the scaling function $\scale_k$ to define an alternative notion of equivalence between formulae of logics.
This equivalence notion applies $\scale_k$ between two logic formulae. Intuitively, this means that a formula simulates another formula, but does so in the interval $\left[\half, \frac{k+1}{2k}\right]$.

    \begin{definition}
    Let $k \in \R_+$, let $\varphi$ be a formula of $\RPLM$ or $\LPH$ and likewise for~$\psi$. We say that $\psi$ is \defstyle{$k$-equivalent} to $\varphi$ if
    given any two valuations $\lmodel$ and $\lmodel'$ such that $\lmodel'(p) = \scale_k(\lmodel(p))$ for each $p \in \PROP$, we have
    \[
        \lmodel'(\psi) = \scale_k(\lmodel(\varphi)).
    \]
    \end{definition}

We establish a useful lemma, connecting the above notion of equivalence to scaled and non-scaled equivalence between terms and formulae.

\begin{lemma}\label{lemma: transitive equivalence}
    Let $i,k \in \R_+$ such that $1 \leq i \leq k$, let $\varphi$ and $\psi$ each be a formula of $\RPLM$ or $\LPH$ and let $t$ be a term of \coolname. If $\varphi$ and $t$ are equivalent and $\psi$ is $(i,k)$-equivalent to $t$, then $\psi$ is $k$-equivalent to $\varphi$.
\end{lemma}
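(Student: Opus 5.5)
This is a direct unfolding of the definitions, and I would do it as a single chase. Fix valuations $\lmodel$ and $\lmodel'$ with $\lmodel'(p) = \scale_k(\lmodel(p))$ for every $p \in \PROP$. The goal is to show $\lmodel'(\psi) = \scale_k(\lmodel(\varphi))$.

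First I would introduce an evaluation map $\tmodel$ that links the two valuations. Define $\tmodel(x) := \lmodel(p_x)$ for each $x \in \VAR$. Since $\lmodel$ takes values in $[0,1]$ and $1 \leq i$, we get $\tmodel(x) \in [0,1] \subseteq [-i,i]$. This is exactly where the hypothesis $1 \leq i$ is used, and it is the only point needing any care.

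Next I would check the hypotheses of the two given equivalences.
\begin{itemize}
\item For the plain equivalence of $t$ and $\varphi$, the condition $\tmodel(x) = \lmodel(p_x)$ holds by construction. Hence $\tmodel(t) = \lmodel(\varphi)$.
\item For the $(i,k)$-equivalence of $\psi$ and $t$, we need $\lmodel'(p) = \scale_k(\tmodel(x_p))$ for all $p$. Since $x_p = b(p)$ and $p_{x_p} = p$, we have $\tmodel(x_p) = \lmodel(p)$. So the required identity is just the assumption $\lmodel'(p) = \scale_k(\lmodel(p))$. Together with $\tmodel(x) \in [-i,i]$, this gives $\lmodel'(\psi) = \scale_k(\tmodel(t))$.
\end{itemize}

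Combining the two, $\lmodel'(\psi) = \scale_k(\tmodel(t)) = \scale_k(\lmodel(\varphi))$. Since $\lmodel$ and $\lmodel'$ were arbitrary, $\psi$ is $k$-equivalent to $\varphi$.

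I expect no real obstacle. The only things to watch are that the bijection $b$ makes the index bookkeeping between $p$ and $x_p$ consistent, and that $[0,1] \subseteq [-i,i]$ requires $i \geq 1$.
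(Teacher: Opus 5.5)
Your proposal is correct and follows the paper's proof essentially step for step. You define $\tmodel(x) := \lmodel(p_x)$, use $1 \leq i$ to get $\tmodel(x) \in [0,1] \subseteq [-i,i]$, and then chain the $(i,k)$-equivalence with the plain equivalence to obtain $\lmodel'(\psi) = \scale_k(\tmodel(t)) = \scale_k(\lmodel(\varphi))$.
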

\begin{proof}
    Let $\lmodel$ be a valuation and let $\lmodel'$ be the valuation such that $\lmodel'(p) = \scale_k(\lmodel(p))$ for each $p \in \PROP$. Let $\tmodel$ be the unique evaluation map such that $\tmodel(x) = \lmodel(p_x)$ for each $x \in \VAR$, which also implies $\tmodel(x) \in [0,1] \subset [-i,i]$ for each $x \in \VAR$ as well as
    $\lmodel'(p) = \scale_k(\tmodel(x_p))$ for each $p \in \PROP$. Because $\psi$ is $(i,k)$-equivalent to $t$, we have $\lmodel'(\psi) = \scale_k(\tmodel(t))$, and because $\varphi$ and $t$ are equivalent, we have $\tmodel(t) = \lmodel(\varphi)$. Thus $\lmodel'(\psi) = \scale_k(\lmodel(\varphi))$, i.e., $\psi$ is $k$-equivalent to $\varphi$.
\end{proof}

\subsection{Scaled expressive power}

In this section, we define scaled notions of expressive power between \coolname and/or logics. We start with a scaled notion of expressive power between logics.

\begin{definition}
Let $\cL$ and $\cL'$ be logics. We say that $\cL$ has \defstyle{the same expressive power as $\cL'$ (w.r.t. scaling)} if the following conditions hold:
\begin{enumerate}
    \item for each formula of $\cL$, there exists a $k$-equivalent formula of $\cL'$ for some $k \in \R_+$, and
    \item for each formula of $\cL'$, there exists an equivalent formula of $\cL$.
\end{enumerate}
\end{definition}

Lastly, we establish a scaled notion of expressive power that we use in our characterisation theorems.

\begin{definition}\label{definition: logical characterisation}
Let $\cT$ be a subclass of \coolname (or of tuples thereof) and let $\cL$ be one of the defined logics. We say that $\cT$ and $\cL$ have \defstyle{the same expressive power (w.r.t. scaling)} if for each $n \in \Z_+$ the following conditions hold:
\begin{enumerate}
    \item for each tuple $(t_1, \dots, t_n)$ of $\cT$ and all $i \in \R_+$, there exists some 
    $k \geq i$ 
    and a tuple $(\varphi_1, \dots, \varphi_n)$ of formulae of $\cL$ such that for each $j \in \{1, \dots, n\}$ the formula $\varphi_j$ is $(i,k)$-equivalent to the term $t_j$, and
    \item for each tuple $(\varphi_1, \dots, \varphi_n)$ of formulae of $\cL$ there exists a tuple $(t_1, \dots, t_n)$ of $\cT$ such that for each $j \in \{1, \dots, n\}$ the term $t_j$ and the formula $\varphi_j$ are equivalent.
\end{enumerate}
\end{definition}
Note that although $\cT$ can be either a subclass of terms or tuples of terms of \coolname, in both cases we translate tuples of terms to tuples of formulae and vice versa.
Appropriately, the numbers $i$ and $k$ in the first condition must be independent of which components of the tuple are being compared. For an illustration of how the scaling works in this kind of characterisation, see Figure~\ref{figure: scaling}.

\begin{figure}[ht]
    \begin{center}
    \begin{tikzpicture}
        \node at (0,0.625) (inf) {};
        \node at (0,0.5) (k) {};
        \node at (0,-0.5) (2) {$\vdots$};
        \node at (0,-1.5) (1) {};
        \node at (0,-2) (0) {};
        \node at (0,-3.5) (-2) {$\vdots$};
        \node at (0,-2.75) (x) {};
        \node at (0,-4.5) (-k) {};
        \node at (0,-4.625) (-inf) {};

        \node at (2,-4.75) (scaling) {$\scale_k$};

        \draw[thick] (2) to (-2);
        \draw[thick] (inf) to (2);
        \draw[thick] (-2) to (-inf);

        \filldraw[black] (0,0.5) circle (2pt) node[anchor=south]{$k$};
        \filldraw[black] (0,-1.5) circle (2pt) node[anchor=east]{$1$};
        \filldraw[black] (0,-2) circle (2pt) node[anchor=east]{$0$};
        \filldraw[black] (0,-2.75) circle (2pt) node[anchor=east]{$x$};
        \filldraw[black] (0,-4.5) circle (2pt) node[anchor=north]{$-k$};

        \node at (4,0.625) (inf') {};
        \node at (4,0.5) (k') {};
        \node at (4,-0.5) (2') {$\vdots$};
        \node at (4,-1.5) (1') {};
        \node at (4,-2) (0') {};
        \node at (4,-3.5) (-2') {$\vdots$};
        \node at (4,-2.75) (x') {};
        \node at (4,-4.5) (-k') {};
        \node at (4,-4.625) (-inf') {};

        \node at (6,-3.5) (id) {$\id$};

        \draw[thick] (2') to (-2');
        \draw[thick] (inf') to (2');
        \draw[thick] (-2') to (-inf');

        \filldraw[black] (4,0.5) circle (2pt) node[anchor=south]{$1$};
        \filldraw[black] (4,-1.5) circle (2pt) node[anchor=west]{$\frac{k+1}{2k}$};
        \filldraw[black] (4,-2) circle (2pt) node[anchor=west]{$\half$};
        \filldraw[black] (4,-2.75) circle (2pt) node[anchor=west]{$\frac{k+x}{2k}$};
        \filldraw[black] (4,-4.5) circle (2pt) node[anchor=north]{$0$};

        \draw[->] (k) to (k');
        \draw[->] (1) to (1');
        \draw[->] (0) to (0');
        \draw[->] (x) to (x');
        \draw[->] (-k) to (-k');

        \node at (8,0.625) (inf'') {};
        \node at (8,0.5) (k'') {};
        \node at (8,-0.5) (2'') {$\vdots$};
        \node at (8,-1.5) (1'') {};
        \node at (8,-2) (0'') {};
        \node at (8,-3.5) (-2'') {$\vdots$};
        \node at (8,-4.5) (-k'') {};
        \node at (8,-4.625) (-inf'') {};

        \draw[thick] (2'') to (-2'');
        \draw[thick] (inf'') to (2'');
        \draw[thick] (-2'') to (-inf'');

        \filldraw[black] (8,0.5) circle (2pt) node[anchor=south]{$k$};
        \filldraw[black] (8,-1.5) circle (2pt) node[anchor=west]{$1$};
        \filldraw[black] (8,-2) circle (2pt) node[anchor=west]{$0$};
        \filldraw[black] (8,-4.5) circle (2pt) node[anchor=north]{$-k$};
        
        \draw[->] (k') to (1'');
        \draw[->] (-k') to (0'');
    \end{tikzpicture}
    \end{center}
    \caption{An illustration of the scaling in our translations. 
    The left and right number lines correspond to values of terms, while the middle number line shows the truth values of formulae. 
    From terms to formulae (left-to-middle), we scale $[-k,k]$ linearly into $[0,1]$, while from formulae to terms (middle-to-right) we use the identity function.}
    \label{figure: scaling}
\end{figure}

\section{Fuzzy logic characterisations}

In this section, we give fuzzy logic characterisations of \coolname and specifically of neural networks. 
In Section~\ref{section: Characterising coolname}, we give a fuzzy logic characterisation of \coolname, as well as of subclasses of \coolname obtained by giving a constant upper bound for the degrees of terms. Then, we give a fuzzy logic characterisation of neural networks in Section~\ref{section: Characterising neural networks}.

\subsection{Characterising \coolname}\label{section: Characterising coolname}

In this section, we give a fuzzy logic characterisation for \coolname with and without degree bound via the logics $\RPLM$ and $\LPHF$. As a reminder, in our characterisations we translate tuples of terms to tuples of formulae and vice versa.

\firsttheorem
\begin{proof}
    (Sketch) First, we define auxiliary connectives in $\RPLM$ and $\LPHF$ that simulate the addition and multiplication of reals in a given interval $[-k,k]$ down in the interval $[0,1]$; the details of this are given in Appendix~\ref{appendix: auxiliary connectives}. We then prove the theorem by giving recursive translations from terms of \coolname to formulae of $\RPLM$ and $\LPHF$, and vice versa. The correctness of the translations is proven by induction. The details are given in Appendix~\ref{appendix: Characterising coolname}.
\end{proof}

By restricting the degree of the terms and the fragment of the logic in the above theorem and carrying the restriction through the proof, we obtain the following fuzzy logic characterisation of \coolname with degree bound $d \in \N$ via the logics $\RPLM_{\leq d}$ and $\LPHF_{\leq d}$.

\optionaltheorem

So far in relation to both \coolname and $\RPL$, we have restricted to the case where constant parameters and truth constants are rational numbers. Analogously to \coolname and $\RPL$, we can define \coolnameb and 
Pavelka Logic ($\RPLb$) 
by replacing the rationals in the definitions of \coolname and $\RPL$ with reals. All the same notions and lemmas from Sections~\ref{section: Preliminaries} and \ref{section: Equivalence} and Appendix~\ref{appendix: auxiliary connectives} relating to \coolname and $\RPL$ can then analogously be defined and obtained for \coolnameb and $\RPLb$, including real-parameter neural networks, the logic $\RPLMb$ and the fragments $\RPLMb_{\leq n}$ for each $n \in \N$. 
We also note that the proofs of Theorems~\ref{theorem: R-algebra = RPLM = LPHF} and \ref{theorem: coolname with degree bound = RPLM and LPHF with deree bound} did not hinge upon the restriction to rationals.
Thus, we obtain the following corollaries.

\begin{corollary}
    \coolnameb has the same expressive power (w.r.t. scaling) as $\RPLMb$.
\end{corollary}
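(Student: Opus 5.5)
We will rerun the proof of Theorem~\ref{theorem: R-algebra = RPLM = LPHF}, specialised to the $\RPLM$ half, with every rational parameter replaced by a real one, and check that nothing depends on rationality. The statement has two directions, matching the two conditions of Definition~\ref{definition: logical characterisation}, and we treat them separately.

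\emph{From formulae to terms.} We translate a formula $\varphi$ of $\RPLMb$ into a term $t_\varphi$ of \coolnameb by recursion on $\varphi$. We set $t_{\overline{r}} := \overline{r}$ for $r \in [0,1]$, which is now a real constant, and $t_p := x_p$. For implication we set $t_{\varphi \impl \psi} := \overline{1} - \ReLU(\ReLU(t_\varphi - t_\psi))$. This works because $1 - \max\{0, a-b\} = \min\{1, 1-a+b\}$; the single $\ReLU$ already suffices. For the product conjunction we set $t_{\varphi \odot \psi} := t_\varphi t_\psi$. Equivalence then follows by induction with matched evaluation maps, as in the rational case. Tuples are translated componentwise.

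\emph{From terms to formulae.} Fix a tuple $(t_1,\dots,t_n)$ and a bound $i$. First we choose $k \geq i$ so large that every subterm of every $t_j$ takes values in $[-k,k]$ whenever the inputs lie in $[-i,i]$. Such a $k$ exists because each subterm is a continuous function on the compact cube $[-i,i]^m$, and we take the maximum over the finitely many subterms. In practice we also take $k$ as large as the auxiliary connectives of Appendix~\ref{appendix: auxiliary connectives} require. We then use those auxiliary connectives: scaled addition, scaled multiplication, scaled $\ReLU$, and scaled constants, which simulate the real operations on $[-k,k]$ inside $[0,1]$ via $\scale_k$. The translation is recursive: a variable $x$ goes to $p_x$, and $\overline{r}$ goes to the truth constant $\overline{\scale_k(r)}$. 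The latter is now a real truth constant of $\RPLb$, and it lies in $[0,1]$ because $|r| \leq k$. Sums, products and $\ReLU$ go to the corresponding scaled connectives. We then prove $(i,k)$-equivalence by induction on terms.

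The main obstacle is confirming that the auxiliary connectives themselves use only constants built from $k$. These are constants such as $\overline{\half}$, $\overline{1/(2k)}$ or $\overline{k/(k+1)}$, and the multiplication simulation involves rescaling by factors depending on $k$. When $k$ is irrational, these constants are irrational, which is harmless in $\RPLMb$ since every real in $[0,1]$ is a truth constant. Alternatively we can always round $k$ up to an integer, which keeps these constants rational. The one genuinely new point is the scaled constant $\overline{\scale_k(r)}$ for real $r$, and it is available in $\RPLb$ by definition. So the appendix constructions carry over verbatim. Since the paper notes that the proofs of Theorems~\ref{theorem: R-algebra = RPLM = LPHF} and~\ref{theorem: coolname with degree bound = RPLM and LPHF with deree bound} never used rationality, the corollary follows.
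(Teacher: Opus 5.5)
Your proposal is correct and is essentially the paper's argument: the paper only remarks that the proofs of Lemmas~\ref{lemma: terms to logic} and~\ref{lemma: logic to terms} never use rationality, and your two-direction rerun, with $\overline{\scale_k(r)}$ now a real truth constant of $\RPLMb$, carries out exactly that check. The one point to tighten is the choice of $k$: take an integer $k \geq \max\{M^2, 8, i\}$, where $M$ is your compactness bound on subterm values, because $\odot^k$ is built from $\itsum_k$ and so needs $k$ to be a positive integer (making the ``irrational $k$'' branch unnecessary), and because Lemma~\ref{lemma: multiplication formula} needs the operands in $\bigl[-\sqrt{k},\sqrt{k}\,\bigr]$, not merely in $[-k,k]$.
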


\begin{corollary}
    For each $d \in \N$, \coolnameb with degree bound $d$ has the same expressive power (w.r.t. scaling) as $\RPLMb_{\leq d}$.
\end{corollary}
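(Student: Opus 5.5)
The corollary is the real-parameter analogue of Theorem~\ref{theorem: coolname with degree bound = RPLM and LPHF with deree bound}. I would obtain it by rerunning that theorem's proof with $\Q$ replaced by $\R$ throughout, and checking that no step depends on rationality. No $\LPH$-side statement is claimed here, which matters: $\LPH$ can only express rational constants through $\underline{r}$, so only the $\RPLMb_{\leq d}$ half carries over.

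For the direction from terms to formulae, I would fix a tuple $(t_1,\dots,t_n)$ of \coolnameb with $\deg(t_j)\le d$ and an input bound $i\in\R_+$. First I would choose $k\ge i$ large enough that every subterm of every $t_j$ stays within $[-k,k]$ on inputs from $[-i,i]$. Such a bound exists because each subterm is continuous on the compact cube $[-i,i]^m$; one can also bound it recursively through the term structure. This $k$ must be chosen uniformly for the whole tuple.

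Next, I would use the real analogues of the auxiliary connectives of Appendix~\ref{appendix: auxiliary connectives}. These simulate $+$, $\cdot$ and $\ReLU$ on $[-k,k]$ inside $[0,1]$ via $\scale_k$, and they need truth constants such as $\overline{\frac{1}{2}}$ and $\overline{\frac{1}{2k}}$. In $\RPLMb$, $\overline{c}$ is available for every real $c\in[0,1]$, so these constants are available even when $k$ is irrational. A real parameter $r$ of a term translates to the constant $\overline{\scale_k(r)}$, which is legal in $\RPLMb_{\leq 0}$ once $|r|\le k$.

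The translation itself proceeds by recursion on the term. The degree bookkeeping works as before, using Lemma~\ref{lemma:degrees of formulae}, which holds verbatim for $\RPLMb$. Variables go to $\cL_{\leq 1}$, sums and $\ReLU$ do not raise the fragment level, and a simulated product combines levels $a$ and $b$ into level $a+b$. The resulting formula therefore lies in $\RPLMb_{\leq d}$, and the inductive correctness argument is identical to the rational case.

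For the converse direction, I would translate a formula of $\RPLMb_{\leq d}$ recursively into a term. A constant $\overline{r}$ becomes the constant term $\overline{r}$, now with $r\in\R$. A proposition $p$ becomes $x_p$. The connective $\odot$ becomes $\cdot$. The connective $\impl$ becomes $\overline{1}-\ReLU(\overline{1}-\overline{1}\cdot\ldots)$, concretely $1-\ReLU(t_\varphi-t_\psi)$, which computes $\min\{1,1-a+b\}$. Equivalence follows by induction, and degrees match the fragment levels, so the terms lie in \coolnameb with degree bound $d$.

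The main obstacle is confirming that the auxiliary connectives never silently used rationality, for instance by building constants as $\underline{r}$ or by relying on $k\in\N$. If the appendix did assume $k$ an integer, I would simply take $k$ to be an integer at least as large as the required real bound, which is always possible.
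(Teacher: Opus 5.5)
Your proposal is the paper's own argument. The paper justifies the corollary simply by noting that the proofs of Lemmas~\ref{lemma: terms to logic} and \ref{lemma: logic to terms} never use rationality once real truth constants such as $\overline{\scale_k(r)}$ and $\overline{\frac{1}{2k}}$ are available, with $k$ still an integer since $\odot^k$ contains $\itsum_k$; your two translations and degree bookkeeping are exactly those.

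One correction to your choice of $k$: the auxiliary connectives do not simulate arithmetic on all of $[-k,k]$. Lemma~\ref{lemma: addition formula} needs arguments in $[-\frac{k}{2},\frac{k}{2}]$, and Lemma~\ref{lemma: multiplication formula} needs them in $[-\sqrt{k},\sqrt{k}]$ with $k\geq 8$. So you must take $k$ large enough that every subterm stays in $[-\sqrt{k},\sqrt{k}]$ on inputs from $[-i,i]$, as the paper's $K=\max\{j^{2\length(t)},8\}$ ensures; staying in $[-k,k]$ is not enough. For example, with $t=y\cdot\overline{1}$, $i=k=8$ and $\tmodel(y)=8$, the formula $\varphi_y\odot^8\varphi_{\overline{1}}$ gets value $0$ rather than $\scale_8(8)=1$.
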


We primarily considered \coolname rather than \coolnameb because in the case of rationals we also obtain a characterisation via $\LPHF$. With reals, this would require making the vocabulary of $\LPHF$ nondenumerable by, e.g., adding a truth constant for each real number in $[0,1]$, because even with a denumerable vocabulary the number of distinct formulae would be only denumerable, i.e., not even enough to express every real in $[0,1]$. 
However, the core appeal of $\LPHF$ over $\RPLM$ is that the vocabulary of $\LPHF$ is essentially finite whereas the vocabulary of $\RPLM$ is denumerable. Accordingly, the appeal of $\RPLM$ over $\LPHF$ is that due to the richer vocabulary, a formula of $\RPLM$ is sometimes significantly shorter than the shortest equivalent formula of $\LPHF$.

\subsection{Characterising neural networks}\label{section: Characterising neural networks}

In this section, we give fuzzy logic characterisations of neural networks.

\secondtheorem
\begin{proof}
    (Sketch) We first give a recursive constructive translation from formulae of $\RPLM_{\leq 1}$ and $\LPHF_{\leq 1}$ into neurons. Then, we give a recursive constructive translation from proto-neurons to formulae of $\RPL$ and $\LPHFF$; this direction involves inflating the proto-neurons such that they only contain integers, after which multiplication by a constant can be simulated in the logic via addition. The correctness of the translation is shown via induction. The details are in Appendix~\ref{appendix: Characterising neural networks}.
\end{proof}

For examples of how neurons translate to formulae and vice versa, see Appendix~\ref{appendix: Characterising neural networks}.

In the proof of Theorem~\ref{theorem: neural networks = RPLM1 = LPHF1}, the translation from $\RPLM_{\leq 1}$ to neurons does not hinge on the constant symbols in the neurons and the truth constants in the formulae being rationals; the same translation works with reals. The translation from proto-neurons to $\RPL$ does depend on this restriction, but if we only want to translate neurons into $\RPLM_{\leq 1}$ instead of $\RPL$, then we can use Theorem~\ref{theorem: coolname with degree bound = RPLM and LPHF with deree bound} where the proof does not depend on restricting to rationals. Thus, we obtain the following corollary.

\begin{corollary}
    Real-parameter neural networks have the same expressive power (w.r.t. scaling) as $\RPLMb_{\leq 1}$.
\end{corollary}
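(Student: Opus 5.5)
The plan is to get both conditions of Definition~\ref{definition: logical characterisation} from the two halves of the proof of Theorem~\ref{theorem: neural networks = RPLM1 = LPHF1} and from the real-parameter version of Theorem~\ref{theorem: coolname with degree bound = RPLM and LPHF with deree bound}. Throughout, the class $\cT$ is the class of real-parameter $\ReLU$-activated neural networks, viewed as tuples of terms of \coolnameb. The logic is $\RPLMb_{\leq 1}$.

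\emph{Condition 1 (networks to formulae).} Let $(n_1,\dots,n_m)$ be a real-parameter neural network and let $i\in\R_+$. Each $n_j$ is a proto-neuron with real constants. The proof of Lemma~\ref{lemma: proto-neuron degrees} uses only degrees and never the nature of the constants, so its real analogue gives $\deg(n_j)\leq 1$. Hence $(n_1,\dots,n_m)$ is a tuple of \coolnameb with degree bound $1$. I then apply the corollary stating that \coolnameb with degree bound $d$ has the same expressive power (w.r.t. scaling) as $\RPLMb_{\leq d}$, taking $d=1$. Its first condition yields one $k\geq i$ and formulae $\varphi_1,\dots,\varphi_m$ of $\RPLMb_{\leq 1}$ with each $\varphi_j$ $(i,k)$-equivalent to $n_j$. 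The point to stress is that the corollary is stated for tuples, so a single $k$ serves all components, as Definition~\ref{definition: logical characterisation} demands. I deliberately do not route this direction through $\RPL$: that translation clears denominators to make all parameters integers, which is impossible for irrational weights.

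\emph{Condition 2 (formulae to networks).} Given formulae $\varphi_1,\dots,\varphi_m$ of $\RPLMb_{\leq 1}$, I reuse the recursive translation from $\RPLM_{\leq 1}$ into neurons in the proof of Theorem~\ref{theorem: neural networks = RPLM1 = LPHF1}, now allowing a truth constant $\overline{r}$ with $r\in[0,1]$ real.
\begin{itemize}
\item A truth constant $\overline{r}$ becomes a bias $r$.
\item Since inputs satisfy $\tmodel(x)=\lmodel(p_x)\in[0,1]$, a proposition symbol $p$ is simulated by $\ReLU(x_p)$, which equals $x_p$ on these inputs.
\item A formula $\varphi\impl\psi$ is translated using $\min\{1,1-a+b\}=1-\ReLU(a-b)$, realised with weights $\pm1$ and bias $1$, followed by one more layer.
\item In $\alpha\odot\beta$ with $\alpha\in\RPLMb_{\leq 0}$, the value of $\alpha$ is a real constant $c\in[0,1]$, so the product becomes multiplication of a neuron by the weight $c$.
\end{itemize}
None of these steps uses rationality, so the translation goes through with real weights and biases. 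The tuple requirement, namely a common final layer fed by one common previous network, is handled exactly as in the rational case. The individual translations are padded to equal depth with identity layers $\ReLU(1\cdot y+0)$, which are harmless because every intermediate value is non-negative, being a truth value in $[0,1]$. The padded networks are then merged in parallel by assigning zero weights across the separate blocks.

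\emph{Main obstacle.} The step I expect to be hardest is the verification, not the construction. I must check that each lemma behind the two directions, in particular the auxiliary scaled connectives of Appendix~\ref{appendix: auxiliary connectives}, still holds with real constants. The key case is a real constant $r$ in a term, which becomes the truth constant $\overline{\scale_k(r)}$. This requires $\scale_k(r)=\frac{k+r}{2k}\in[0,1]$, which holds provided $k\geq|r|$. That constraint is absorbed by the choice of $k$ in the degree-$1$ corollary, since $k$ is already chosen large enough for every subterm value. After this check, the two conditions combine to give the stated equivalence.
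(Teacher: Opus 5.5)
Your proposal is correct and is essentially the paper's own argument. Condition~1 comes from the real-parameter degree-bound-$1$ characterisation of \coolnameb via $\RPLMb_{\leq 1}$, deliberately avoiding the denominator-clearing route through $\RPL$, and condition~2 reuses the translation of Lemma~\ref{lemma: logic to neurons} together with the padding and merging of Lemmas~\ref{lemma: depth increase} and~\ref{lemma: combining networks}, none of which depends on rationality. Your one deviation, translating $p$ as $\ReLU(\overline{1}x_p+\overline{0})$ instead of the depth-$0$ neuron $x_p$, is inessential; note only that the merging step is needed inside the recursion for $\impl$ as well, not just for the final tuple.
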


Again, the reason we focused on neural networks with rational weights and biases was because we were also able to obtain characterisations via $\RPL$, $\LPHF_{\leq 1}$ and $\LPHFF$. For reals, we could obtain a characterisation via adding a truth constant for each real in $[0,1]$ to $\LPHF_{\leq 1}$, whereas for $\RPL$ and $\LPHFF$ even this would not suffice to obtain a characterisation.
The characterisation via $\RPL$ is especially interesting, since $\RPL$ is known to enjoy Pavelka-style completeness \cite{hajek1995fuzzy}.

From Theorem~\ref{theorem: neural networks = RPLM1 = LPHF1}, we obtain a corollary between $\RPLM_{\leq 1}$, $\LPHF_{\leq 1}$, $\RPL$ and $\LPHFF$.

\begin{corollary}
    $\RPLM_{\leq 1}$ and $\LPHF_{\leq 1}$ have the same expressive power (w.r.t. scaling) as $\RPL$ and $\LPHFF$.
\end{corollary}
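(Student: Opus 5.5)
The plan is to obtain this corollary directly from Theorem~\ref{theorem: neural networks = RPLM1 = LPHF1}, composing the two directions of the characterisation through neural networks and then using Lemma~\ref{lemma: transitive equivalence} to convert $(i,k)$-equivalence into $k$-equivalence. By Propositions~\ref{proposition: RPLM_d = LPHF_d} (with $n=1$) and \ref{proposition: RPL = LPHFF}, $\RPLM_{\leq 1}$ has the same (non-scaled) expressive power as $\LPHF_{\leq 1}$, and $\RPL$ the same as $\LPHFF$. Hence it suffices to prove the claim for one representative on each side, say $\cL = \RPLM_{\leq 1}$ and $\cL' = \RPL$. Equivalence on either side composes trivially with $k$-equivalence, since $k$-equivalence only depends on truth values.

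For the direction $\RPL \to \RPLM_{\leq 1}$ (condition 2 of the scaled definition), take a formula $\varphi$ of $\RPL$. Condition 2 of Definition~\ref{definition: logical characterisation} in Theorem~\ref{theorem: neural networks = RPLM1 = LPHF1} (with $n=1$) gives a neuron $t$ equivalent to $\varphi$. Then condition 1 for $\RPLM_{\leq 1}$, applied with $i=1$, gives some $k\ge 1$ and a formula $\psi$ of $\RPLM_{\leq 1}$ that is $(1,k)$-equivalent to $t$. This only yields $k$-equivalence, however, whereas we need plain equivalence. The easier route is to observe that every $\RPL$ formula is syntactically already an $\RPLM_{\leq 1}$ formula. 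Constants lie in the $\leq 0$ fragment, propositions lie in the $\leq 1$ fragment, and $\impl$ preserves the level by Lemma~\ref{lemma:degrees of formulae}. So the identity gives an equivalent formula.

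For the direction $\RPLM_{\leq 1} \to \RPL$ (condition 1), take $\varphi$ in $\RPLM_{\leq 1}$. By condition 2 of Theorem~\ref{theorem: neural networks = RPLM1 = LPHF1} for $\RPLM_{\leq 1}$ there is a neuron $t$ equivalent to $\varphi$. By condition 1 for $\RPL$ with $i=1$ there is $k \geq 1$ and $\psi \in \RPL$ that is $(1,k)$-equivalent to $t$. Lemma~\ref{lemma: transitive equivalence}, whose hypothesis $1\le i\le k$ is satisfied, then yields that $\psi$ is $k$-equivalent to $\varphi$.

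The remaining combinations follow in the same way. The only real obstacle is bookkeeping: checking that the hypothesis $i \ge 1$ of Lemma~\ref{lemma: transitive equivalence} holds, which is why $i=1$ is chosen, and that the tuple-based definition with $n=1$ specialises to single formulae.
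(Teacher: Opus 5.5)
Your proof is correct and follows essentially the same route as the paper. For the equivalence direction you use the syntactic inclusion of $\RPL$ (resp.\ $\LPHFF$) in $\RPLM_{\leq 1}$ (resp.\ $\LPHF_{\leq 1}$). For the $k$-equivalence direction you pass through a neuron via Theorem~\ref{theorem: neural networks = RPLM1 = LPHF1} with $i=1$ and then apply Lemma~\ref{lemma: transitive equivalence}. The only difference is cosmetic: you first reduce to one representative logic on each side via Propositions~\ref{proposition: RPLM_d = LPHF_d} and~\ref{proposition: RPL = LPHFF}, whereas the paper treats the $\LPH$-fragments directly.
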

\begin{proof}
    Each formula of $\RPL$ is a formula of $\RPLM_{\leq 1}$ and by Proposition~\ref{proposition: RPLM_d = LPHF_d} there exists an equivalent formula of $\LPHF_{\leq 1}$. Similarly, each formula of $\LPHFF$ is a formula of $\LPHF_{\leq 1}$ and equivalent to some formula of $\RPLM_{\leq 1}$. For the converse, Theorem~\ref{theorem: neural networks = RPLM1 = LPHF1} implies that each formula $\varphi$ of $\RPLM_{\leq 1}$ or $\LPHF_{\leq 1}$ is equivalent to a neuron $n$ which in turn is $(1,k)$-equivalent to a formula $\psi$ of $\RPL$ or $\LPHFF$ for some $k > 1$; by Lemma~\ref{lemma: transitive equivalence}, $\psi$ is $k$-equivalent to $\varphi$.
\end{proof}

In Appendix~\ref{appendix: lukasiewicz logic}, we consider some proto-neurons that cannot be translated into $(i,k)$-equivalent formulae of {\L}ukasiewicz logic. We also study what kinds of neurons are expressible in {\L}ukasiewicz logic.

\section{Conclusion}

We have given fuzzy logic characterisations of $\ReLU$-activated rational-weight neural networks via fragments of two fuzzy logics, $\RPLM$ and $\LPH$. We also gave a fuzzy logic characterisation of a more general structure consisting of polynomials from the polynomial ring over $\Q$ in countably many variables appended with $\ReLU$. The inputs for the neural networks and for the polynomials are allowed to be arbitrary real numbers. 
Future work could involve studying how neural networks are connected to tautologies of the characterising fuzzy logics, as well as studying the completeness properties of the fuzzy logic fragments introduced here.

Logic‑based approaches to AI systems are useful for, e.g., verification and explainability. Verification is important in applications with safety‑related components. Explainability, on the other hand, concerns the human interface: it is precisely in such contexts that AI replacing humans may be unlikely, since human understanding itself serves as the primary point of reference. Moreover, explainability is also central in scientific uses of AI, where it is typically not sufficient to obtain results without understanding the reasons behind them. Explainability is thus clearly important and naturally motivates the study of logics that are cognitively easy to process. Whether fuzzy logic satisfies this requirement is debatable, but investigating different kinds of logics helps to clarify the overall landscape. Furthermore, logical methods can also provide new perspectives on training AI systems, potentially leading to useful algorithmic approaches that differ from standard gradient‑descent methods.

\section{Acknowledgments}
    
Damian Heiman was supported by the Magnus Ehrnrooth Foundation. Antti Kuusisto was supported by the project \emph{Perspectives on computational logic}, funded by the Research Council of Finland, project number 369424.

\bibliography{references}

\appendix

\section{Proof of Lemma~\ref{lemma: proto-neuron degrees}}\label{appendix: neural networks}

In this section, we prove Lemma~\ref{lemma: proto-neuron degrees}:

\degreelemma*
\begin{proof}
    We prove the claim by induction over the structure of $t$. If $t = \overline{r}$ for some $r \in \Q$ or if $t = x$ for some $x \in \VAR$, then $t$ is a proto-neuron by definition. Next, assume that $t'$ and $t''$ are terms of \coolname and that $t'$ and $t''$ are proto-neurons if and only if $\deg(t') \leq 1$ and $\deg(t'') \leq 1$, respectively. 
    \begin{itemize}
        \item If $t = t' + t''$, then $t$ is a proto-neuron if and only if $t'$ and $t''$ are proto-neurons. By the induction hypothesis, this is equivalent to $\deg(t'), \deg(t'') \leq 1$ which in turn is equivalent to $\deg(t) = \max\{\deg(t'), \deg(t'')\} \leq 1$.
        \item If $t = t't''$, then $t$ is a proto-neuron if and only if $t'$ and $t''$ are proto-neurons such that we have $\deg(t') + \deg(t'') \leq 1$. Since $\deg(t) = \deg(t') + \deg(t'')$, this is equivalent to $\deg(t) \leq 1$.
        \item If $t = \ReLU(t')$, then $t$ is a proto-neuron if and only if $t'$ is a proto-neuron. By the induction hypothesis, this is equivalent to $\deg(t') \leq 1$, and since $\deg(t) = \deg(t')$, this is equivalent to $\deg(t) \leq 1$.
    \end{itemize}
    This concludes the induction.
\end{proof}

\section{Defining rational numbers in $\LPH$}\label{appendix: defining rationals}

It is shown in \cite{esteva2001} that all rational numbers in $[0,1]$ are definable in $\LPH$, meaning that for every $r \in \Q \cap [0,1]$ there exists a formula $\varphi$ of $\LPH$ such that $\lmodel(\varphi) = r$ for each valuation $\lmodel$. We show an alternative construction to the one in \cite{esteva2001}, but the results of the current paper hold regardless of which way the construction is done. We define each rational number $r \in \Q \cap [0,1]$ in $\LPH$ as a formula $\underline{r}$ recursively as follows.
First, we can obviously define $\underline{0} = \overline{0}$ and $\underline{\half} = \overline{\half}$. Second, we define $\underline{1} := \negl \overline{0}$.
Next, we cover the case where $r = \frac{k}{2^j} \in ]0,1[$ for some $k, j \in \Z_+$ such that $k < 2^j$.
If $k = j = 1$, then $r = \half$, and thus $\underline{r}$ is already defined. Next, assume that $j > 1$ and we have defined $\underline{\frac{k'}{2^{j-1}}}$ for each $k' \in \Z_+$ such that $k' < 2^{j-1}$. If $k = 1$, then we define
\[
    \textstyle{\underline{\frac{1}{2^{j}}} := \underline{\frac{1}{2^{j-1}}} \odot \underline{\half}}.
\]
Next, assume that $k > 1$ and we have defined $\underline{\frac{k'}{2^j}}$ for each $k' \in \Z_+$ such that $k' < k$. If $k$ is even, then $\frac{k}{2^j} = \frac{k'}{2^{j-1}}$ where $k' = \frac{k}{2} \in \Z_+$, meaning that $\underline{r}$ is already defined. If $k$ is odd, then $2^j$ is not divisible by $k$, meaning that $\underline{r}$ is not yet defined. We define
\[
    \textstyle{\underline{\frac{k}{2^{j}}} := \underline{\frac{k-1}{2^{j}}} \oplus \underline{\frac{1}{2^{j}}}.}
\]
Lastly, we consider the case where $r = \frac{k}{\ell} \in ]0,1[$ for some $k, \ell \in \Z_+$ such that $k < \ell$ and $\ell$ is not divisible by $k$ nor is $\ell$ a power of $2$. Let $m$ be the smallest integer such that $k, \ell < 2^m$. Then
\[
    \textstyle{\underline{\frac{k}{\ell}} := \underline{\frac{\ell}{2^m}} \impp \underline{\frac{k}{2^m}}.}
\]
This completes the construction.

\begin{lemma}\label{lemma: truth constants in LPH}
    For each $r \in \Q \cap [0,1]$ and each valuation $\lmodel$, we have $\lmodel(\underline{r}) = r$.
\end{lemma}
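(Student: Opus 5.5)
The plan is to prove the claim by well-founded induction that mirrors the recursive construction of $\underline{r}$. The outer induction is on $j$ for dyadic rationals $\frac{k}{2^j}$, with an inner induction on the odd numerator $k$. The non-dyadic rationals are handled afterwards, using the dyadic case. Throughout, we only use the canonical semantics of $\overline{0}$, $\overline{\half}$, $\negl$, $\oplus$, $\odot$ and $\impp$ recalled in Section~\ref{section: logics}.

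\textbf{Base cases.} We have $\lmodel(\underline{0}) = \lmodel(\overline{0}) = 0$ and $\lmodel(\underline{\half}) = \half$. Also $\lmodel(\underline{1}) = \lmodel(\negl\overline{0}) = 1 - 0 = 1$.

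\textbf{Dyadic case.} Fix $j > 1$ and assume $\lmodel(\underline{\frac{k'}{2^{j-1}}}) = \frac{k'}{2^{j-1}}$ for all $0 < k' < 2^{j-1}$.
\begin{itemize}
\item For $k = 1$, the semantics of $\odot$ gives $\lmodel(\underline{\frac{1}{2^j}}) = \frac{1}{2^{j-1}}\cdot\half = \frac{1}{2^j}$.
\item For even $k$, the formula $\underline{\frac{k}{2^j}}$ is by definition the already constructed formula for $\frac{k/2}{2^{j-1}}$. It is therefore covered by the hypothesis.
\item For odd $k > 1$, the inner hypothesis gives the value $\frac{k-1}{2^j}$ for $\underline{\frac{k-1}{2^j}}$ (note $k-1$ is even, so this reduces to level $j-1$ or to the case just treated). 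Then $\lmodel(\underline{\frac{k}{2^j}}) = \min\{1, \frac{k-1}{2^j} + \frac{1}{2^j}\} = \frac{k}{2^j}$, since $k < 2^j$ means no truncation occurs.
\end{itemize}

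\textbf{General case.} Let $r = \frac{k}{\ell}$ with $0 < k < \ell$, where $\ell$ is not a power of $2$. Choose $m$ minimal with $k, \ell < 2^m$. Then $\frac{k}{2^m}$ and $\frac{\ell}{2^m}$ are dyadic rationals in $]0,1[$, so their formulae are available and have the correct values by the dyadic case. Since $\frac{\ell}{2^m} > \frac{k}{2^m}$, the second clause of the $\impp$ semantics applies. This gives $\lmodel(\underline{\frac{k}{\ell}}) = \frac{k/2^m}{\ell/2^m} = \frac{k}{\ell}$.

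\textbf{Main obstacle.} The only delicate point is bookkeeping rather than mathematics. We must check that the recursive definition is well-founded and covers every $r \in \Q \cap [0,1]$ exactly as written. In particular:
\begin{itemize}
\item Every dyadic rational can be written in lowest terms and reached by the even/odd reduction.
\item The strict inequality $k < \ell$ guarantees that the truncation cases of $\oplus$ and $\impp$ never trigger.
\item The side condition ``$\ell$ not divisible by $k$'' in the last case is harmless. If $\ell$ is a power of $2$, the fraction is already dyadic. Otherwise, the $\impp$ construction works for any $k < \ell$ regardless of divisibility.
\end{itemize}
I would make these points explicit and then conclude by the nested induction.
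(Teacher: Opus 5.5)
Your proof is correct and takes the same approach as the paper. The paper's proof is just ``by induction over the recursive definition of $\underline{r}$'', and your nested induction (outer on $j$, inner on the numerator, then the $\impp$ case) is that induction written out.

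One small correction: read literally, the side condition ``$\ell$ not divisible by $k$'' is not harmless. It excludes every $\frac{1}{\ell}$, e.g.\ $\frac{1}{3}$, so it must be read as ``$\frac{k}{\ell}$ in lowest terms''. Your remark that the $\impp$ computation works for every $0<k<\ell$ covers that reading.
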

\begin{proof}
    By induction over the recursive definition of $\underline{r}$.
\end{proof}

\section{Auxiliary connectives}\label{appendix: auxiliary connectives}

In this section, we establish auxiliary connectives that will be useful when translating terms of \coolname to the logics $\RPLM$ and $\LPH$.
The goal with these auxiliary connectives is to simulate real arithmetic from some arbitrary interval $[-k,k]$ in the interval $[0,1]$. While we already have the connectives $\oplus$ and $\odot$ whose semantics are similar to the sum and multiplication of real numbers, this does not yet suffice for our characterisations. Consider for example the multiplication of two real numbers $a,b > 1$. We have $ab > a$ and $ab > b$. On the other hand, multiplying any two numbers $a,b \in [0,1]$, we get $ab \leq a$ and $ab \leq b$. Thus, $\odot$ by itself does not suffice to simulate multiplication of two real numbers, but we will see by the end of the section that a more involved auxiliary connective will accomplish the task.

To begin with, we first define for each $n \in \N$ the iterated sum operator $\itsum_n$. Intuitively, $\itsum_n$ corresponds to multiplying the truth value of a formula by $n$, which is achieved by iteratively summing the formula with itself. For $n = 0$ we define $\itsum_0 \varphi := \overline{0}$. Then, for each $n \in \N$, we define
\[
    \textstyle{\itsum_{n+1} \varphi := (\itsum_{n} \varphi) \oplus \varphi.}
\]

\begin{lemma}\label{lemma: scalar multiplication formula}
    For all $n \in \N$, all valuations $\lmodel$ and all formulae $\varphi$ of $\RPLM$ or $\LPH$, we have that $\lmodel(\itsum_n \varphi) = \min\{1, n\lmodel(\varphi)\}$. 
\end{lemma}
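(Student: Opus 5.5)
We argue by induction on $n$, with $\lmodel$ and $\varphi$ fixed and arbitrary. The only tools needed are the recursive definition of $\itsum_n$ and the semantics of $\oplus$ from the preliminaries, namely $\lmodel(\alpha \oplus \beta) = \min\{1, \lmodel(\alpha) + \lmodel(\beta)\}$. This semantics holds in every extension of {\L}ukasiewicz logic, so in particular in $\RPLM$ and $\LPH$.

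\emph{Base case $n = 0$.} We have $\itsum_0 \varphi = \overline{0}$, so $\lmodel(\itsum_0 \varphi) = 0 = \min\{1, 0 \cdot \lmodel(\varphi)\}$.

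\emph{Inductive step.} Assume $\lmodel(\itsum_n \varphi) = \min\{1, n\lmodel(\varphi)\}$. Write $a := \lmodel(\varphi) \in [0,1]$. Then
\[
\lmodel(\itsum_{n+1}\varphi) = \min\{1, \min\{1, na\} + a\}.
\]
We must show this equals $\min\{1, (n+1)a\}$, and we split into two cases.

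\emph{Case $na \leq 1$.} The inner minimum equals $na$, so the expression becomes $\min\{1, (n+1)a\}$ immediately.

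\emph{Case $na > 1$.} The inner minimum equals $1$, so the expression becomes $\min\{1, 1 + a\} = 1$, using $a \geq 0$. On the other side, $(n+1)a \geq na > 1$, again using $a \geq 0$, so $\min\{1, (n+1)a\} = 1$ as well.

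The only point needing care is the truncation in the second case. There we must use the non-negativity of truth values to see that the truncation at $1$ persists once it has occurred. Otherwise the proof is routine.
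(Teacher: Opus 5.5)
Your proof is correct and takes the same approach as the paper, whose proof is just the one line ``By induction over $n$.'' Your base case and two-case inductive step fill in that induction. You also correctly flag the one non-routine point: once the truncation at $1$ occurs it persists, because truth values are non-negative.
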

\begin{proof}
    By induction over $n$.
\end{proof}

For $\RPLM$ and its fragments, we introduce the connective $\overline{\oplus}'$.
The intuition behind $\overline{\oplus}'$ is to simulate the addition of real numbers belonging to the interval $[-k,k]$ for any $k \in \R_+$ in the interval $[0,1]$ using the scaling function $\scale_k$.
More formally, we define $\overline{\oplus}'$ as follows:
\[
    \textstyle{\varphi \,\,\overline{\oplus}' \, \psi := \left(\,\varphi \ominus \overline{\frac{1}{4}}\,\right) \oplus \left(\,\psi \ominus \overline{\frac{1}{4}}\,\right).}
\]
For $\LPH$ and its fragments, where the truth constant $\overline{\frac{1}{4}}$ does not exist, we instead define the connective $\underline{\oplus}'$ which is defined the same as $\overline{\oplus}'$ except that we replace $\overline{\frac{1}{4}}$ with the formula~$\underline{\frac{1}{4}}$ (recall Appendix~\ref{appendix: defining rationals}).
In the rest of the paper, the logic is always clear from the context, and thus we just write $\oplus'$ instead of $\overline{\oplus}'$ or $\underline{\oplus}'$.

The following lemma shows that $\oplus'$ indeed simulates the addition of two real numbers in the interval $[-k,k]$ correctly, assuming that the inputs lie in the interval $\left[-\frac{k}{2},\frac{k}{2}\right]$. The restriction of inputs is natural, since $a,b \in \left[-\frac{k}{2},\frac{k}{2}\right]$ implies $a+b \in [-k,k]$.

\begin{lemma}\label{lemma: addition formula}
    Let $\cL$ be $\RPLM$ or $\LPH$.
    For all valuations $\lmodel$, all $k \in \R_+$ and all formulae $\varphi$ and $\psi$ of $\cL$: if $\lmodel(\varphi) = \scale_k(\ell)$ and $\lmodel(\psi) = \scale_k(\ell')$ for some $\ell, \ell' \in \left[-\frac{k}{2}, \frac{k}{2}\right]$, then $\lmodel(\varphi \oplus' \psi) = \scale_k(\ell + \ell')$. 
\end{lemma}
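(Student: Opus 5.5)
This is a direct computation from the semantics of the abbreviated connectives $\ominus$ and $\oplus$ together with $\lmodel(\overline{\frac14}) = \lmodel(\underline{\frac14}) = \frac14$. The latter holds by the semantics of truth constants in $\RPLM$ and by Lemma~\ref{lemma: truth constants in LPH} in $\LPH$. We track each truncation and show that none of them is active under the hypothesis $\ell,\ell' \in \left[-\frac{k}{2},\frac{k}{2}\right]$.

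\textbf{Inner subtractions.} Since $\scale_k$ is increasing, $\ell \in \left[-\frac{k}{2},\frac{k}{2}\right]$ gives $\lmodel(\varphi) = \scale_k(\ell) = \frac{k+\ell}{2k} \in \left[\frac14,\frac34\right]$. Hence
\[
\lmodel\left(\varphi \ominus \tfrac14\right) = \max\left\{0, \tfrac{k+\ell}{2k} - \tfrac14\right\} = \tfrac{k+\ell}{2k} - \tfrac14 = \tfrac{k+2\ell}{4k} \in \left[0,\tfrac12\right].
\]
The maximum with $0$ is harmless because $\frac{k+\ell}{2k} \geq \frac14$. By the same argument,
\[
\lmodel\left(\psi \ominus \tfrac14\right) = \tfrac{k+2\ell'}{4k} \in \left[0,\tfrac12\right].
\]

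\textbf{Outer addition.} The sum of the two values is
\[
\tfrac{k+2\ell}{4k} + \tfrac{k+2\ell'}{4k} = \tfrac{2k+2\ell+2\ell'}{4k} = \tfrac{k+\ell+\ell'}{2k} = \scale_k(\ell+\ell').
\]
Each summand lies in $\left[0,\frac12\right]$, so the sum is at most $1$. Therefore the truncation $\min\{1,\cdot\}$ in the semantics of $\oplus$ does not apply, and $\lmodel(\varphi \oplus' \psi) = \scale_k(\ell+\ell')$.

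\textbf{Main obstacle.} The only point needing care is verifying that neither truncation, the $\max\{0,\cdot\}$ in $\ominus$ nor the $\min\{1,\cdot\}$ in $\oplus$, is triggered. Both checks reduce to the bounds $\scale_k(\ell) \in \left[\frac14,\frac34\right]$, which follow from monotonicity of $\scale_k$.
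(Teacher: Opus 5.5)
Your proposal is correct and follows essentially the same route as the paper: you use $\lmodel(\varphi),\lmodel(\psi)\in\left[\frac14,\frac34\right]$ to show that neither the $\max\{0,\cdot\}$ in $\ominus$ nor the $\min\{1,\cdot\}$ in $\oplus$ is active, and then compute the resulting sum as $\scale_k(\ell+\ell')$. The only cosmetic difference is that you bound each summand by $\frac12$, whereas the paper bounds the total by $\frac34+\frac34-\frac12=1$.
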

\begin{proof}
    We cover the case where $\cL$ is $\RPLM$ as the case where $\cL$ is $\LPH$ is analogous.
    By the semantics of $\ominus$, we have $\lmodel\left(\varphi \ominus \overline{\frac{1}{4}}\right) = \max\left\{0, \lmodel(\varphi) - \lmodel\left(\overline{\frac{1}{4}}\right)\right\}$. Because $\lmodel\left(\overline{\frac{1}{4}}\right) = \frac{1}{4}$ and  
    \[
        \textstyle{\lmodel(\varphi) = \scale_k(\ell) \geq \scale_k(-\frac{k}{2}) = \frac{k - \frac{k}{2}}{2k} = \frac{1}{4},}
    \]
    we have $\lmodel(\varphi) - \lmodel\left(\overline{\frac{1}{4}}\right) \geq \frac{1}{4} - \frac{1}{4} = 0$. Thus, we have $\lmodel\left(\varphi \ominus \overline{\frac{1}{4}}\right) = \lmodel(\varphi) - \frac{1}{4}$. Analogously, we see that $\lmodel\left(\psi \ominus \overline{\frac{1}{4}}\right) = \lmodel(\psi) - \frac{1}{4}$.

    Now, by the semantics of $\oplus$ we have $\lmodel(\varphi \oplus' \psi) = \min\left\{1, \lmodel\left(\varphi \ominus \overline{\frac{1}{4}}\right) + \lmodel\left(\psi \ominus \overline{\frac{1}{4}}\right)\right\}$. By the above, we have $\lmodel\left(\varphi \ominus \overline{\frac{1}{4}}\right) + \lmodel\left(\psi \ominus \overline{\frac{1}{4}}\right) = \left(\lmodel(\varphi) - \frac{1}{4}\right) + \left(\lmodel(\psi) - \frac{1}{4}\right) = \lmodel(\varphi) + \lmodel(\psi) - \half$, which implies $\lmodel(\varphi \oplus' \psi) = \min\left\{1, \lmodel(\varphi) + \lmodel(\psi) - \half\right\}$. Because $\ell, \ell' \leq \frac{k}{2}$, we have 
    \[
        \textstyle{\lmodel(\varphi), \lmodel(\psi) \leq \scale_k\left(\frac{k}{2}\right) = \frac{k+\frac{k}{2}}{2k} = \frac{3}{4}},
    \]
    which means that
    \[
        \textstyle{\lmodel(\varphi) + \lmodel(\psi) - \half \leq \frac{3}{4} + \frac{3}{4} - \half = 1.}
    \]
    Thus 
    \[
        \begin{aligned}
            \textstyle{\lmodel(\varphi \oplus' \psi)\,} &\textstyle{= \lmodel(\varphi) + \lmodel(\psi) - \half} \\
            &\textstyle{= \scale_k(\ell) + \scale_k(\ell') - \half} \\
            &\textstyle{= \frac{k + \ell}{2k} + \frac{k + \ell'}{2k} - \frac{k}{2k}} \\
            &\textstyle{= \frac{k + (\ell + \ell')}{2k}} \\
            &\textstyle{= \scale_k(\ell + \ell').}
        \end{aligned}
    \]
    This completes the proof.
\end{proof}

\begin{remark}
    The definition of $\oplus'$ may raise the following question: why not simply define $\varphi \oplus' \psi := (\varphi \oplus \psi) \ominus \overline{\half}$? The problem is that, while the formula would still give correct truth values when $\lmodel(\varphi) = \scale_k(\ell)$ and $\lmodel(\psi) = \scale_k(\ell')$
    for some $\ell, \ell' \in \left[-\frac{k}{2}, 0\right]$, 
    it would not give correct values when $\ell, \ell' \in \left]0, \frac{k}{2}\right]$, because in all such cases we would get 
    $\lmodel(\varphi \oplus' \psi) = \half = \scale_k(0) \neq \scale_k(\ell + \ell')$.
    The chosen definition of $\oplus'$ is intended to maximize the working range of the connective while also treating the two connected formulae $\varphi$ and $\psi$ symmetrically.
\end{remark}

Analogously to the connective $\itsum_n$, we define a second iterated sum operator $\itsum'_n$ for each $n \in \N$, this time corresponding to the connective $\oplus'$. Again, the operator $\itsum'_n$ corresponds to multiplying by an integer $n$, but this time the multiplication occurs in the larger interval $[-k,k]$ and the result is scaled down to $[0,1]$. For $n = 0$, we define $\itsum'_0 \varphi := \overline{\half}$, and for $n = 1$, we define $\itsum'_1 \varphi := \varphi$. 
Next, let $n > 1$ and assume we have defined $\itsum'_{m} \varphi$ for each non-negative integer $m < n$. 
We define
\[
    \textstyle{
        \itsum'_{n} \varphi := (\itsum'_{\lceil\frac{n}{2}\rceil} \varphi) \oplus' (\itsum'_{\lfloor\frac{n}{2}\rfloor} \varphi).
    }
\]

The next lemma shows that $\itsum'_{n}$ correctly simulates the multiplication of a real number in the interval $[-k,k]$ by $n$, assuming that the input is in the interval $\left[-\frac{k}{n+1}, \frac{k}{n+1}\right]$.
This restriction is natural, since 
$a \in \left[-\frac{k}{n+1}, \frac{k}{n+1}\right]$ implies $na \in \left[-\frac{nk}{n+1}, \frac{nk}{n+1}\right] \subset [-k,k]$.

\begin{lemma}\label{lemma: semantics of iterated plus'}
    Let $\cL$ be $\RPLM$ or $\LPH$.
    Let $n \in \N$, $k \in \R_+$, let $\lmodel$ be a valuation and $\varphi$ a formula of $\cL$ such that $\lmodel(\varphi) = \scale_k(\ell)$ for some $\ell \in \left[-\frac{k}{n+1}, \frac{k}{n+1}\right]$. Then we have
    $\lmodel(\itsum'_n \varphi) = \scale_k(n \ell)$. 
\end{lemma}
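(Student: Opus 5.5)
We argue by strong induction on $n$, following the recursive definition of $\itsum'_n$. For $n=0$, $\itsum'_0\varphi = \overline{\half}$, so $\lmodel(\itsum'_0\varphi) = \half = \scale_k(0) = \scale_k(0\cdot\ell)$. In the $\LPH$ case we read $\overline{\half}$ as the primitive constant, which has the same value. For $n=1$, $\itsum'_1\varphi=\varphi$, so the claim is just the hypothesis $\lmodel(\varphi)=\scale_k(\ell)$.

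For the inductive step, let $n>1$ and set $a=\lceil n/2\rceil$ and $b=\lfloor n/2\rfloor$, so $a+b=n$ and $1\le b\le a<n$. We have $\ell\in[-\frac{k}{n+1},\frac{k}{n+1}]$. Since $a,b\le n$, it follows that $\frac{k}{n+1}\le\frac{k}{a+1}$ and $\frac{k}{n+1}\le\frac{k}{b+1}$. Hence $\ell$ lies in the input ranges required for $a$ and $b$. The induction hypothesis then gives
\[
\lmodel(\itsum'_a\varphi)=\scale_k(a\ell) \quad\text{and}\quad \lmodel(\itsum'_b\varphi)=\scale_k(b\ell).
\]
Note that the hypothesis is applied to the same $\varphi$, $k$ and $\lmodel$; only the multiplier changes.

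It remains to invoke Lemma~\ref{lemma: addition formula} with the values $a\ell$ and $b\ell$. This requires $a\ell,b\ell\in[-\frac{k}{2},\frac{k}{2}]$, and checking this is the only real obstacle. Since $a\le\lceil n/2\rceil\le\frac{n+1}{2}$, we get
\[
|a\ell|\le\frac{n+1}{2}\cdot\frac{k}{n+1}=\frac{k}{2},
\]
and the same bound holds for $b\ell$ because $b\le a$. The bound $\lceil n/2\rceil\le (n+1)/2$ is exactly what the chosen input range $[-\frac{k}{n+1},\frac{k}{n+1}]$ provides, and it shows why the definition splits $n$ as $\lceil n/2\rceil+\lfloor n/2\rfloor$. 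Lemma~\ref{lemma: addition formula} then yields
\[
\lmodel(\itsum'_n\varphi)=\scale_k(a\ell+b\ell)=\scale_k(n\ell).
\]

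The argument is identical for $\RPLM$ and $\LPH$, since Lemma~\ref{lemma: addition formula} covers both logics, whichever version of $\oplus'$ is in use.
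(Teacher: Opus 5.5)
Your proof is correct and follows essentially the same route as the paper: strong induction on $n$ with base cases $n=0,1$, the induction hypothesis applied to $\lceil n/2\rceil$ and $\lfloor n/2\rfloor$, and Lemma~\ref{lemma: addition formula} applied after checking $\lceil n/2\rceil\ell,\lfloor n/2\rfloor\ell\in[-\frac{k}{2},\frac{k}{2}]$ via $\lceil n/2\rceil\le\frac{n+1}{2}$. Your explicit remark that $1\le b\le a<n$ is a small clarification that the paper leaves implicit.
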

\begin{proof}
    We prove the claim by induction over $n$. If $n = 0$, then $\itsum'_0 \varphi = \overline{\half}$ and thus we have $\lmodel(\itsum'_0 \varphi) = \half = \scale_k(0 \ell)$. If $n = 1$, then $\itsum'_1 \varphi = \varphi$ and $\lmodel(\itsum'_1 \varphi) = \lmodel(\varphi) = \scale_k(1\ell)$. Next, assume $n > 1$ and the claim holds for each $n' < n$. Assume that $\lmodel(\varphi) = \scale_k(\ell)$ for some $\ell \in \left[-\frac{k}{n+1}, \frac{k}{n+1}\right]$. By the definition of $\itsum_n'$, we have $\itsum_n' \varphi = (\itsum'_{\lceil\frac{n}{2}\rceil} \varphi) \oplus' (\itsum'_{\lfloor\frac{n}{2}\rfloor} \varphi)$.
    Because $\ell \in \left[-\frac{k}{n+1}, \frac{k}{n+1}\right]$ implies $\ell \in \left[-\frac{k}{\lceil\frac{n}{2}\rceil+1}, \frac{k}{\lceil\frac{n}{2}\rceil+1}\right]$ and $\ell \in \left[-\frac{k}{\lfloor\frac{n}{2}\rfloor+1}, \frac{k}{\lfloor\frac{n}{2}\rfloor+1}\right]$, by the induction hypothesis we have $\lmodel(\itsum'_{\lceil\frac{n}{2}\rceil} \varphi) = \scale_k(\lceil\frac{n}{2}\rceil\ell)$ and $\lmodel(\itsum_{\lfloor\frac{n}{2}\rfloor}' \varphi) = \scale_k(\lfloor\frac{n}{2}\rfloor\ell)$. We want to use Lemma~\ref{lemma: addition formula}, for which we need to show that $\lceil\frac{n}{2}\rceil\ell, \lfloor\frac{n}{2}\rfloor\ell \in \left[-\frac{k}{2},\frac{k}{2}\right]$.

    Because of the fact that $\lceil\frac{n}{2}\rceil, \lfloor\frac{n}{2}\rfloor \leq \frac{n+1}{2}$ and the assumption that $\ell \in \left[-\frac{k}{n+1}, \frac{k}{n+1}\right]$, we have $\lceil\frac{n}{2}\rceil\ell, \lfloor\frac{n}{2}\rfloor\ell \in \left[\frac{n+1}{2}(-\frac{k}{n+1}), \frac{n+1}{2}\frac{k}{n+1}\right] = \left[-\frac{k}{2}, \frac{k}{2}\right]$.
    Now by Lemma~\ref{lemma: addition formula}, we must have $\lmodel(\itsum_n \varphi) = \scale_k(\lceil\frac{n}{2}\rceil\ell+\lfloor\frac{n}{2}\rfloor\ell) = \scale_k(n\ell)$, as desired.
\end{proof}

Lastly, 
for $\RPLM$ and its fragments, 
we define the connective $\overline{\odot}^k$, which is tied to an integer parameter $k \in \Z_+$. 
The intuition behind $\overline{\odot}^k$ is to simulate the multiplication of two real numbers belonging to the interval $[-k,k]$ in the interval $[0,1]$. 
Formally, 
$\overline{\odot}^k$ is defined as follows:
\[
    \textstyle{\varphi \,\,\overline{\odot}^k \, \psi := \itsum_k\left(\left(\itsum_2(\varphi \odot \psi) \oplus \overline{\frac{1}{2k}}\right) \ominus (\varphi \oplus' \psi)\right).}
\]
Again, for $\LPH$ and its fragments, we instead define $\underline{\odot}^k$ whose definition is otherwise the same as $\overline{\odot}^k$ except that we again replace the truth constant $\overline{\frac{1}{2k}}$ with the formula $\underline{\frac{1}{2k}}$. As with $\oplus'$, for the rest of the paper the logic is always clear from the context, and so we simply write $\odot^k$. 

The next lemma shows that $\odot^k$ indeed correctly simulates the multiplication of reals in $[-k,k]$, assuming the inputs are in the interval $\bigl[-\sqrt{k}, \sqrt{k}\,\bigr]$. The restriction of inputs is once again natural, since $a,b \in \bigl[-\sqrt{k}, \sqrt{k}\,\bigr]$ implies $ab \in [-k,k]$.

\begin{lemma}\label{lemma: multiplication formula}
    Let $\cL$ be $\RPLM$ or $\LPH$.
    For all integers $k \geq 8$, valuations~$\lmodel$ and formulae $\varphi$ and $\psi$ of $\cL$: if $\lmodel(\varphi) = \scale_k(\ell)$ and $\lmodel(\psi) = \scale_k(\ell')$ for some $\ell, \ell' \in \bigl[-\sqrt{k}, \sqrt{k}\,\bigr]$, then $\lmodel(\varphi \odot^k \psi) = \scale_k(\ell \ell' )$. 
\end{lemma}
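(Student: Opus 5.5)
The plan is a direct computation of the truth value of $\varphi \odot^k \psi$, layer by layer. At each step I show that the truncations in $\oplus$, $\ominus$ and $\itsum_n$ (Lemma~\ref{lemma: scalar multiplication formula}) are inactive, so every connective behaves as plain real arithmetic. Write $a = \lmodel(\varphi) = \frac{k+\ell}{2k}$ and $b = \lmodel(\psi) = \frac{k+\ell'}{2k}$. I treat $\RPLM$; the $\LPH$ case is identical, using Lemma~\ref{lemma: truth constants in LPH} for $\underline{\frac{1}{2k}}$ and $\underline{\frac14}$.

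First expand the product:
\[
2ab = \frac{(k+\ell)(k+\ell')}{2k^2} = \frac12 + \frac{\ell+\ell'}{2k} + \frac{\ell\ell'}{2k^2}.
\]
Adding $\frac{1}{2k}$ and bounding with $|\ell|,|\ell'| \leq \sqrt{k}$ gives
\[
2ab + \frac{1}{2k} \leq \frac12 + \frac{2\sqrt{k}}{2k} + \frac{k}{2k^2} + \frac{1}{2k} = \frac12 + \frac{1+\sqrt{k}}{k}.
\]
This is at most $1$ exactly when $2 + 2\sqrt{k} \leq k$, which holds for $k \geq 8$. This is where the hypothesis $k \geq 8$ enters, and it is the one genuinely delicate estimate. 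Since the sum is at most $1$, it also follows that $2ab \leq 1$. Hence $\lmodel(\itsum_2(\varphi\odot\psi)) = 2ab$ with no truncation, and the outer $\oplus \overline{\frac{1}{2k}}$ is also untruncated.

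Next, since $k \geq 8$ we have $\sqrt{k} \leq \frac{k}{2}$, so $\ell,\ell' \in \left[-\frac{k}{2},\frac{k}{2}\right]$. Lemma~\ref{lemma: addition formula} then gives
\[
\lmodel(\varphi \oplus' \psi) = \scale_k(\ell+\ell') = \frac12 + \frac{\ell+\ell'}{2k}.
\]
Subtracting this from $2ab + \frac{1}{2k}$ leaves $\frac{1}{2k} + \frac{\ell\ell'}{2k^2} = \frac{k+\ell\ell'}{2k^2}$. This is nonnegative because $|\ell\ell'| \leq k$, so the $\ominus$ is not truncated.

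Finally, by Lemma~\ref{lemma: scalar multiplication formula}, $\itsum_k$ multiplies by $k$ and truncates at $1$. The product is $\frac{k+\ell\ell'}{2k}$, which is at most $1$ again because $\ell\ell' \leq k$. So the truncation is inactive and the value is exactly $\scale_k(\ell\ell')$. The main obstacle is the first bound, verifying $\frac12 + \frac{1+\sqrt{k}}{k} \leq 1$; every other step is a routine check using $|\ell\ell'| \leq k$.
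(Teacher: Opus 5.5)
Your proposal is correct and follows the paper's proof essentially step for step. Both compute the value layer by layer and check that the truncations in $\itsum_2$, $\oplus\,\overline{\frac{1}{2k}}$, $\ominus$ and $\itsum_k$ are all inactive, with the decisive bound $k \geq 4+2\sqrt{3}$ coming from the $\oplus\,\overline{\frac{1}{2k}}$ step. The only cosmetic difference is that you obtain $2ab \leq 1$ from that stronger bound, whereas the paper checks it separately against the threshold $3+2\sqrt{2}$.
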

\begin{proof}
    We show the case where $\cL$ is $\RPLM$ as the case where $\cL$ is $\LPH$ is analogous.
    For convenience, we let $\theta_{\text{left}} := \itsum_2(\varphi \odot \psi) \oplus \overline{\frac{1}{2k}}$ and $\theta_{\text{right}} := \varphi \oplus' \psi$. This means that $\varphi \odot^k \psi = \itsum_k(\theta_{\text{left}} \ominus \theta_{\text{right}})$. 
    We start by calculating $\lmodel(\theta_{\text{left}})$, for which we must first calculate $\lmodel(\itsum_2(\varphi \odot \psi))$ and $\lmodel(\varphi \odot \psi)$.
    First, by definition we get
    \[
        \textstyle{\lmodel(\varphi \odot \psi) = \lmodel(\varphi) \lmodel(\psi) = \scale_k(\ell) \scale_k(\ell') = \frac{k+\ell}{2k}\frac{k+\ell'}{2k} = \frac{k^2 + k(\ell + \ell') + \ell\ell'}{4k^2}. }
    \]
    By Lemma~\ref{lemma: scalar multiplication formula}, we have $\lmodel(\itsum_2(\varphi \odot \psi)) = \min\{1, 2\lmodel(\varphi \odot \psi)\}$. Because $\ell, \ell' \in \bigl[-\sqrt{k}, \sqrt{k}\,\bigr]$ we find that 
    \[
        \begin{aligned}
            \textstyle{2 \lmodel(\varphi \odot \psi)\,} &\textstyle{= 2 \frac{k^2 + k(\ell + \ell') + \ell\ell'}{4k^2}} \\
            &\textstyle{= \frac{k^2 + k(\ell + \ell') + \ell\ell'}{2k^2}} \\
            &\textstyle{\leq \frac{k^2 + k\big{(}\sqrt{k} + \sqrt{k}\big{)} + \sqrt{k}\sqrt{k}}{2k^2}} \\
            &\textstyle{= \frac{k^2 + 2k\sqrt{k} + k}{2k^2}} \\
            &\textstyle{= \frac{k+2\sqrt{k}+1}{2k},}
        \end{aligned}
    \]
    which is at most $1$ when $k \geq 3 + 2\sqrt{2} \approx 5.83$. 
    Because of the assumption that $k \geq 8$, we have $2 \lmodel(\varphi \odot \psi) \leq 1$ and thus $\lmodel(\itsum_2(\varphi \odot \psi)) = \frac{k^2 + k(\ell + \ell') + \ell\ell'}{2k^2}$. Next, by the semantics of~$\oplus$ we have $\lmodel\left(\theta_{\text{left}}\right) = \min\left\{1, \lmodel(\itsum_2(\varphi \odot \psi)) + \lmodel\left(\overline{\frac{1}{2k}}\right)\right\}$. Utilising the analysis above, we find that 
    \[
        \begin{aligned}
            \textstyle{\lmodel(\itsum_2(\varphi \odot \psi)) + \lmodel\left(\,\overline{\frac{1}{2k}}\,\right)\,} &\textstyle{= \frac{k^2 + k(\ell + \ell') + \ell\ell'}{2k^2} + \frac{1}{2k}} \\
            &\textstyle{\leq \frac{k+2\sqrt{k}+1}{2k} + \frac{1}{2k}} \\
            &\textstyle{= \frac{k+2\sqrt{k}+2}{2k}},
        \end{aligned}
    \]
    which is at most $1$ when $k \geq 4 + 2\sqrt{3} \approx 7.46$. Again, because of the assumption $k \geq 8$, we see that $\lmodel(\itsum_2(\varphi \odot \psi)) + \lmodel\left(\,\overline{\frac{1}{2k}}\,\right) \leq 1$ and thus 
    \[
        \textstyle{\lmodel\left(\theta_{\text{left}}\right) = \frac{k^2 + k(\ell + \ell') + \ell\ell'}{2k^2} + \frac{1}{2k} = \frac{k^2 + k(\ell + \ell') + k + \ell\ell'}{2k^2}.}
    \]
    Next, we calculate $\lmodel(\theta_{\text{right}})$.
    We see that $\sqrt{k} \leq \frac{k}{2}$ when $k \geq 4$ and thus $\ell, \ell' \in \bigl[-\sqrt{k}, \sqrt{k}\,\bigr]$ implies $\ell, \ell' \in \left[-\frac{k}{2}, \frac{k}{2}\right]$, which means we can apply Lemma~\ref{lemma: addition formula} to $\theta_{\text{right}}$, giving us 
    \[
        \textstyle{\lmodel(\theta_{\text{right}}) = \scale_k(\ell + \ell') = \frac{k + (\ell + \ell')}{2k} = \frac{k^2 + k(\ell + \ell')}{2k^2}.}
    \]
    We finish the proof by calculating $\lmodel(\varphi \odot^k \psi)$, for which we first calculate $\lmodel(\theta_{\text{left}} \ominus \theta_{\text{right}})$.
    By the semantics of $\ominus$, we have $\lmodel(\theta_{\text{left}} \ominus \theta_{\text{right}}) = \max\{0, \lmodel(\theta_{\text{left}}) - \lmodel(\theta_{\text{right}})\}$. We see that
    \[
        \textstyle{\lmodel(\theta_{\text{left}}) - \lmodel(\theta_{\text{right}}) = \frac{k^2 + k(\ell + \ell') + k + \ell\ell'}{2k^2} - \frac{k^2 + k(\ell + \ell')}{2k^2} = \frac{k + \ell\ell'}{2k^2} \geq \frac{k -\sqrt{k}\sqrt{k}}{2k^2} = 0,}
    \]
    which means that $\lmodel(\theta_{\text{left}} \ominus \theta_{\text{right}}) = \lmodel(\theta_{\text{left}}) - \lmodel(\theta_{\text{right}}) = \frac{k + \ell\ell'}{2k^2}$.
    Lastly, by Lemma~\ref{lemma: scalar multiplication formula}, we have $\lmodel(\varphi \odot^k \psi) = \min\{1, k\,\lmodel(\theta_{\text{left}} \ominus \theta_{\text{right}})\} = \min\left\{1, \frac{k + \ell\ell'}{2k}\right\}$. Because $\ell, \ell' \in \bigl[-\sqrt{k}, \sqrt{k}\,\bigr]$, we get $\frac{k + \ell\ell'}{2k} \leq \frac{k + \sqrt{k}\sqrt{k}}{2k} = 1$, which gives us $\lmodel(\varphi \odot^k \psi) = \frac{k + \ell\ell'}{2k} = \scale_k(\ell \ell')$, as desired.
\end{proof}

The following lemma extends Lemma~\ref{lemma:degrees of formulae} to the auxiliary connectives defined in this section, relating them to the fragments of $\RPLM$ and $\LPHF$ defined in Section~\ref{section: fragments}. Importantly, it shows that the scaled sum $\oplus'$ and multiplication $\odot^{k}$ do not differ from the ordinary sum $\oplus$ and multiplication $\odot$ in terms of how they relate to the fragments.

\begin{lemma}\label{lemma: degrees of special formulae}
    Let $\cL$ be $\RPLM$ or $\LPHF$.
    Let $i,j \in \N$, and let $\varphi$ be a formula of $\cL_{\leq i}$ and $\psi$ a formula of $\cL_{\leq j}$. Then the following hold.
    \begin{itemize}
        \item For each $n \in \N$, $\itsum_n \varphi$ and $\itsum'_n \varphi$ are formulae of $\cL_{\leq i}$.
        \item $\varphi \oplus' \psi$ is a formula of $\cL_{\leq \max\{i, j\}}$.
        \item For each $k \in \Z_+$, $\varphi \odot^k \psi$ is a formula of $\cL_{\leq i + j}$.
    \end{itemize}
\end{lemma}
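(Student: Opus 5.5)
The plan is to unfold each auxiliary connective into its definition and then apply Lemma~\ref{lemma:degrees of formulae} repeatedly. Every auxiliary connective is built from $\oplus$, $\ominus$, $\odot$ and truth constants. Throughout we use the monotonicity of the hierarchy: any formula of $\cL_{\leq m}$ is also a formula of $\cL_{\leq m'}$ for all $m' \geq m$. This holds because each fragment is subsumed by the next.

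For the iterated sums we argue by induction on $n$. For $\itsum_n$, the base case is $\itsum_0 \varphi = \overline{0}$. This is a formula of $\cL_{\leq 0}$, using $\underline{0} = \overline{0}$ in the case of $\LPHF$, and hence of $\cL_{\leq i}$. For the step, $\itsum_{n+1}\varphi = (\itsum_n \varphi) \oplus \varphi$ lies in $\cL_{\leq \max\{i,i\}} = \cL_{\leq i}$ by the $\oplus$ clause of Lemma~\ref{lemma:degrees of formulae}.

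For $\itsum'_n$ we use strong induction. The base cases are $\itsum'_0\varphi = \overline{\half}$, a formula of $\cL_{\leq 0}$ (in $\LPHF$ via $\underline{\half} = \overline{\half}$), and $\itsum'_1\varphi = \varphi$. For the step we need the $\oplus'$ clause with both arguments in $\cL_{\leq i}$. So we first prove the $\oplus'$ clause independently; the strong induction on $n$ then goes through.

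For $\varphi \oplus' \psi = (\varphi \ominus \overline{\tfrac14}) \oplus (\psi \ominus \overline{\tfrac14})$ we proceed as follows. First, $\overline{\tfrac14}$ (respectively $\underline{\tfrac14}$) lies in $\cL_{\leq 0}$ by the first clause of Lemma~\ref{lemma:degrees of formulae}. Then $\varphi \ominus \overline{\tfrac14} \in \cL_{\leq i}$ and $\psi \ominus \overline{\tfrac14} \in \cL_{\leq j}$. Finally, the $\oplus$ clause gives $\cL_{\leq \max\{i,j\}}$.

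For $\varphi \odot^k \psi$ we work through the definition layer by layer:
\begin{itemize}
\item $\varphi \odot \psi \in \cL_{\leq i+j}$ by the $\odot$ clause.
\item $\itsum_2(\varphi\odot\psi) \in \cL_{\leq i+j}$ by the first bullet.
\item Adding the constant $\overline{\tfrac{1}{2k}} \in \cL_{\leq 0}$ with $\oplus$ stays in $\cL_{\leq i+j}$.
\item $\varphi \oplus' \psi \in \cL_{\leq \max\{i,j\}}$, by the $\oplus'$ clause.
\item The $\ominus$ of these two formulae lies in $\cL_{\leq \max\{i+j,\max\{i,j\}\}} = \cL_{\leq i+j}$.
\item Applying $\itsum_k$ keeps the formula in $\cL_{\leq i+j}$, by the first bullet.
\end{itemize}

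No step is a genuine obstacle. The only care needed is ordering the claims so that the $\oplus'$ clause is available before the $\itsum'_n$ induction. One must also check that the constants $\underline{r}$ used in the $\LPHF$ versions are proposition-free, hence in $\cL_{\leq 0}$, which the first clause of Lemma~\ref{lemma:degrees of formulae} already guarantees.
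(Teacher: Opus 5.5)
Your proposal is correct and takes the same approach as the paper, whose proof only says the claim is straightforward from the definitions of the operators, the definition of the fragments $\cL_{\leq n}$, and Lemma~\ref{lemma:degrees of formulae}. You have written that unfolding out explicitly, including the needed step of proving the $\oplus'$ clause before the strong induction for $\itsum'_n$.
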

\begin{proof}
    The proof is straightforward by the definitions of the operators and the fragments $\cL_{\leq n}$ and by Lemma~\ref{lemma:degrees of formulae}.
\end{proof}

\section{Proof of Theorem~\ref{theorem: R-algebra = RPLM = LPHF}}\label{appendix: Characterising coolname}

In this section, we prove Theorem~\ref{theorem: R-algebra = RPLM = LPHF}. First, we recall the theorem. 

\firsttheorem*

We start by defining two auxiliary notions regarding terms of \coolname: atomic length and subterm. Intuitively, subterms are those components of a term that are themselves terms, and atomic length refers to how many of these subterms are atomic (i.e., either rational constants or variables).
More formally, we define the \defstyle{atomic length} of a term $t$, denoted by $\length(t)$, as follows: 
\begin{itemize}
    \item If $t = \overline{r}$ for some $r \in \Q$ or $t = x$ for some $x \in \VAR$, then $\length(t) = 1$.
    \item If $t = t' + t''$ or $t = t' t''$ for terms $t'$ and $t''$, then $\length(t) = \length(t') + \length(t'')$.
    \item If $t = \ReLU(t')$ for some term $t'$, then $\length(t) = \length(t')$.
\end{itemize}
Likewise, we define the set $\subt(t)$ of \defstyle{subterms} of a term $t$ as follows: 
\begin{itemize}
    \item If $t = \overline{r}$ for some $r \in \Q$ or $t = x$ for some $x \in \VAR$, then $\subt(t) = \{t\}$.
    \item If $t = t' + t''$ or $t = t' t''$ for terms $t'$ and $t''$, then $\subt(t) = \subt(t') \cup \subt(t'') \cup \{t\}$.
    \item If $t = \ReLU(t')$ for some term $t'$, then $\subt(t) = \subt(t') \cup \{t\}$.
\end{itemize}

Before proving the theorem, we establish two lemmas. In the first lemma, we give a constructive translation from terms of \coolname to formulae of $\RPLM$ and $\LPHF$.

\begin{lemma}\label{lemma: terms to logic}
    Let $\cL$ be $\RPLM$ or $\LPHF$.
    For each term $t$ of \coolname and each $i \in \R_+$, there exists some $K \in \Z_+$ such that for each integer $k \geq K$ we can construct an $(i,k)$-equivalent formula $\varphi_t$ of $\cL$. 
    For all $d \in \N$, if
    $\deg(t) \leq d$, 
    then $\varphi_t$ is a formula of~$\cL_{\leq d}$.
\end{lemma}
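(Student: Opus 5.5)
Given $t$ and $i$, the plan is to first compute a real bound $B_s$ on the absolute value of every subterm $s \in \subt(t)$ under all evaluation maps with $\tmodel(x) \in [-i,i]$. These bounds are defined recursively:
\begin{itemize}
\item $B_{\overline{r}} = |r|$ and $B_x = i$;
\item $B_{s+s'} = B_s + B_{s'}$ and $B_{ss'} = B_s B_{s'}$;
\item $B_{\ReLU(s)} = B_s$.
\end{itemize}
Let $M := \max_{s \in \subt(t)} B_s$. We take $K$ to be an integer with $K \geq 8$, $K \geq i$, and $K \geq M^2$; since $M \geq 1$ may be assumed, this also gives $K \geq 2M$. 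We then fix an arbitrary integer $k \geq K$ and define $\varphi_s$ for every $s \in \subt(t)$ by recursion on $s$. By induction on $s$, we prove the invariant: for every admissible $\tmodel$ and the matching $\lmodel$ with $\lmodel(p) = \scale_k(\tmodel(x_p))$, we have $\lmodel(\varphi_s) = \scale_k(\tmodel(s))$. Simultaneously, we show that $\varphi_s \in \cL_{\leq \deg(s)}$, using Lemmas~\ref{lemma:degrees of formulae} and~\ref{lemma: degrees of special formulae}.

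The recursive clauses are as follows. For a variable, $\varphi_x := p_x$, which lies in $\cL_{\leq 1}$ and is correct because $i \leq k$. For a sum, $\varphi_{s+s'} := \varphi_s \oplus' \varphi_{s'}$; this is correct by Lemma~\ref{lemma: addition formula}, since $|\tmodel(s)|, |\tmodel(s')| \leq M \leq k/2$. For a product, $\varphi_{ss'} := \varphi_s \odot^k \varphi_{s'}$; this is correct by Lemma~\ref{lemma: multiplication formula}, since $M \leq \sqrt{k}$ and $k \geq 8$. For $\ReLU(s)$ we use $\varphi_{\ReLU(s)} := \varphi_s \lor \overline{\half}$ (or $\underline{\half}$), which is correct because $\scale_k$ is increasing and $\scale_k(0) = \half$; this formula lies in $\cL_{\leq \deg(s)}$ because $\lor$ is built from $\impl$ alone.

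The constants $\overline{r}$ need more care. We need a formula $\varphi_{\overline r} \in \cL_{\leq 0}$ with value $\scale_k(r) = \frac{k+r}{2k}$, which is rational and lies in $[0,1]$ because $|r| \leq M \leq k$. In $\RPLM$ we simply take the truth constant $\overline{(k+r)/2k}$, and in $\LPHF$ we take $\underline{(k+r)/2k}$ from Appendix~\ref{appendix: defining rationals}; both lie in the $\leq 0$ fragment by Lemma~\ref{lemma:degrees of formulae}. It is at this step that rationality of the parameters is used in the $\LPHF$ case.

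The main obstacle is making sure that the hypotheses of the auxiliary-connective lemmas hold at every node of the term. This is the reason for the global bound $M$ and the choice $K \geq \max\{8, i, M^2, 2M\}$. Because any larger integer $k$ still satisfies these inequalities, the construction works for every $k \geq K$, as the statement requires. Finally, degree preservation needs a separate check in two places. For products, Lemma~\ref{lemma: degrees of special formulae} gives $\cL_{\leq \deg(s)+\deg(s')}$. For sums, it gives $\cL_{\leq \max\{\deg(s), \deg(s')\}}$. Hence $\deg(t) \leq d$ implies $\varphi_t \in \cL_{\leq d}$, because the fragments form a hierarchy.
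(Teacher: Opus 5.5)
Your proof is correct and follows the paper's construction essentially step for step. You use the same recursive clauses ($\overline{\scale_k(r)}$ or $\underline{\scale_k(r)}$, $p_x$, $\oplus'$, $\odot^k$) and a global bound to secure the hypotheses of Lemmas~\ref{lemma: addition formula} and~\ref{lemma: multiplication formula}, and the differences are cosmetic. Your recursive bounds $B_s$ replace the paper's $j^{\length(s)}$, your $\varphi_s \lor \overline{\half}$ is equivalent to the paper's $(\varphi_s \ominus \overline{\half}) \oplus \overline{\half}$ for $\ReLU$, and you handle $\LPHF$ directly rather than via Propositions~\ref{proposition: RPLM_d = LPHF_d} and~\ref{proposition: RPLM = LPHF}. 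One small slip: $K \geq \max\{8, M^2\}$ gives $K \geq 2M$ because of the $8$, not because $M \geq 1$, but since you list $2M$ in the maximum anyway nothing is affected.
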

\begin{proof}
    The strategy is to recursively build an $(i,k)$-equivalent formula for each subterm of~$t$ including~$t$ itself.
    We show the proof for the case where $\cL$ is $\RPLM$. The case where $\cL$ is $\LPHF$ follows by Propositions~\ref{proposition: RPLM_d = LPHF_d} and \ref{proposition: RPLM = LPHF}.
    
    Let $t$ be a term and let $i \in \R_+$. Let $r_{\max}$ be the largest rational number such that $\overline{r_{\max}}$ or $\overline{-r_{\max}}$ is a subterm of $t$ and let $j := \max\{i, r_{\max}, 2\}$. For subterms $t'$ of $t$, it is easy to show by induction over the structure of $t'$ that given any evaluation map $\tmodel$ such that $\tmodel(x) \in [-i,i]$ for all $x \in \VAR$, we have $|\tmodel(t')| \leq j^{\length(t')}$. 
    Now let $K := \max\{j^{2\length(t)}, 8\}$ and let $k \geq K$ be an integer.

    For each subterm $s$ of $t$, we recursively construct a formula $\varphi_{s}$ that is $(i,k)$-equivalent to $s$ as follows:
    \begin{itemize}
        \item If $s = \overline{r}$ for some $r \in \Q$, then $|r| \leq r_{\max} \leq j \leq K \leq k$, which implies that $\scale_k(r) \in [0,1]$. Moreover, because $k, r \in \Q$, we must have $\scale_k(r) = \frac{k+r}{2k} \in \Q$ and therefore $\scale_k(r) \in \Q \cap [0,1]$. We define $\varphi_s := \overline{{\scale_k(r)}}$.
        \item If $s = x$ for some $x \in \VAR$, then $\varphi_s := p_x$. 
        \item If $s = t' + t''$ for some terms $t'$ and $t''$, then $\varphi_s := \varphi_{t'} \oplus' \varphi_{t''}$. 
        \item If $s = t' t''$ for some terms $t'$ and $t''$, then $\varphi_s := \varphi_{t'} \odot^k \varphi_{t''}$ (which is defined because $k \in \Z_+$). 
        \item If $s = \ReLU(t')$ for some term $t'$, then $\varphi_s := \left(\varphi_{t'} \ominus \overline{\half}\right) \oplus \overline{\half}$. 
    \end{itemize}

    We now prove the correctness of our construction. Let $s$ be a subterm of $t$, let $\tmodel$ be an evaluation map such that $\tmodel(x) \in [-i,i]$ for each $x \in \VAR$, and let $\lmodel$ be the valuation such that $\lmodel(p) = \scale_k(\tmodel(x_p))$ for each $p \in \PROP$. We have to show $\lmodel(\varphi_s) = \scale_k(\tmodel(s))$. We prove the claim by induction over the structure of $s$. First, if $s = \overline{r}$ for some $r \in \Q$, then $\lmodel(\varphi_s) = \lmodel\left(\overline{\scale_k(r)}\right) = \scale_k(r) = \scale_k(\tmodel(s))$.
    Next, if $s = x$ for some $x \in \VAR$, then the claim is true by the definition of the valuation $\lmodel$.

    Next, assume that the claim holds for the terms $t'$ and $t''$, i.e., $\lmodel(\varphi_{t'}) = \scale_k(\tmodel(t'))$ and $\lmodel(\varphi_{t''}) = \scale_k(\tmodel(t''))$. Before delving into further steps, we show that the conditions of Lemmas~\ref{lemma: addition formula} and \ref{lemma: multiplication formula} are met. First, we already know that $|\tmodel(t')| \leq j^{\length(t')}$. Because $\length(t') \leq \length(t)$, we have $|\tmodel(t')| \leq j^{\length(t)}$. Since $j^{\length(t)} = \sqrt{j^{2\length(t)}}$, we get $|\tmodel(t')| \leq \sqrt{j^{2\length(t)}}$. Now, because $k \geq K \geq j^{2\length(t)}$, we have $|\tmodel(t')| \leq \sqrt{k}$. Analogously,  we see that $|\tmodel(t'')| \leq \sqrt{k}$. This means that $\tmodel(t'), \tmodel(t'') \in \bigl[-\sqrt{k}, \sqrt{k}\,\bigr]$. Furthermore, since $k \geq K \geq 8 > 4$, we have $\sqrt{k} < \frac{k}{2} < k$, which means that $\tmodel(t'), \tmodel(t'') \in \left[-\frac{k}{2}, \frac{k}{2}\right]$ and $\tmodel(t'), \tmodel(t'') \in \left[-k, k\right]$. Now, we are ready to cover the remaining cases.
    \begin{itemize}
        \item If $s = t' + t''$, then $\varphi_s = \varphi_{t'} \oplus' \varphi_{t''}$. 
        Because it is the case that $\tmodel(t'), \tmodel(t'') \in \left[-\frac{k}{2}, \frac{k}{2}\right]$, we may apply Lemma~\ref{lemma: addition formula}, which gives us
        $\lmodel(\varphi_s) = \scale_k(\tmodel(t')+\tmodel(t''))$. Now because $\tmodel(t')+\tmodel(t'') = \tmodel(t' + t'') = \tmodel(s)$, we have $\lmodel(\varphi_s) = \scale_k(\tmodel(s))$. 
        \item If $s = t' t''$, then $\varphi_s = \varphi_{t'} \odot^{k} \varphi_{t''}$. 
        Because it is the case that $\tmodel(t'), \tmodel(t'') \in \bigl[-\sqrt{k}, \sqrt{k}\,\bigr]$, by Lemma~\ref{lemma: multiplication formula}
        $\lmodel(\varphi_{s}) = \scale_k(\tmodel(t')\tmodel(t''))$. Because $\tmodel(t')\tmodel(t'') = \tmodel(t't'') = \tmodel(s)$, we have $\lmodel(\varphi_{s}) = \scale_k(\tmodel(s))$. 
        \item If $s = \ReLU(t')$, then $\varphi_s = \left(\varphi_{t'} \ominus \overline{\half}\right) \oplus \overline{\half}$. By the semantics of $\oplus$, $\ominus$ and $\overline{\half}$, we have
        \[
            \begin{aligned}
                \textstyle{\lmodel(\varphi_s)\,} &\textstyle{= \lmodel\left(\left(\varphi_{t'} \ominus \overline{\half}\right) \oplus \overline{\half}\right)} \\
                &\textstyle{= \min\left\{1, \lmodel\left(\varphi_{t'} \ominus \overline{\half}\right) + \half\right\}} \\
                &\textstyle{= \min\left\{1, \max\left\{0, \lmodel\left(\varphi_{t'}\right) - \half\right\} + \half\right\}.}
            \end{aligned}
        \]
        Because $\max\{a, b\} + c = \max\{a+c, b+c\}$ for all $a,b,c \in \R$, we get 
        \[
            \textstyle{\lmodel(\varphi_s) = \min\left\{1, \max\left\{\half, \lmodel\left(\varphi_{t'}\right)\right\}\right\}.}
        \]
        By the induction hypothesis, we have $\lmodel\left(\varphi_{t'}\right) = \scale_k(\tmodel(t'))$ and thus
        \[
            \textstyle{\lmodel(\varphi_s)} \textstyle{= \min\left\{1, \max\left\{\half, \scale_k(\tmodel(t'))\right\}\right\}}.
        \]
        There are two cases to explore: $\tmodel(t') < 0$ and $\tmodel(t') \geq 0$. 
        \begin{enumerate}
            \item If $\tmodel(t') < 0$, then $\scale_k(\tmodel(t')) < \half$ and therefore,
            $\max\left\{\half, \scale_k(\tmodel(t'))\right\} = \half$ and $\lmodel(\varphi_s) = \min\left\{1, \half\right\} = \half$.
            Because $\half = \scale_k(0)$, we have $\lmodel(\varphi_s) = \scale_k(0)$. Because $\tmodel(t') < 0$, we have $\tmodel(s) = \tmodel(\ReLU(t')) = 0$ and $\lmodel(\varphi_s) = \scale_k(\tmodel(s))$.
            \item If $\tmodel(t') \geq 0$, then $\scale_k(\tmodel(t')) \geq \half$ and $\max\left\{\half, \scale_k(\tmodel(t'))\right\} = \scale_k(\tmodel(t'))$, and thus $\lmodel(\varphi_s) = \min\{1,\scale_k(\tmodel(t'))\}$.
            Because $\tmodel(t') \in \left[-k, k\right]$, we know that $\scale_k(\tmodel(t')) \in [0,1]$, which means that $\min\left\{1, \scale_k(\tmodel(t'))\right\} = \scale_k(\tmodel(t'))$ and thus $\lmodel(\varphi_s) = \scale_k(\tmodel(t'))$.
            Finally, because $\tmodel(t') \geq 0$, we must have $\tmodel(s) = \tmodel(\ReLU(t')) = \tmodel(t')$ and therefore $\lmodel(\varphi_s) = \scale_k(\tmodel(s))$.
        \end{enumerate}
    \end{itemize}

    We have now shown the correctness of our construction.
    The claim ``For all $d \in \N$, if $\deg(t) \leq d$, then $\varphi_t$ is a formula of $\cL_{\leq d}$'' now follows from the definition of degree, our construction and Lemmas~\ref{lemma:degrees of formulae} and \ref{lemma: degrees of special formulae} by a simple induction over the structure of $t$.
\end{proof}

In the next lemma, we give an opposite translation from formulae of $\RPLM$ and $\LPHF$ to terms of \coolname.

\begin{lemma}\label{lemma: logic to terms}
    Let $\cL$ be $\RPLM$ or $\LPHF$.
    For each formula $\varphi$ of $\cL$, we can construct an equivalent term $t_{\varphi}$ of \coolname. For all $d \in \N$,
    if $\varphi$ is a formula of $\cL_{\leq d}$,
    then $\deg(t_{\varphi}) \leq d$.
\end{lemma}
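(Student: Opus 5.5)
By Proposition~\ref{proposition: RPLM = LPHF} and Proposition~\ref{proposition: RPLM_d = LPHF_d}, every formula of $\LPHF$ (resp.\ $\LPHF_{\leq d}$) is equivalent to a formula of $\RPLM$ (resp.\ $\RPLM_{\leq d}$). So it suffices to treat $\cL = \RPLM$. We build $t_\varphi$ by recursion on $\varphi$ and prove $\tmodel(t_\varphi) = \lmodel(\varphi)$ by induction, whenever $\tmodel(x) = \lmodel(p_x)$ for all $x$. The key point is that the only nonlinear operation of the {\L}ukasiewicz part, the truncation $\min\{1,\cdot\}$, is expressible with $\ReLU$ via $\min\{1,a\} = 1 - \ReLU(1-a)$.

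The translation clauses are as follows:
\begin{itemize}
\item For a truth constant, $t_{\overline{r}} := \overline{r}$, which has degree $0$.
\item For a proposition symbol, $t_{p} := x_p$, which has degree $1$.
\item For $\odot$, $t_{\alpha\odot\beta} := t_\alpha t_\beta$.
\item For $\impl$, \[ t_{\alpha\impl\beta} := \overline{1} - \ReLU\bigl(t_\alpha - t_\beta\bigr). \] This is correct because $\min\{1, 1 - a + b\} = 1 - \max\{0, a - b\}$.
\end{itemize}
Correctness in each case is immediate from the induction hypothesis and the semantics. No range issue arises, since all values already lie in $[0,1]$ and no scaling is involved.

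For the degree claim, I first record $\deg(t_{\alpha\impl\beta}) = \max\{\deg(t_\alpha), \deg(t_\beta)\}$. The constants $\overline{1}$ and $\overline{-1}$ have degree $0$, so $\overline{-1}\cdot t_\beta$ has degree $\deg(t_\beta)$, and $\ReLU$ preserves degree. I also record $\deg(t_{\alpha\odot\beta}) = \deg(t_\alpha) + \deg(t_\beta)$. Then I prove by induction on $n$, and within each $n$ on the grammar of $\RPLM_{\leq n}$, that every formula of $\RPLM_{\leq n}$ translates to a term of degree at most $n$. There are three cases:
\begin{itemize}
\item In $\RPLM_{\leq 0}$ there are no proposition symbols, so all leaves are constants and every clause keeps the degree at $0$.
\item In the step to $n+1$, the cases $p$ and $\psi \in \RPLM_{\leq n}$ are immediate, the latter by the outer induction hypothesis. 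For $\alpha \odot \beta$ with $\alpha \in \RPLM_{\leq i}$, $\beta \in \RPLM_{\leq j}$ and $i + j \leq n+1$, I apply the claim for $i, j \leq n+1$ and add the degrees.
\item For $\varphi \impl \varphi'$ the degree is the maximum of the two degrees.
\end{itemize}

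The main subtlety is the $\odot$ clause, where $i$ or $j$ may equal $n+1$, so the outer induction hypothesis does not cover it. I would handle this with a strong induction on the pair (level $n$, formula size); alternatively, I would state the claim simultaneously for all levels and induct on the structure of $\varphi$ alone, taking as invariant that ``$\varphi \in \RPLM_{\leq m}$ implies $\deg(t_\varphi) \leq m$ for all $m$''. I would also check that the abbreviation $\overline{-1}\cdot$ used inside the subtraction does not raise the degree.
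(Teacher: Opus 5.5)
Your proposal is correct and follows the paper's proof essentially verbatim. It uses the same reduction of $\LPHF$ to $\RPLM$ via Propositions~\ref{proposition: RPLM_d = LPHF_d} and~\ref{proposition: RPLM = LPHF}, the same four translation clauses (including $t_{\alpha\impl\beta} = \overline{1} - \ReLU(t_\alpha - t_\beta)$), and the same inductive correctness check. Your explicit treatment of the self-referential $\odot$ clause in the degree induction, where $i$ or $j$ may equal $n+1$, just spells out what the paper calls a straightforward induction.
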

\begin{proof}
    The strategy is to recursively build an equivalent term for each formula $\varphi$ of $\cL$.
    We cover the case where $\cL$ is $\RPLM$. The case where $\cL$ is $\LPHF$ follows by Propositions~\ref{proposition: RPLM_d = LPHF_d} and \ref{proposition: RPLM = LPHF}.

    Let $\varphi$ be a formula of $\RPLM$. We will construct a term $t_{\varphi}$ equivalent to $\varphi$ by induction over the structure of $\varphi$ as follows. First, we cover the base cases.
    \begin{itemize}
        \item If $\varphi = \overline{r}$ for some $r \in \Q \cap [0,1]$, then $t_{\varphi} = \overline{r}$.
        \item If $\varphi = p$ for some $p \in \PROP$, then $t_{\varphi} = x_p$.
    \end{itemize}
    Next, assume $\psi$ and $\theta$ are formulae of $\RPLM$, and let $t_{\psi}$ and $t_{\theta}$ be the corresponding terms of \coolname.
    \begin{itemize}
        \item If $\varphi = \psi \impl \theta$, then $t_{\varphi} = \overline{1} - \ReLU(t_{\psi} - t_{\theta})$.
        \item If $\varphi = \psi \odot \theta$, then $t_{\varphi} = t_{\psi} t_{\theta}$.
    \end{itemize}

    Next, we prove the correctness of the given construction. Let $\lmodel$ be a valuation and let $\tmodel$ be an evaluation map such that $\tmodel(x) = \lmodel(p_x)$ for all $x \in \VAR$. We have to show that $\tmodel(t_{\varphi}) = \lmodel(\varphi)$; we show this by induction over the structure of $\varphi$. First, we cover the base cases.
    \begin{itemize}
        \item If $\varphi = \overline{r}$ for some $r \in \Q \cap [0,1]$, then $t_{\varphi} = \overline{r}$ and $\tmodel(t_{\varphi}) = \tmodel(\overline{r}) = r = \lmodel(\overline{r}) = \lmodel(\varphi)$. 
        \item If $\varphi = p$ for some $p \in \PROP$, then $t_{\varphi} = x_p$ and $\tmodel(t_{\varphi}) = \tmodel(x_p) = \lmodel(p) = \lmodel(\varphi)$.
    \end{itemize}
    Next, assume the claim holds for formulae $\psi$ and $\theta$ of $\RPLM$, i.e., $\tmodel(t_{\psi}) = \lmodel(\psi)$ and $\tmodel(t_{\theta}) = \lmodel(\theta)$.
    \begin{itemize}
        \item If $\varphi = \psi \impl \theta$, then $t_{\varphi} = \overline{1} - \ReLU(t_{\psi} - t_{\theta})$. By the semantics of $\ReLU$, we thus have $\tmodel(t_{\varphi}) = 1 - \max\{0, \tmodel(t_{\psi}) - \tmodel(t_{\theta})\}$. By the induction hypothesis, we know it is the case that $\tmodel(t_{\varphi}) = 1 - \max\{0, \lmodel(\psi) - \lmodel(\theta)\}$. Because $-\max\{a,b\} = \min\{-a, -b\}$ for all $a,b \in \R$, we have $\tmodel(t_{\varphi}) = 1 + \min\{0, -\lmodel(\psi) + \lmodel(\theta)\}$. Furhtermore, because $a + \min\{b,c\} = \min\{a+b,a+c\}$ for all $a,b,c \in \R$, $\tmodel(t_{\varphi}) = \min\{1, 1-\lmodel(\psi) + \lmodel(\theta)\}$. Thus by the semantics of $\impl$, we have $\tmodel(t_{\varphi}) = \lmodel(\varphi)$.
        \item If $\varphi = \psi \odot \theta$, then $t_{\varphi} = t_{\psi}t_{\theta}$. Thus $\tmodel(t_{\varphi}) = \tmodel(t_{\psi}t_{\theta}) = \tmodel(t_{\psi})\tmodel(t_{\theta})$. By the induction hypothesis, we have
        $\tmodel(t_{\varphi}) = \lmodel(\psi)\lmodel(\theta)$, and by the semantics of $\odot$, we get $\tmodel(t_{\varphi}) = \lmodel(\varphi)$.
    \end{itemize}

    We have now shown the correctness of our construction.
    The claim ``For all $d \in \N$, if $\varphi$ is a formula of $\cL_{\leq d}$, then $\deg(t_{\varphi}) \leq d$'' follows from Lemma~\ref{lemma:degrees of formulae}, our construction and the definition of degree by a straightforward induction over the structure of $\varphi$.
\end{proof} 

Now, we are ready to prove Theorem~\ref{theorem: R-algebra = RPLM = LPHF}. 

\firsttheorem*
\begin{proof}
    Let $\cL$ be $\RPLM$ or $\LPHF$.
    First, let $(t_1, \dots, t_n)$ be a sequence of terms of \coolname. For each $i \in \R_+$, we need to find a sequence $(\varphi_1, \dots, \varphi_n)$ of formulae of~$\cL$ and some $k \in \R_+$ such that $\varphi_j$ is $(i,k)$-equivalent to $t_j$ for each $j \in \{1, \dots, n\}$. By Lemma~\ref{lemma: terms to logic}, for each $j \in \{1, \dots, n\}$ and for each $i \in \R_+$ there exists some $K_{i,j} \in \Z_+$ such that for each integer $k \geq K_{i,j}$ we can construct a formula that is $(i,k)$-equivalent to $t_j$. Let $K_i = \max\{K_{i,1}, \dots, K_{i,n}\}$. Now for each integer $k \geq K_i$ there exists for each $j \in \{1, \dots, n\}$ a formula $\varphi_j$ that is $(i,k)$-equivalent to $t_j$, and $(\varphi_1, \dots, \varphi_n)$ is the tuple we seek.

    Next, let $(\varphi_1, \dots, \varphi_n)$ be a sequence of formulae of $\cL$. We have to find a sequence $(t_1, \dots, t_n)$ of terms such that for each $j \in \{1, \dots, n\}$, the term $t_j$ and the formula $\varphi_j$ are equivalent. To obtain this sequence, we simply apply Lemma~\ref{lemma: logic to terms} 
    to each formula $\varphi_j$ to obtain the equivalent term $t_j$.
\end{proof}

\section{Proof of Theorem~\ref{theorem: neural networks = RPLM1 = LPHF1}}\label{appendix: Characterising neural networks}

In this section, we give the proof of Theorem~\ref{theorem: neural networks = RPLM1 = LPHF1}. We start by recalling the theorem.

\secondtheorem*

We first establish some auxiliary definitions.
We say that two terms $t$ and $t'$ of \coolname are \defstyle{$[0,1]$-equivalent} if $\tmodel(t) = \tmodel(t')$ for each evaluation map $\tmodel$ such that $\tmodel(x) \in [0,1]$ for each $x \in \VAR$.
For two tuples of terms $T = (t_1, \dots, t_k)$ and $T' = (t_1', \dots, t_k')$ of equal length, we say that $T$ and $T'$ are \defstyle{$[0,1]$-equivalent} if $t_i$ and~$t_i'$ are $[0,1]$-equivalent for each $i \in \{1, \dots, k\}$.

We then establish two auxiliary lemmas concerning neural networks that will be needed in the translations.
The first lemma states that we can arbitrarily increase the depth of a neural network. 

\begin{lemma}\label{lemma: depth increase}
    For each $d \in \N$, each neural network $N$ of depth $d$ and each $d' > d$, there exists a $[0,1]$-equivalent neural network $N'$ of depth $d'$. If $d \neq 0$, then $N$ and $N'$ are equivalent.
\end{lemma}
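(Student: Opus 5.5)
It suffices to handle $d' = d+1$; the general case then follows by induction on $d' - d$. For that step, $[0,1]$-equivalence is transitive and so is plain equivalence, and the case distinction $d \neq 0$ is preserved because every intermediate network has depth at least $d+1 \geq 1$. The construction adds one \emph{identity layer} on top of $N$. Given $N = (n_1, \dots, n_k)$ of depth $d$, I define
\[
    n_j^* := \ReLU\Bigl(\sum_{i=1}^{k} \overline{w_{i,j}}\, n_i + \overline{0}\Bigr), \qquad w_{i,j} := 1 \text{ if } i = j, \quad w_{i,j} := 0 \text{ otherwise}.
\]
I then set $N' := (n_1^*, \dots, n_k^*)$. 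By the recursive definition, $N'$ is a neural network of depth $d+1$: $\ell = k$, the layer below is the network $N$ of depth $d$, and the weights and biases are rational.

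Next I verify the semantics. For any evaluation map $\tmodel$ we have $\tmodel(n_j^*) = \max\{0, \tmodel(n_j)\}$, so the only task is to show that $\tmodel(n_j) \geq 0$ in the relevant situations.

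If $d \neq 0$, each $n_j$ has the form $\ReLU(\cdots)$. Hence $\tmodel(n_j) \geq 0$ for every $\tmodel$, which gives $\tmodel(n_j^*) = \tmodel(n_j)$. So $N$ and $N'$ are equivalent, and in particular $[0,1]$-equivalent.

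If $d = 0$, each $n_j$ is a variable $y_j$. Then $\tmodel(n_j^*) = \max\{0, \tmodel(y_j)\} = \tmodel(y_j)$ whenever $\tmodel(y_j) \in [0,1]$, so $N$ and $N'$ are $[0,1]$-equivalent. They are not equivalent in general, since negative inputs are cut off; this is why the lemma asserts only $[0,1]$-equivalence in this case.

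The proof has no real obstacle. The one point needing care is the induction over $d' - d$. After the first step, the network has depth at least $1$, so every later step yields full equivalence. Composing a single $[0,1]$-equivalence (only when $d = 0$) with subsequent equivalences therefore still gives $[0,1]$-equivalence, while for $d \neq 0$ all steps are equivalences.
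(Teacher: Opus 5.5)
Your proposal is correct and follows the paper's proof essentially verbatim. The paper adds the same identity layer (weight $\overline{1}$ on the diagonal, $\overline{0}$ elsewhere, bias $\overline{0}$). It then uses the fact that each $n_j$ is either a variable or of the form $\ReLU(\cdot)$ to get $\tmodel(n_j') = \tmodel(n_j)$ on the relevant evaluation maps, and iterates up to depth $d'$. Your explicit induction on $d'-d$, with at most one $[0,1]$-equivalence step followed only by full equivalences, simply spells out the paper's ``iterate this process''.
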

\begin{proof}
    Let $N = (n_1, \dots, n_k)$ be a neural network of depth $d$. We define $N' := (n_1', \dots, n_{k}')$ where $n_j' := \ReLU(\sum_{i = 1}^{k} \overline{w_{i,j}} n_i + \overline{0})$ where $w_{j,j} = 1$ and $w_{i,j} = 0$ for each $i \neq j$. Now $n_j'$ is equivalent to $\ReLU(n_j)$ and thus $[0,1]$-equivalent to $n_j$ for each $j \in \{1, \dots, k\}$. This is because each neuron $n_j$ is either a variable (if $d = 0$) or of the form $\ReLU(t)$ for some proto-neuron $t$ (if $d \neq 0$). In either case, for each evaluation map $\tmodel$ such that $\tmodel(x) \in [0,1]$ for each $x \in \VAR$, we have $\tmodel(n_j) \geq 0$ and thus $\tmodel(n_j') = \tmodel(\ReLU(n_j)) = \tmodel(n_j)$. Thus, $N'$ is a $[0,1]$-equivalent neural network of depth $d+1$ and we may iterate this process until we reach depth $d'$. If $d \neq 0$, then $n_j$ is of the form $\ReLU(t)$, and thus for each evaluation map~$\tmodel$ we have $\tmodel(n_j') = \tmodel(\ReLU(n_j)) = \tmodel(n_j)$. Thus, $n_j$ and $n_j'$ are equivalent, which means that $N$ and $N'$ are equivalent.
\end{proof}

From the definition of neural networks, we see that a neural network of depth $d$ is built on top of a neural network of depth $d-1$ by adding additional neurons; thus, each neural network of depth $d$ has embedded within it a neural network of each depth $0, \dots, d$. We say that two neural networks of depth at least $d$ are \defstyle{identical up to depth $d$} if their embedded neural networks of depth $d$ are identical.

We next prove another auxiliary lemma, which intuitively shows that any number of neural networks can be modified such that the neural networks are pairwise identical apart from the output neurons. 

\begin{lemma}\label{lemma: combining networks}
    Given $k \in \Z_+$ and neural networks $N_1, \dots, N_k$, there exist neural networks $N_1', \dots, N_k'$ such that $N_i$ and $N_i'$ are $[0,1]$-equivalent for each $i \in \{1, \dots, k\}$, and $N_1', \dots, N_k'$ all have the same depth $d$ and are identical up to depth $d-1$. For each $i \in \{1, \dots, k\}$, if $N_i$ has depth at least $1$, then $N_i$ and $N_i'$ are equivalent.
\end{lemma}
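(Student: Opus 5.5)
First, Lemma~\ref{lemma: depth increase} lets us bring all networks to a common depth. Let $d_0 := \max\{\text{depth}(N_1), \dots, \text{depth}(N_k)\}$ and set $d := d_0 + 1$. Using Lemma~\ref{lemma: depth increase}, replace each $N_i$ by a network $M_i$ of depth $d-1$ that is $[0,1]$-equivalent to $N_i$, and equivalent to it whenever $N_i$ has depth at least $1$. If $N_i$ already has depth $d-1$, take $M_i := N_i$. Going to depth $d_0+1$ rather than $d_0$ ensures that the final networks have depth at least $1$, so their output layers are genuine neurons.

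Next we build a single shared network of depth $d-1$. Write $M_i = (m_{i,1}, \dots, m_{i,\ell_i})$. The natural choice is the concatenation $M := (m_{1,1}, \dots, m_{1,\ell_1}, \dots, m_{k,1}, \dots, m_{k,\ell_k})$. We must check that $M$ is itself a neural network of depth $d-1$. This is the main obstacle, because the definition requires all neurons of a layer to draw on the same previous layer; concatenating the top layers does not suffice, since the lower layers of the $M_i$ still differ.

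We handle this by induction on depth, proving that the concatenation of tuples of networks of equal depth $e$ is equivalent to a network of depth $e$. At depth $0$, the concatenation of variable tuples is a tuple of variables. If the variables are not distinct, we keep only one copy and use zero weights for the others. Alternatively, we relax the requirement from ``distinct'' by noting that the weights from repeated positions can be merged.

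For the inductive step, suppose each $M_i$ has depth $e+1$ over a previous layer $P_i$ of depth $e$. By the induction hypothesis, there is a network $P$ whose output tuple contains the outputs of every $P_i$ as blocks. Each neuron $\ReLU(\sum_s \overline{w_{s,j}} n'_s + \overline{b_j})$ of $M_i$ is then rewritten over $P$ by keeping the weights $w_{s,j}$ on the block of $P_i$ and assigning weight $\overline{0}$ to every other neuron of $P$. This rewriting adds only zero-weighted terms, so each neuron is equivalent to the original one. We therefore obtain a network $M$ of depth $d-1$ such that each neuron of each $M_i$ is equivalent to the correspondingly positioned neuron of $M$.

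Finally, we define $N_i' := (n'_{i,1}, \dots, n'_{i,\ell_i})$ with $n'_{i,j} := \ReLU(\sum_{u} \overline{w_u}\, m_u + \overline{0})$, where $m_u$ ranges over the neurons of $M$, the weight on $m_{i,j}$ is $1$, and all other weights are $0$. Each $N_i'$ is then a network of depth $d$ built on $M$, so $N_1', \dots, N_k'$ are identical up to depth $d-1$.

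As in the proof of Lemma~\ref{lemma: depth increase}, $n'_{i,j}$ is equivalent to $\ReLU(m_{i,j})$. For inputs in $[0,1]$ we have $m_{i,j} \geq 0$, because $m_{i,j}$ is either a variable or a $\ReLU$. Hence $n'_{i,j}$ is $[0,1]$-equivalent to $m_{i,j}$, and therefore to the $j$th neuron of $N_i$.

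When $N_i$ has depth at least $1$, every $m_{i,j}$ is a $\ReLU$, so $\ReLU(m_{i,j})$ is equivalent to $m_{i,j}$ for all inputs. Since $M_i$ is equivalent to $N_i$ in that case, $N_i'$ is equivalent to $N_i$.
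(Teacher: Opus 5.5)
Your proof is correct, and its core is the same as the paper's: equalise depths with Lemma~\ref{lemma: depth increase}, then rebuild the concatenation of all the networks layer by layer, deduplicating the input variables and putting weight $\overline{0}$ on every edge between neurons from different networks. The two proofs differ in where the sharing stops. The paper keeps the maximal depth $d_0$ and shares only layers $0,\dots,d_0-1$. It takes as the output layer of $N_t'$ the output neurons of $M_t$ themselves, re-expressed over the shared layer by zero padding, so no layer is added. You share the whole concatenated network of depth $d_0$ and put an identity layer $\ReLU(\overline{1}\cdot m_{i,j}+\dots+\overline{0})$ on top. This costs one extra layer, but it removes the paper's special clauses for $d\in\{0,1\}$ and guarantees depth at least $1$. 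The real payoff of depth $\geq 1$ is not that the outputs are ``genuine neurons'' (variables already count as neurons). It is that the combined tuple $(N_1',\dots,N_k')$ is then always itself a neural network, which is how the lemma is used in Lemma~\ref{lemma: logic to neurons} and Theorem~\ref{theorem: neural networks = RPLM1 = LPHF1}. In the paper's version, two depth-$0$ inputs sharing a variable (e.g.\ $n_\psi=n_\theta=x_p$ for $p\impl p$) yield $(x_p,x_p)$, which violates the distinctness required of depth-$0$ networks. One small repair is needed in your depth-$0$ step. Discard the ``relax distinctness'' alternative, since the definition does not permit it. Also, after deduplication the inputs of $P_i$ no longer form a block of $P$. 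At depth $1$ you must place $w_{s,j}$ on the unique occurrence in $P$ of the $s$-th variable of $P_i$ (unique because the variables of $P_i$ are distinct). So the ``blocks'' form of your induction hypothesis holds only from depth $1$ upward.
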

\begin{proof}
    Intuitively, given $t \in \{1, \dots, k\}$, we add every neuron from $N_1, \dots, N_k$ to $N_t'$, except for output neurons which are only added from $N_t$. We copy the weights and biases from $N_1, \dots, N_k$, and we also add the weight $\overline{0}$ between neurons that are not from the same neural network, since our neural networks need to be fully connected. See Figure~\ref{figure: neural network padding} for an illustrated example.

    \begin{figure}[ht]
    \centering
    \begin{tikzpicture}
        \node at (-0.75,0.75) () {$N_1$:};
    
        \node[draw, circle] at (0,0) (i1) {$x_1$};
        \node[draw, circle] at (0,-1) (i2) {$x_2$};

        \node[draw, circle] at (2,0) (h1) {$h_1$};
        \node[draw, circle] at (2,-1) (h2) {$h_2$};

        \node[draw, circle] at (4,0) (o1) {$o_1$};
        \node[draw, circle] at (4,-1) (o2) {$o_2$};

        \draw[->] (i1) to (h1);
        \draw[->] (i1) to (h2);
        \draw[->] (i2) to (h1);
        \draw[->] (i2) to (h2);
        \draw[->] (h1) to (o1);
        \draw[->] (h1) to (o2);
        \draw[->] (h2) to (o1);
        \draw[->] (h2) to (o2);

        \node at (6.25,0.75) () {$N_2$:};

        \node[draw, circle] at (7,0) (i1') {$x_2$};
        \node[draw, circle] at (7,-1) (i2') {$x_3$};

        \node[draw, circle] at (9,0) (h1') {$h_1'$};
        \node[draw, circle] at (9,-1) (h2') {$h_2'$};

        \node[draw, circle] at (11,0) (o1') {$o_1'$};
        \node[draw, circle] at (11,-1) (o2') {$o_2'$};

        \draw[->] (i1') to (h1');
        \draw[->] (i1') to (h2');
        \draw[->] (i2') to (h1');
        \draw[->] (i2') to (h2');
        \draw[->] (h1') to (o1');
        \draw[->] (h1') to (o2');
        \draw[->] (h2') to (o1');
        \draw[->] (h2') to (o2');

        \node at (2,-2) () {$\big{\Downarrow}$};

        \node at (-0.75,-2.75) () {$N_1'$:};

        \node[draw, circle] at (0,-3.5) (a1) {$x_1$};
        \node[draw, circle] at (0,-4.5) (a2) {$x_2$};
        \node[draw, circle] at (0,-5.5) (a3) {$x_3$};

        \node[draw, circle] at (2,-3) (b1) {$h_1$};
        \node[draw, circle] at (2,-4) (b2) {$h_2$};
        \node[draw, circle] at (2,-5) (b4) {$h_1'$};
        \node[draw, circle] at (2,-6) (b5) {$h_2'$};

        \node[draw, circle] at (4,-4) (c1) {$o_1$};
        \node[draw, circle] at (4,-5) (c2) {$o_2$};

        \draw[->] (a1) to (b1);
        \draw[->] (a1) to (b2);
        \draw[dashed, ->] (a1) to (b4);
        \draw[dashed, ->] (a1) to (b5);
        \draw[->] (a2) to (b1);
        \draw[->] (a2) to (b2);
        \draw[->] (a2) to (b4);
        \draw[->] (a2) to (b5);
        \draw[dashed, ->] (a3) to (b1);
        \draw[dashed, ->] (a3) to (b2);
        \draw[->] (a3) to (b4);
        \draw[->] (a3) to (b5);
        \draw[->] (b1) to (c1);
        \draw[->] (b1) to (c2);
        \draw[->] (b2) to (c1);
        \draw[->] (b2) to (c2);
        \draw[dashed, ->] (b4) to (c1);
        \draw[dashed, ->] (b4) to (c2);
        \draw[dashed, ->] (b5) to (c1);
        \draw[dashed, ->] (b5) to (c2);

        \node at (9,-2) () {$\big{\Downarrow}$};

        \node at (6.25,-2.75) () {$N_2'$:};

        \node[draw, circle] at (7,-3.5) (d1) {$x_1$};
        \node[draw, circle] at (7,-4.5) (d2) {$x_2$};
        \node[draw, circle] at (7,-5.5) (d3) {$x_3$};

        \node[draw, circle] at (9,-3) (e1) {$h_1$};
        \node[draw, circle] at (9,-4) (e2) {$h_2$};
        \node[draw, circle] at (9,-5) (e4) {$h_1'$};
        \node[draw, circle] at (9,-6) (e5) {$h_2'$};

        \node[draw, circle] at (11,-4) (f1) {$o_1'$};
        \node[draw, circle] at (11,-5) (f2) {$o_2'$};

        \draw[->] (d1) to (e1);
        \draw[->] (d1) to (e2);
        \draw[dashed, ->] (d1) to (e4);
        \draw[dashed, ->] (d1) to (e5);
        \draw[->] (d2) to (e1);
        \draw[->] (d2) to (e2);
        \draw[->] (d2) to (e4);
        \draw[->] (d2) to (e5);
        \draw[dashed, ->] (d3) to (e1);
        \draw[dashed, ->] (d3) to (e2);
        \draw[->] (d3) to (e4);
        \draw[->] (d3) to (e5);
        \draw[dashed, ->] (e1) to (f1);
        \draw[dashed, ->] (e1) to (f2);
        \draw[dashed, ->] (e2) to (f1);
        \draw[dashed, ->] (e2) to (f2);
        \draw[->] (e4) to (f1);
        \draw[->] (e4) to (f2);
        \draw[->] (e5) to (f1);
        \draw[->] (e5) to (f2);
    \end{tikzpicture}
    \caption{Top: Two neural networks $N_1$ and $N_2$. Bottom: Equivalent versions $N_1'$ and $N_2'$ where only the output neurons are different. Dashed lines indicate the weight $0$ between two neurons.}
    \label{figure: neural network padding}
    \end{figure}

    Next, we give the construction formally. 
    First, let $d$ be the maximum depth of $N_1, \dots, N_k$.
    By Lemma~\ref{lemma: depth increase}, 
    we can construct neural networks $M_1, \dots, M_k$ of depth $d$ such that $M_j$ is $[0,1]$-equivalent to $N_j$ for each $j \in \{1, \dots, k\}$.
    Next, we construct the neural network $N_t'$ that is equivalent to $M_t$, starting with the embedded neural network $N_t^0$ of depth $0$ of $N_t'$ and recursively moving to embedded neural networks $N_t^{d'}$ of increasing depths $d'$ until we reach $N_t^d = N_t'$.

    \begin{itemize}
        \item First, we define $N_t^0 := (y_1, \dots, y_{\ell_0})$, where $y_1, \dots, y_{\ell_0}$ are the variables of the input layers of $M_1, \dots, M_k$ with no duplicates (unless $d = 0$, in which case $y_1, \dots, y_{\ell_0}$ are the variables of just $M_t$).
        \item Second, we define $N_t^1$. Let $(m_1, \dots, m_{\ell})$ be the concatenation of the embedded neural networks of depth $1$ of $M_1, \dots, M_k$ (unless $d = 1$ in which case let $(m_1, \dots, m_{\ell})$ be~$M_t$). We define $N_t^{1} = (n_1, \dots, n_{\ell})$, where $n_j = \ReLU\left(\sum_{i = 1}^{\ell_0}\overline{w_{i,j}}y_i + \overline{b_j}\right)$ where $\overline{b_j}$ is the bias of $m_j$ and $\overline{w_{i,j}}$ is the weight from $y_i$ to $m_j$ if they are from the same neural network and otherwise $\overline{w_{i,j}} = \overline{0}$.
        \item Next, we cover the case where $d' > 1$.
        Let $(m_1, \dots, m_\ell)$ be the concatenation of the embedded neural networks of depth $d'$ of $M_1, \dots, M_k$. 
        Let $N_t^{d'} = (n_1, \dots, n_{\ell})$ such that each $n_j$ corresponds to the neuron $m_j$. 
        Let $(m_1', \dots, m_{\ell'}')$ be the concatenation of the embedded neural networks of depth $d'+1$ of $M_1, \dots, M_k$ (unless, again, $d = d'+1$ in which case let $(m_1', \dots, m_{\ell'}')$ be $M_t$). 
        We define $N_t^{d'+1} := (n_1', \dots, n_{\ell'}')$ where $n_j' = \ReLU\left(\sum_{i = 1}^{\ell}\overline{w_{i,j}}n_i + \overline{b_j}\right)$ where $\overline{b_j}$ is the bias of $m_j'$ and $\overline{w_{i,j}}$ is the weight from $m_i$ to $m_j'$ if they are from the same neural network and otherwise $\overline{w_{i,j}} = \overline{0}$.
    \end{itemize}

    Lastly, we obviously define $N_t' := N_t^d$.
    The equivalence of $M_t$ and $N_t'$ is now clear to see, and because $N_t$ and $M_t$ are $[0,1]$-equivalent, $N_t$ and $N_t'$ are thus also $[0,1]$-equivalent. If $N_t$ has depth at least $1$, then $N_t$ and $M_t$ are equivalent, and thus $N_t$ and $N_t'$ are equivalent. With the exception of the output neurons, the construction was identical for each $t$, which means that $N_1', \dots, N_k'$ are pairwise identical up to depth $d-1$.
\end{proof}

Now, we are ready to show the translation from the logics $\RPLM_{\leq 1}$ and $\LPHF_{\leq 1}$ to neurons.

\begin{lemma}\label{lemma: logic to neurons}
    Let $\cL$ be $\RPLM$ or $\LPHF$.
    For each formula $\varphi$ of $\cL_{\leq 1}$, we can construct an equivalent neuron $n_{\varphi}$.
\end{lemma}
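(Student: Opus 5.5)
The plan is to work with $\RPLM_{\leq 1}$ only. By Proposition~\ref{proposition: RPLM_d = LPHF_d}, every formula of $\LPHF_{\leq 1}$ is equivalent to a formula of $\RPLM_{\leq 1}$, so the case $\cL = \LPHF$ follows.

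Equivalence between a term and a formula only involves evaluation maps $\tmodel$ with $\tmodel(x) = \lmodel(p_x) \in [0,1]$. Hence it suffices to build a neuron that is $[0,1]$-equivalent to the formula, which is the same thing here. I will prove a slightly stronger statement by induction on the grammar of $\RPLM_{\leq 1}$: for each formula $\varphi$ there is a neural network $(n_\varphi)$ of depth at least $1$, with a single output neuron, that is equivalent to $\varphi$. Two facts will be used throughout. First, every formula of $\RPLM_{\leq 0}$ has a constant rational value $r_\varphi \in \Q \cap [0,1]$ (as in the proof of Proposition~\ref{proposition: RPLM_d = LPHF_d}). Second, all truth values lie in $[0,1]$, so all neuron values we produce are nonnegative, and $\ReLU$ acts as the identity on them.

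\textbf{Base cases and the product.} I will handle the following cases.
\begin{itemize}
    \item For $\varphi = p$, take the depth-$1$ neuron $\ReLU(\overline{1}\, x_p + \overline{0})$. It agrees with $x_p$ whenever $\tmodel(x_p) \in [0,1]$.
    \item For $\varphi$ in $\RPLM_{\leq 0}$, take $\ReLU(\overline{0}\, x + \overline{r_\varphi})$ for an arbitrary variable $x$.
    \item For $\varphi = \alpha \odot \beta$, the grammar of $\cL_{\leq 1}$ forces one factor, say $\alpha$, to lie in $\RPLM_{\leq 0}$ (if both do, $\varphi$ is constant and the previous case applies). Given the network $(n_\beta)$ from the induction hypothesis, form the depth-increased network with output $\ReLU(\overline{r_\alpha}\, n_\beta + \overline{0})$. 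This is equivalent to $\varphi$ because $r_\alpha \geq 0$ and $n_\beta \geq 0$.
\end{itemize}

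\textbf{Implication.} For $\varphi = \psi \impl \theta$ I mirror the term from Lemma~\ref{lemma: logic to terms}, namely $\overline{1} - \ReLU(t_\psi - t_\theta)$, written in neuron form as $\ReLU(\overline{-1}\cdot\ReLU(n_\psi - n_\theta) + \overline{1})$. The outer $\ReLU$ is harmless because $1 - \max\{0, a-b\} \in [0,1]$ for $a, b \in [0,1]$. The construction goes in three steps.
\begin{enumerate}
    \item Apply Lemma~\ref{lemma: combining networks} to $(n_\psi)$ and $(n_\theta)$. This gives networks of a common depth $d \geq 1$ that are identical up to depth $d-1$ and still equivalent to the originals, since their depth is at least $1$.
    \item Because the two networks share their lower layers, their output neurons can sit together in one layer of depth $d$, forming the network $(n_\psi', n_\theta')$.
    \item On top of this layer add a neuron $h = \ReLU(\overline{1}\, n_\psi' + \overline{-1}\, n_\theta' + \overline{0})$. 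Above $h$ add the output neuron $\ReLU(\overline{-1}\, h + \overline{1})$.
\end{enumerate}
The correctness calculation is then the same one as in Lemma~\ref{lemma: logic to terms}.

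\textbf{Main obstacle.} The step I expect to need the most care is the bookkeeping in the implication case. The network definition demands full connectivity and requires both subnetworks to be the same depth-$(d-1)$ network underneath a single layer; a neuron is a formula rather than a graph, so these conditions must be matched exactly. This is exactly what Lemma~\ref{lemma: combining networks} provides. Insisting on depth at least $1$ in the induction hypothesis ensures that genuine equivalence is preserved rather than only $[0,1]$-equivalence, although the latter would also suffice here.
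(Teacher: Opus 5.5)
Your proposal is correct and follows the paper's proof essentially step for step. Constants become $\ReLU(\overline{0}\cdot x+\overline{r})$, and products with a constant factor become $\ReLU(\overline{r}\cdot n+\overline{0})$. Implication is handled by aligning the two subnetworks with Lemma~\ref{lemma: combining networks} and then adding $\ReLU(\overline{1}\cdot n'_\psi+\overline{-1}\cdot n'_\theta+\overline{0})$ and $\ReLU(\overline{-1}\cdot h+\overline{1})$. The $\LPHF$ case is reduced to $\RPLM$ via Proposition~\ref{proposition: RPLM_d = LPHF_d}.

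Your only deviation is translating $p$ as $\ReLU(\overline{1}x_p+\overline{0})$ rather than $x_p$, so that every network has depth at least $1$. This is a harmless refinement, and it even avoids a small edge case in the paper's version. For example, with $p \impl p$, combining two depth-$0$ networks yields $(x_p,x_p)$, which is not a network because depth-$0$ variables must be distinct.
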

\begin{proof}
    We modify the proof of Lemma~\ref{lemma: logic to terms} such that the constructed terms are, in fact, neurons. We cover the case where $\cL$ is $\RPLM$. The case where $\cL$ is $\LPHF$ follows by Propositions~\ref{proposition: RPLM_d = LPHF_d} and \ref{proposition: RPLM = LPHF}.

    Let $\varphi$ be a formula of $\RPLM_{\leq 1}$. We will construct a neuron $n_{\varphi}$ equivalent to $\varphi$ by recursion over the structure of $\varphi$ as follows. First, we cover the base cases.
    \begin{itemize}
        \item If $\varphi = p$ for some $p \in \PROP$, then $n_{\varphi} := x_p$.
        \item If $\varphi$ is a formula of $\RPLM_{\leq 0}$, then there exists some $r \in \Q \cap [0,1]$ such that $\varphi$ is equivalent to~$\overline{r}$. We define $n_{\varphi} := \ReLU\left(\overline{0} \cdot x + \overline{r}\right)$.
    \end{itemize}
    Next, assume that $\psi$ is a formula of $\RPLM_{\leq i}$ and $\theta$ a formula of $\RPLM_{\leq j}$ for some $i,j \in \{0,1\}$, and let $n_{\psi}$ and $n_{\theta}$ be the equivalent neurons. Both $n_{\psi}$ and $n_{\theta}$ can be seen as neural networks with a single output neuron, so by Lemma~\ref{lemma: combining networks} we can construct neurons $n_{\psi}'$ and $n_{\theta}'$ that are $[0,1]$-equivalent to $n_{\psi}$ and $n_{\theta}$ (and thus equivalent to $\psi$ and $\theta$) such that $n_{\psi}'$ and $n_{\theta}'$ have the same depth $d$ and are identical up to depth $d-1$.
    \begin{itemize}
        \item If $i+j \leq 1$ and $\varphi = \psi \odot \theta$, then without loss of generality we may assume that $i = 0$. Because $\psi$ is now a formula of $\RPLM_{\leq 0}$, there exists some $r \in \Q \cap [0,1]$ such that $\psi$ is equivalent to $\overline{r}$. We define $n_{\varphi} := \ReLU(\overline{r} \cdot n_{\theta} + \overline{0})$.
        \item If $\varphi = \psi \impl \theta$, then
        we first define $m_{\varphi} := \ReLU((\overline{1} \cdot n_{\psi}' + \overline{-1} \cdot n_{\theta}') + \overline{0})$, after which we define
        $n_{\varphi} := \ReLU(\overline{- 1} \cdot m_{\varphi} + \overline{1})$.
    \end{itemize}

    Next, we prove the correctness of our construction. Let $\lmodel$ be a valuation, and let $\tmodel$ be the evaluation map such that $\tmodel(x) = \lmodel(p_x)$ for each $x \in \VAR$. We have to show that $\tmodel(n_{\varphi}) = \lmodel(\varphi)$.
    \begin{itemize}
        \item If $\varphi = p$ for some $p \in \PROP$, then $n_{\varphi} = x_p$ and the claim is trivially true.
        \item If $\varphi$ is a formula of $\RPLM_{\leq 0}$, then $n_{\varphi} = \ReLU\left(\overline{0} \cdot x + \overline{r}\right)$ where $r \in \Q \cap [0,1]$ such that $\varphi$ is equivalent to $\overline{r}$. Now $\tmodel(n_{\varphi}) = \max\{0, 0 \cdot \tmodel(x) + r\} = \max\{0,r\}$. Now $r \in \Q \cap [0,1]$ implies $r \geq 0$ and thus $\tmodel(n_{\varphi}) = r = \lmodel(\overline{r})$. Because $\varphi$ is equivalent to~$\overline{r}$, we have $\lmodel(\overline{r}) = \lmodel(\varphi)$ and thus $\tmodel(n_{\varphi}) = \lmodel(\varphi)$.
    \end{itemize}
    Next, assume that the claim holds for formulae $\psi$ of $\RPLM_{\leq i}$ and $\theta$ of $\RPLM_{\leq j}$ where $i,j \in \{0,1\}$, let $n_{\psi}$ and $n_{\theta}$ be the equivalent neurons, and let $n_{\psi}'$ and $n_{\theta}'$ be as above.
    \begin{itemize}
        \item If $i+j \leq 1$ and $\varphi = \psi \odot \theta$, then without loss of generality we may assume that $i = 0$. Then $n_{\varphi} = \ReLU(\overline{r} \cdot n_{\theta} + \overline{0})$ where $r \in \Q \cap [0,1]$ such that $\psi$ is equivalent to $\overline{r}$. Therefore we have $\tmodel(n_{\varphi}) = \max\{0, r\tmodel(n_{\theta}) + 0\} = \max\{0, r\tmodel(n_{\theta})\}$. Because $\psi$ is equivalent to $\overline{r}$, we have $r = \lmodel(\psi)$, and by the induction hypothesis we have $\tmodel(n_{\theta}) = \lmodel(\theta)$, and thus we get $\tmodel(n_{\varphi}) = \max\{0, \lmodel(\psi)\lmodel(\theta)\} = \max\{0, \lmodel(\varphi)\}$. Lastly, because $\lmodel(\varphi) \in [0,1]$, we have $\tmodel(n_{\varphi}) = \lmodel(\varphi)$.
        \item If $\varphi = \psi \impl \theta$, then $n_{\varphi} = \ReLU(\overline{- 1} \cdot m_{\varphi} + \overline{1})$ where $m_{\varphi} = \ReLU((\overline{1} \cdot n_{\psi}' + \overline{-1} \cdot n_{\theta}') + \overline{0})$. Therefore
        $\tmodel(n_{\varphi}) = \max\{0, -\tmodel(m_{\varphi}) + 1\}$ and $\tmodel(m_{\varphi}) = \max\{0, \tmodel(n_{\psi}') - \tmodel(n_{\theta}')\}$. By the induction hypothesis, $\tmodel(n_{\psi}') = \tmodel(n_{\psi}) = \lmodel(\psi)$ and $\tmodel(n_{\theta}') = \tmodel(n_{\theta}) = \lmodel(\theta)$ and therefore $\tmodel(m_{\varphi}) = \max\{0, \lmodel(\psi) - \lmodel(\theta)\}$. Because of the fact that $\lmodel(\psi), \lmodel(\theta) \in [0,1]$ we must have 
        $\tmodel(m_{\varphi}) \leq \max\{0, 1-0\} = 1$ which implies
        $-\tmodel(m_{\varphi}) + 1 \geq -1 + 1 = 0$ and therefore $\tmodel(n_{\varphi}) = - \tmodel(m_{\varphi}) + 1$. Because $-\max\{a,b\} = \min\{-a,-b\}$ for all $a,b \in \R$, $- \tmodel(m_{\varphi}) = \min\{0, - \lmodel(\psi) + \lmodel(\theta)\}$ and thus $\tmodel(n_{\varphi}) = \min\{0, - \lmodel(\psi) + \lmodel(\theta)\} + 1$. Because $\min\{a,b\} + c = \min\{a+c, b+c\}$ for all $a,b,c \in \R$ and by the semantics of~$\impl$, we finally get $\tmodel(n_{\varphi}) = \min\{1,1 - \lmodel(\psi) + \lmodel(\theta)\} = \lmodel(\psi \impl \theta) = \lmodel(\varphi)$.
    \end{itemize}
    This completes the proof of correctness.
\end{proof}

\begin{example}
    Consider the formula $\varphi = \overline{\frac{1}{3}} \impl \left(p \odot \overline{\half}\right)$ of $\RPLM_{\leq 1}$. By the construction from Lemma~\ref{lemma: logic to neurons}, we obtain the equivalent neuron $n_{\varphi}$:
    \[
        \textstyle{
        \ReLU\left(\overline{- 1} \cdot \ReLU\left(\overline{1} \cdot \ReLU\left(\overline{0} x + \overline{0} x_p + \overline{\frac{1}{3}}\right) + \overline{-1} \cdot \ReLU\left(\overline{0} x + \overline{\half} x_p + \overline{0}\right) + \overline{0}\right) + \overline{1}\right).
        }
    \]
    An illustration of the neural network $N := (n_{\varphi})$ is given in Figure~\ref{figure: formula to neuron}. This neural network could be pruned by omitting the variable $x$ from the input layer of the network; this variable is simply a harmless artifact of the translation.

    \begin{figure}[ht]
    \begin{center}
        \begin{tikzpicture}

            \node[draw, circle] at (0,0) (i1) {$x$};
            \node[draw, circle] at (0,-2) (i2) {$x_p$};

            \node[draw, circle] at (3,0) (h1) {$+ \frac{1}{3}$};
            \node[draw, circle] at (3,-2) (h2) {$+ 0$};

            \node[draw, circle] at (6,-1) (h3) {$+0$};

            \node[draw, circle] at (9,-1) (o1) {$+1$};

            \draw[->] (i1) to (h1);
            \draw[->] (i1) to (h2);
            \draw[->] (i2) to (h1);
            \draw[->] (i2) to (h2);
            \draw[->] (h1) to (h3);
            \draw[->] (h2) to (h3);
            \draw[->] (h3) to (o1);
                
            \filldraw[] (2.125,0) node[anchor=south]{$\cdot \, 0$};
            \filldraw[] (2.5,-0.9) node[anchor=south]{$\cdot \, 0$};
            \filldraw[] (2.5,-1.6) node[anchor=south]{$\cdot \, 0$};
            \filldraw[] (2.125,-2) node[anchor=north]{$\cdot \, \half$};

            \filldraw[] (5.125,-0.7) node[anchor=south]{$\cdot \, 1$};
            \filldraw[] (5.125,-1.3) node[anchor=north]{$\cdot \, (-1)$};
                
            \filldraw[] (8,-1) node[anchor=south]{$\cdot \, (-1)$};
        \end{tikzpicture}
        \caption{An illustration of the neural network $N$. Weights are written alongside edges and biases are written inside neurons.}
        \label{figure: formula to neuron}
    \end{center}
\end{figure}
\end{example}

Before delving into the other direction of the characterisation, we first establish a useful lemma.

\begin{lemma}\label{lemma: negation lemma}
    Let $k \in \R_+$, let $\varphi$ be a formula of $\RPLM$ or $\LPH$, let $\lmodel$ be a valuation and let $\ell \in [-k,k]$ such that $\lmodel(\varphi) = \scale_k(\ell)$. Then $\lmodel(\negl \varphi) = \scale_k(-\ell)$.
\end{lemma}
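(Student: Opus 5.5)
This is a direct computation from the semantics, with no induction needed. First, recall from the abbreviations in Section~\ref{section: logics} that $\negl \varphi := \varphi \impl \overline{0}$. Hence, for any valuation $\lmodel$,
\[
\lmodel(\negl\varphi) = \min\{1, 1 - \lmodel(\varphi) + 0\} = 1 - \lmodel(\varphi),
\]
since $\lmodel(\varphi) \in [0,1]$ makes $1-\lmodel(\varphi) \leq 1$. In $\LPH$ the constant $\overline{0}$ is primitive, and in $\RPLM$ it is the truth constant $\overline{r}$ with $r=0$, so this identity holds in both logics.

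Next, I substitute the hypothesis $\lmodel(\varphi) = \scale_k(\ell) = \frac{k+\ell}{2k}$. This gives
\[
1 - \frac{k+\ell}{2k} = \frac{2k - k - \ell}{2k} = \frac{k + (-\ell)}{2k} = \scale_k(-\ell).
\]
The condition $\ell \in [-k,k]$ is only needed to guarantee that $\scale_k(\ell) \in [0,1]$, so that the hypothesis is consistent with $\lmodel(\varphi)$ being a truth value. It also gives $-\ell \in [-k,k]$, so the conclusion stays within the scaled range.

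There is no real obstacle here. The only point requiring care is that the truncation in $\min\{1,\cdot\}$ is inactive, and that is immediate because $\lmodel(\varphi) \ge 0$.
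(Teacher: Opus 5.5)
Your proof is correct and follows the paper's argument: use $\lmodel(\negl\varphi) = 1 - \lmodel(\varphi)$ and rewrite $1 - \frac{k+\ell}{2k}$ as $\frac{k-\ell}{2k} = \scale_k(-\ell)$. Your remark that the truncation in $\min\{1,\cdot\}$ is inactive because $\lmodel(\varphi) \geq 0$ makes explicit a step the paper leaves implicit when it cites the semantics of $\negl$.
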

\begin{proof}
    By the semantics of $\negl$, $\lmodel(\negl \varphi) = 1 - \lmodel(\varphi) = 1 - \scale_k(\ell)$. By the definition of $\scale_k$,
    \[
        \textstyle{1 - \scale_k(\ell) = 1 - \frac{k+\ell}{2k} = \frac{2k}{2k} - \frac{k+\ell}{2k} = \frac{k-\ell}{2k} = \scale_k(-\ell),}
    \]
    and thus $\lmodel(\negl \varphi) = \scale_k(-\ell)$.
\end{proof}

Next we will show the other direction of the characterisation, where we translate proto-neurons into the logics $\RPL$ and $\LPHFF$. We split the translation into two parts presented as lemmas. In the first lemma, we replace the rationals of a proto-neuron with integers. In the second lemma, we translate the modified proto-neuron into a formula of $\RPL$ or $\LPHFF$.

\begin{lemma}\label{lemma: neural network inflation}
    For each proto-neuron $t$ there exists some $D \in \Z_+$ such that we can construct a proto-neuron~$s$ that satisfies the following:
    \begin{itemize}
        \item $s$ is equivalent to $\overline{D}t$ and
        \item $s$ contains only integers, i.e., if $s$ contains a subterm $\overline{r}$ for some $r \in \Q$, then $r \in \Z$.
    \end{itemize}
\end{lemma}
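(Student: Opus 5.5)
We take $D$ to be the least common multiple of the denominators of all rational constants occurring in $t$ (written in lowest terms), or $D = 1$ if $t$ has no constants. We then prove, by induction on the structure of proto-neurons, the following stronger claim: for each subterm $u$ of $t$ there is an integer-only proto-neuron $s_u$ equivalent to $\overline{D}u$ when $\deg(u) = 1$ or $u$ contains a $\ReLU$. For subterms of degree $0$ we use a companion statement: $s_u$ is equivalent to $\overline{D^{m_u}}u$, where $m_u \geq 1$ is a multiplicative depth bound.

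The cleaner route is to avoid exponent bookkeeping. A degree-$0$ subterm $u$ contains no variables, so it has a constant value $c_u \in \Q$; note that $\ReLU$ of a rational is rational. Whenever we need such a constant we simply compute it. So the induction is on proto-neurons $u$, and we construct $s_u$ with $s_u$ equivalent to $\overline{D_u}u$, integer-only, for a positive integer $D_u$ chosen per subterm. At the end we rescale to a common $D$ by multiplying by the integer $\overline{D/D_u}$.

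The cases are as follows.

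\emph{Constant $\overline{r}$.} Put $D_u := $ the denominator of $r$, and $s_u := \overline{D_u r}$.

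\emph{Variable $x$.} Put $D_u := 1$, and $s_u := x$.

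\emph{Sum $u' + u''$.} Put $D_u := D_{u'}D_{u''}$, and $s_u := \overline{D_{u''}}s_{u'} + \overline{D_{u'}}s_{u''}$.

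\emph{$\ReLU(u')$.} Put $D_u := D_{u'}$, and $s_u := \ReLU(s_{u'})$. This uses positive homogeneity: for $D > 0$ we have $\max\{0, Da\} = D\max\{0, a\}$. This is the only place positivity of $D$ matters, and it is why $D \in \Z_+$ is essential.

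\emph{Product $u'u''$.} Here $\deg(u') + \deg(u'') \leq 1$, so by Lemma~\ref{lemma: proto-neuron degrees} at least one factor, say $u'$, has degree $0$. Hence $u'$ is equivalent to the constant $\overline{c}$ with $c := \tmodel(u') \in \Q$, which is independent of $\tmodel$. Write $c = a/b$ with $a \in \Z$ and $b \in \Z_+$. Put $D_u := bD_{u''}$, and $s_u := \overline{a}\, s_{u''}$. Then $s_u$ is integer-only and of degree $\deg(s_{u''}) \leq 1$, so it is a proto-neuron by Lemma~\ref{lemma: proto-neuron degrees}. Its value is $aD_{u''}\tmodel(u'') = bD_{u''}\,c\,\tmodel(u'') = D_u\tmodel(u)$, as required.

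In every case $s_u$ is a proto-neuron, since degrees never increase: the sum, $\ReLU$, and scalar-multiplication-by-integer cases all preserve degree $\leq 1$. Finally set $D := D_t$ and $s := s_t$.

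The main obstacle is the product case. A naive inductive replacement of $u'$ by $s_{u'}$ would give $D_{u'}D_{u''}u'u''$, and the multiplier would interact badly with $\ReLU$ inside $u'$. Collapsing the degree-$0$ factor to its exact rational value sidesteps this. That collapse is legitimate because the value of a variable-free term is rational, which follows by a trivial induction since $\Q$ is closed under $+$, $\cdot$ and $\max\{0,\cdot\}$. The remaining verifications are routine inductions on the evaluation map.
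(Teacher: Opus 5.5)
Your final construction is correct, and its skeleton matches the paper's: a per-subterm multiplier $D_u$, cross-multiplication to clear denominators in sums, and positive homogeneity of $\ReLU$. In sums the paper uses $D_{t'}D_{t''}/\gcd(D_{t'},D_{t''})$ where you use $D_{u'}D_{u''}$, which is immaterial.

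The real difference is the product case. The paper sets $D_{t't''} := D_{t'}D_{t''}$, rewrites $\overline{D}\,t't''$ as $(\overline{D_{t'}}t')(\overline{D_{t''}}t'')$, and recurses into both factors. This is purely syntactic, keeps the shape of $t$, and evaluates nothing. It never uses the proto-neuron hypothesis, so it works for any term of \coolname and preserves degree. You instead evaluate the variable-free factor to its value $a/b$ and keep only $\overline{a}\,s_{u''}$. This needs two small facts: degree $0$ means variable-free, and $\Q$ is closed under $+$, $\cdot$ and $\max\{0,\cdot\}$. In return, every product in your $s$ is literally an integer constant times a proto-neuron. That is exactly the form consumed by Lemma~\ref{lemma: proto-neuron to RPL and LPHFF}, where the paper has to evaluate the degree-$0$ factor to an integer $z$ anyway.

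Your stated ``main obstacle'' is not real, though. By the induction hypothesis, $s_{u'}$ is already equivalent to $\overline{D_{u'}}u'$, since any $\ReLU$ inside $u'$ was handled by homogeneity. So $s_{u'}s_{u''}$ with $D_u := D_{u'}D_{u''}$ works verbatim, and that is precisely the paper's step.

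You should also delete your opening paragraph and the sentence about rescaling by $\overline{D/D_u}$. Taking $D$ to be the least common multiple of the denominators occurring in $t$ is wrong in general. For $t = \overline{\half}\,(\overline{\half}\,x)$ it gives $D = 2$, but $\overline{2}\,t$ takes the value $\half$ at $x = 1$. Every integer-only term takes integer values at integer inputs, so no integer-only $s$ can be equivalent to $\overline{2}\,t$. Likewise, $D/D_u$ need not be an integer. Your eventual choice $D := D_t$ is the right one and needs no rescaling.
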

\begin{proof}
    To obtain $s$, we intuitively multiply $t$ with an appropriate integer constant $\overline{D}$ (which can be constructively chosen based on the rational constants in $t$) and recursively cancel out the denominators of all the rationals in $t$.
    For example, 
    if $t = \overline{\frac{5}{7}}\left( x + \overline{\frac{3}{7}} \right)$, then we choose $D = 7 \cdot 7 = 49$ and starting from $\overline{49} \, t = \overline{49} \left( \overline{\frac{5}{7}}\left( x + \overline{\frac{3}{7}} \right) \right)$, we obtain the following sequence of proto-neurons that are all equivalent to $\overline{49} \, t$: $s_0 = \overline{5} \left(\overline{7}\left( x + \overline{\frac{3}{7}} \right)\right)$, $s_1 = \overline{5} \left( \overline{7} \, x + \overline{7}\cdot\overline{\frac{3}{7}} \right)$ and $s_2 = \overline{5}\left( \overline{7} \, x + \overline{3} \right)$.

    We now present the general case formally.
    We start with the observation that if $r \in \Q$, then $r = \frac{n_r}{d_r}$ for some numerator $n_r \in \Z$ and denominator $d_r \in \Z_+$ such that $d_r$ is not divisible by $n_r$.

    Next, we show how to construct, for each subterm $s$ of $t$, the appropriate integer $D_s$. First, if $s = \overline{r}$ for some $r \in \Q$, then $D_s := d_r$. If $s = x$ for some $x \in \VAR$, then $D_s := 1$. Next, assume we have defined $D_{s'}$ and $D_{s''}$ for subterms $s'$ and $s''$ of $t$. If $s = s' + s''$, then let~$d$ be the greatest common divisor of $D_{s'}$ and $D_{s''}$, and $D_s := \frac{D_{s'}D_{s''}}{d}$. 
    If $s = s's''$, then $D_s := D_{s'}D_{s''}$. 
    Finally, if $s = \ReLU(s')$, then $D_s := D_{s'}$.

    Let $D$ denote $D_t$.
    We start with the proto-neuron $\overline{D} t$ 
    and modify $\overline{D} t$ into an equivalent proto-neuron $s_0$ as follows.
    \begin{itemize}
        \item If $t = \overline{r}$ for some $r \in \Q$, then $r = \frac{n_r}{d_r}$ and $D = d_r$ and thus we set $s_0 := \overline{n_r}$.
        \item If $t = x$ for some $x \in \VAR$, then $D = 1$ and we set $s_0 := x$.
        \item If $t = t' + t''$ for some proto-neurons $t'$ and $t''$, 
        then $D = \frac{D_{t'}D_{t''}}{d}$ where $d$ is the greatest common divisor of $D_{t'}$ and $D_{t''}$. 
        We set $s_0 := \overline{\frac{D_{t''}}{d}}\overline{D_{t'}}t' + \overline{\frac{D_{t'}}{d}}\overline{D_{t''}}t''$.
        \item If $t = t't''$ for some proto-neurons $t'$ and $t''$, then $D = D_{t'}D_{t''}$ and therefore we define $s_0 := (\overline{D_{t'}}t')(\overline{D_{t''}}t'')$. 
        \item If $t = \ReLU(t')$ for some proto-neuron $t'$, then $D = D_{t'}$ and we set $s_0 := \ReLU(\overline{D_{t'}} t')$.
    \end{itemize}
    In each of the above cases, it is straightforward to show that $s_0$ is indeed equivalent to~$\overline{D}t$. 
    Now in the case of the two first bullets above, we are done. In the case of the last three bullets, we move into the subterm $\overline{D_{t'}}t'$ (and $\overline{D_{t''}}t''$) of $s_0$ and perform the same transformation to obtain a term $s_1$ equivalent to $s_0$.
    Recursively, we thus obtain $s_2$, $s_3$, etc.
    The process only branches in the case of the third and fourth bullets, and in each such case, the further transformations in one branch do not affect the further transformations in the other.
    When this process reaches the atomic level, we will have the proto-neuron $s$ that we seek. A simple induction shows that $s$ is equivalent to $\overline{D}t$ and contains only integers. 
\end{proof}

The next lemma shows that the proto-neuron $s$ constructed in the above lemma can be translated into $\RPL$ and $\LPHFF$.

\begin{lemma}\label{lemma: proto-neuron to RPL and LPHFF}
    Let $\cL$ be $\RPL$ or $\LPHFF$, and let $t$ be a proto-neuron that contains only integers. For each $i \in \R_+$, there exists some $K \in \Z_+$ such that for each rational number
    $k \geq K$ we can construct an $(i,k)$-equivalent formula $\varphi_t$ of $\cL$.
\end{lemma}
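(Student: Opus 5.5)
We mirror the construction in the proof of Lemma~\ref{lemma: terms to logic}, but avoid $\odot$ entirely. Since $t$ is a proto-neuron, Lemma~\ref{lemma: proto-neuron degrees} gives $\deg(t) \leq 1$. Hence in every product subterm $t't''$ at least one factor has degree $0$. A degree-$0$ proto-neuron contains no variables, so it evaluates to a fixed integer $c$ under every evaluation map, because $t$ contains only integers and integers are closed under $+$, $\cdot$ and $\ReLU$.

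Fix $i$ and set $j := \max\{i, r_{\max}, 2\}$ as before, where $r_{\max}$ bounds the absolute values of the integer constants of $t$. Then every subterm $t'$ satisfies $|\tmodel(t')| \leq j^{\length(t')}$ whenever $\tmodel(x) \in [-i,i]$ for all $x$. Put $K := \lceil \max\{ j^{2\length(t)}, 8\} \rceil$; any rational $k \geq K$ works. Taking $k$ rational keeps $\scale_k(r) = \frac{k+r}{2k}$ rational, so it is a legal truth constant of $\RPL$. For $\LPHFF$ we may use $\underline{\scale_k(r)}$, or invoke Proposition~\ref{proposition: RPL = LPHFF} at the end.

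We define $\varphi_s$ recursively over subterms $s$ of $t$:
\begin{itemize}
\item For a constant $\overline{r}$, set $\varphi_s := \overline{\scale_k(r)}$.
\item For a variable $x$, set $\varphi_s := p_x$.
\item For a sum $t'+t''$, set $\varphi_s := \varphi_{t'} \oplus' \varphi_{t''}$.
\item For $\ReLU(t')$, set $\varphi_s := (\varphi_{t'} \ominus \overline{\half}) \oplus \overline{\half}$.
\end{itemize}
None of these connectives uses $\odot$, so all are available in $\RPL$. Correctness for the sum and $\ReLU$ cases is exactly as in Lemma~\ref{lemma: terms to logic}, using Lemma~\ref{lemma: addition formula} and the bound $|\tmodel(t')| \leq \sqrt{k} \leq k/2$.

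The new case is a product $s = t't''$ where, say, $\deg(t') = 0$, so $t'$ has constant integer value $c$. We then compute $c$ and set $\varphi_s := \itsum'_{|c|}\varphi_{t''}$ if $c \geq 0$, and $\varphi_s := \negl \itsum'_{|c|}\varphi_{t''}$ if $c<0$. The case $c = 0$ gives $\overline{\half}$. Correctness follows from Lemma~\ref{lemma: semantics of iterated plus'} together with Lemma~\ref{lemma: negation lemma}. It is here that integrality of the constants is essential, since $\itsum'_n$ only realises integer multiples. Note that the factor $t'$ is discarded and only its value is used.

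The main obstacle is verifying the side condition of Lemma~\ref{lemma: semantics of iterated plus'}, namely $|\tmodel(t'')| \leq \frac{k}{|c|+1}$, and we would check it first. We have $|c| \leq j^{\length(t')}$ and $|\tmodel(t'')| \leq j^{\length(t'')}$. Since $j \geq 2$, it follows that $(|c|+1)\,|\tmodel(t'')| \leq 2 j^{\length(t')+\length(t'')} \leq j^{2\length(t)} \leq k$. The last step uses $\length(t) \geq 2$ in any case where a product occurs, so that $j^{\length(s)+1} \leq j^{2\length(t)}$. If this crude bound fails to be sufficient, we would enlarge $K$, say to $j^{3\length(t)}$, with a proportional adjustment of the inductive bounds. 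Lemma~\ref{lemma: negation lemma} only needs the value to lie in $[-k,k]$, which the same bound provides. Finally, every connective used belongs to $\RPL$ (respectively $\LPHFF$, where the constants become $\underline{\cdot}$ formulae of $\LPH_{\leq 0}$), so $\varphi_t$ lies in the required logic.
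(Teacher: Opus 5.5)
Your proposal is correct and follows the paper's proof essentially step for step. Like you, the paper reuses the construction and the bound $|\tmodel(s)|\le j^{\length(s)}$ from Lemma~\ref{lemma: terms to logic}, with $j=\max\{i,r_{\max},2\}$ and $K$ of order $j^{2\length(t)}$. It also replaces the product case by $\itsum'_{|z|}$, preceded by $\negl$ when $z<0$, via Lemmas~\ref{lemma: semantics of iterated plus'} and~\ref{lemma: negation lemma}, and checks the side condition of Lemma~\ref{lemma: semantics of iterated plus'} by the same kind of length estimate.

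The differences are cosmetic. The paper handles a product of two variable-free factors directly by the constant $\overline{\scale_k(z)}$, which your uniform treatment makes unnecessary. Your bookkeeping is, if anything, slightly more careful than the paper's write-up: you bound the constant factor by its own length, and you round $K$ up so that $K\in\Z_+$.
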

\begin{proof}
    Our strategy goes as follows.
    Because $t$ is a proto-neuron, one of the multiplicands in each multiplication does not contain any variables (and thus gets the same value in every evaluation map). Moreover, $t$ contains only integers. Because of these two facts, we can replace $\odot^k$ in the proof of Lemma~\ref{lemma: terms to logic} with the iterated sum operator $\itsum_{n}'$ defined in Appendix~\ref{appendix: auxiliary connectives}.
    We cover the case where $\cL$ is $\RPL$ as the case where $\cL$ is $\LPHFF$ follows by Proposition~\ref{proposition: RPL = LPHFF}.

    Let $i \in \R_+$, define $r_{\max}$, $j$ and $K$ as in the proof of Lemma~\ref{lemma: terms to logic} and let 
    $k \geq K$ be a rational number. We construct a formula $\varphi_{s}$ for each subterm $s$ of $t$ as in the proof of Lemma~\ref{lemma: terms to logic} with the exception of multiplications, where we define instead:
    \begin{itemize}
        \item If $s = t't''$ for some some proto-neurons $t'$ and $t''$, then $\deg(t') = 0$ or $\deg(t'') = 0$.
        \begin{itemize}
            \item If $\deg(t') = \deg(t'') = 0$, then there exists some $z \in \Z$ such that $\tmodel(s) = z$ for each evaluation map $\tmodel$. By the choice of $k$, we have $z \in [-k,k]$ meaning that $\scale_k(z) \in [0,1]$, and because $k,z \in \Q$, we also have $\scale_k(z) = \frac{k+z}{2k} \in \Q$ and thus $\scale_k(z) \in \Q \cap [0,1]$. We define $\varphi_s := \overline{\scale_k(z)}$.
            \item If $\deg(t') \neq 0$ or $\deg(t'') \neq 0$, then by symmetry and without loss of generality, we may assume that $\deg(t'') = 0$. Then there exists some $z \in \Z$ such that $\tmodel(t'') = z$ for each evaluation map $\tmodel$. We define
        \end{itemize}
        \[
            \varphi_s := \begin{cases}
                \itsum'_z \varphi_{t'} &\text{ if } z \in \N \\
                \negl \itsum'_{-z} \varphi_{t'} &\text{ if } z \in \Z \setminus \N.
            \end{cases}
        \] 
    \end{itemize}
    Before proving correctness, we note that in Lemma~\ref{lemma: terms to logic} we had to assume that $k$ is an integer because the connective $\odot^k$ used in the translation is only defined for integers. Here we can relax this assumption because the operator $\itsum'_z$ does not depend on $k$, but we still have to assume that $k$ is rational to ensure that the truth constants used in the translation are defined in $\RPL$.

    We modify the proof of the correctness of the construction from the proof of Lemma~\ref{lemma: terms to logic} as follows. Let $s$ be a subterm of $t$, let $\tmodel$ be an evaluation map such that $\tmodel(x) \in [-i,i]$ for each $x \in \VAR$ and let $\lmodel$ be the unique valuation such that $\lmodel(p) = \scale_k(\tmodel(x_p))$ for each $p \in \PROP$. We have to show that $\lmodel(\varphi_s) = \scale_k(\tmodel(s))$. The proof is again by induction over the structure of $s$ and otherwise identical to the proof of Lemma~\ref{lemma: terms to logic} except for multiplications where we do the following instead. Assume the claim holds for the proto-neurons $t'$ and $t''$, i.e., $\lmodel(\varphi_{t'}) = \scale_k(\tmodel(t'))$ and $\lmodel(\varphi_{t''}) = \scale_k(\tmodel(t''))$.

    Assume $s = t't''$.
    We must show that $\lmodel(\varphi_{s}) = \scale_k(\tmodel(s))$.
    If $\deg(t') = \deg(t'') = 0$, then $\varphi_s = \overline{\scale_k(\tmodel(s))}$ and thus $\lmodel(\varphi_s) = \scale_k(\tmodel(s))$.
    Assume next that $\deg(t') \neq 0$ or $\deg(t'') \neq 0$. By symmetry, we again assume that $\deg(t'') = 0$ and let $z := \tmodel(t'')$ whereby $z \in \Z$.
    We begin by showing that $\tmodel(t') \in \left[-\frac{k}{|z|+1}, \frac{k}{|z|+1}\right]$; by Lemma~\ref{lemma: semantics of iterated plus'}, this will imply that $\lmodel(\itsum'_{|z|} \varphi_{t'}) = \scale_k(|z|\tmodel(t'))$.

    We recall from the proof of Lemma~\ref{lemma: terms to logic} that $|\tmodel(t')| \leq \sqrt{k} = \frac{k}{\sqrt{k}}$. Since $k \geq K \geq j^{2 \length(t)}$, we have $\sqrt{k} \geq \sqrt{j^{2 \length(t)}} = j^{\length(t)}$ and thus $|\tmodel(t')| \leq \frac{k}{j^{\length(t)}}$. Because $j \geq 2$ and $\length(t) > \length(t')$, we have $j^{\length(t)} \geq j^{\length(t')} + 1$ and thus $|\tmodel(t')| \leq \frac{k}{j^{\length(t')} + 1}$. Lastly, we recall from the proof of Lemma~\ref{lemma: terms to logic} that $|z| = |\tmodel(t')| \leq j^{\length(t')}$ and thus $|\tmodel(t')| \leq \frac{k}{|z| + 1}$, i.e., $\tmodel(t') \in \left[-\frac{k}{|z|+1}, \frac{k}{|z|+1}\right]$. 
    Now by Lemma~\ref{lemma: semantics of iterated plus'}, we must have $\lmodel(\itsum'_{|z|} \varphi_{t'}) = \scale_k(|z|\tmodel(t'))$.  We are now ready to cover the two cases.
    \begin{itemize}
        \item If $z \in \N$, then $|z| = z$ and thus by the above observation we get
        \[
            \textstyle{\lmodel(\varphi_s) = \lmodel(\itsum'_z \varphi_{t'}) = \scale_k(z\tmodel(t')) = \scale_k(\tmodel(s)).}
        \]
        \item If $z \in \Z \setminus \N$, then $|z| = -z$. By the above, we get $\lmodel(\itsum'_{-z} \varphi_{t'}) = \scale_k(-z\tmodel(t'))$. Now by Lemma~\ref{lemma: negation lemma}, we get 
        \[
            \textstyle{\lmodel(\varphi_s) = \lmodel(\negl \itsum'_{-z} \varphi_{t'}) = \scale_k(-(-z\tmodel(t'))) = \scale_k(z\tmodel(t')) = \scale_k(\tmodel(s)).}
        \]
    \end{itemize}
    This concludes the proof of the correctness of the construction.
\end{proof}

\begin{example}
    Consider the neural network $N = (o)$ with the single output neuron 
    \[
        o = \ReLU\left(\overline{-8} \cdot \ReLU\left(\overline{4} x + \overline{8} y + \overline{-5}\right) + \overline{2} \cdot \ReLU(\overline{9} x + \overline{7} y + \overline{3}) + \overline{5}\right).
    \]
    An illustration of $N$ is given in Figure~\ref{figure: neuron to formula}.
    Let $i \in \R_+$ and let $k$ be a sufficiently large rational number.
    By the construction from Lemma~\ref{lemma: proto-neuron to RPL and LPHFF}, we obtain the following formula of $\RPL$ that is $(i,k)$-equivalent to the neuron $o$:
    \[
        \begin{aligned}
            \textstyle{\varphi_{o} :=} &\textstyle{\Bigl(\Bigl(\negl\itsum'_{8} \left(\left(\left(\itsum'_4 p_x \oplus' \itsum'_{8} p_y \oplus' \overline{\scale_{k}(-5)}\right) \ominus \overline{\half}\right) \oplus \overline{\half}\right)} \\
            &\textstyle{\oplus' \itsum'_2 \left(\left(\left(\itsum'_{9} p_x \oplus' \itsum'_{7} p_y \oplus' \overline{\scale_k(3)}\right) \ominus \overline{\half}\right) \oplus \overline{\half}\right) \oplus' \overline{\scale_k(5)}\Bigr) \ominus \overline{\half}\Bigr) \oplus \overline{\half}.}
        \end{aligned}
    \]
\end{example}

\begin{figure}[ht]
    \begin{center}
        \begin{tikzpicture}
            \node at (-1,0) () {$N$:};
    
            \node[draw, circle] at (0,0) (i1) {$x$};
            \node[draw, circle] at (0,-2) (i2) {$y$};

            \node[draw, circle] at (3,0) (h1) {$-5$};
            \node[draw, circle] at (3,-2) (h2) {$+ 3$};

            \node[draw, circle] at (6,-1) (o1) {$+5$};

            \draw[->] (i1) to (h1);
            \draw[->] (i1) to (h2);
            \draw[->] (i2) to (h1);
            \draw[->] (i2) to (h2);
            \draw[->] (h1) to (o1);
            \draw[->] (h2) to (o1);

            \filldraw[] (2.125,0) node[anchor=south]{$\cdot \, 4$};
            \filldraw[] (2.5,-0.9) node[anchor=south]{$\cdot \, 8$};
            \filldraw[] (2.5,-1.6) node[anchor=south]{$\cdot \, 9$};
            \filldraw[] (2.125,-2) node[anchor=north]{$\cdot \, 7$};

            \filldraw[] (5.125,-0.7) node[anchor=south]{$\cdot \, (-8)$};
            \filldraw[] (5.125,-1.3) node[anchor=north]{$\cdot \, 2$};
        \end{tikzpicture}
        \caption{An illustration of the neural network $N$. Weights are written along the edges and biases are written inside the neurons.}
        \label{figure: neuron to formula}
    \end{center}
\end{figure}

Now we can combine Lemmas~\ref{lemma: neural network inflation} and \ref{lemma: proto-neuron to RPL and LPHFF} to obtain a translation from proto-neurons to formulae of $\RPL$ and $\LPHFF$.

\begin{lemma}\label{lemma: NN-terms to logic}
    Let $\cL$ be $\RPL$ or $\LPHFF$.
    For each proto-neuron and each $i \in \R_+$, there exists some $K \in \Z_+$ such that for each rational number
    $k \geq K$ we can construct an $(i,k)$-equivalent formula of~$\cL$.
\end{lemma}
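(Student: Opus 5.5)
The plan is to compose Lemma~\ref{lemma: neural network inflation} with Lemma~\ref{lemma: proto-neuron to RPL and LPHFF} and then undo the inflation factor inside the logic. Let $t$ be a proto-neuron and $i \in \R_+$. First, by Lemma~\ref{lemma: neural network inflation}, I obtain $D \in \Z_+$ and an integer-only proto-neuron $s$ equivalent to $\overline{D}t$. Second, by Lemma~\ref{lemma: proto-neuron to RPL and LPHFF} applied to $s$ and $i$, there is some $K_s \in \Z_+$ such that for every rational $k' \geq K_s$ there is a formula $\varphi_s$ of $\cL$ that is $(i,k')$-equivalent to $s$. For any evaluation map $\tmodel$ with values in $[-i,i]$ this gives
\[
\lmodel'(\varphi_s) = \scale_{k'}(D\,\tmodel(t)) = \frac{k' + D\tmodel(t)}{2k'} = \scale_{k'/D}(\tmodel(t)),
\]
where $\lmodel'(p) = \scale_{k'}(\tmodel(x_p))$.

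So the output is already correctly scaled, but with respect to $k := k'/D$. The trouble is that the inputs are scaled with $k'$, not with $k$. I therefore need to rescale the inputs: in $\varphi_s$, replace each proposition symbol $p$ by a formula $\chi_p$ of $\cL$ such that, whenever $\lmodel(p) = \scale_k(a)$ with $a \in [-i,i]$, we have $\lmodel(\chi_p) = \scale_{Dk}(a)$. Since $\scale_{Dk}(a) = \frac{1}{D}\scale_k(a) + \frac{D-1}{2D}$, I would take
\[
\chi_p := \left(p \odot \overline{\tfrac{1}{D}}\right) \oplus \overline{\tfrac{D-1}{2D}}
\]
in $\RPLM$. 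This is not available in $\RPL$, however.

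Instead I would express division by $D$ via the $\oplus'$/$\ominus$ machinery. An alternative that avoids division entirely is to apply Lemma~\ref{lemma: neural network inflation} differently: inflate each input variable itself. Concretely, I would replace $t$ by $t[x \mapsto \overline{\tfrac1D}\cdot \overline{D} x]$, push the constant $\overline{\tfrac{1}{D}}$ outward, and absorb it so that the inflated term $s$ is built from the atoms $\overline{D}x$, i.e.\ $s = \overline{D}t$ with all integers. Then the translation of $s$ uses $\itsum'_D p_x$ for the atoms, and that is legitimate in $\RPL$.

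The cleanest fix I would actually pursue is this. The proof of Lemma~\ref{lemma: proto-neuron to RPL and LPHFF} only uses integrality at multiplications. So I would handle the final division by $D$ at the output rather than at the inputs. I take $\varphi_s$ with respect to $k' = Dk$, which is $(i,Dk)$-correct for the inputs but gives $\scale_{Dk}(D\tmodel(t)) = \scale_k(\tmodel(t))$ at the output. Hence the inputs must be scaled with $Dk$ while the output is scaled with $k$. This mismatch is the main obstacle: I need a single $k$ for both inputs and outputs.

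To resolve it I would take inputs scaled by $k$ and produce atoms at scale $Dk$ via $\chi_p := \itsum'_{?}$. That fails, because $\itsum'$ multiplies rather than divides. So instead I would keep the scale $k$ throughout and modify Lemma~\ref{lemma: neural network inflation}'s recursion so that the only non-integer constant left is a single outer factor $\overline{1/D}$. I then translate that factor by an $\RPL$-definable division. For $\ell = \tmodel(s)$ with $\scale_k(\ell)$ given, I want $\scale_k(\ell/D)$. I expect to realise this by the usual trick that $\overline{\tfrac1D}$-multiples are definable in Łukasiewicz logic with rational constants: by McNaughton-type reasoning, the function $v \mapsto \frac{v}{D} + \frac{D-1}{2D}$ on $[0,1]$ is a McNaughton function with rational coefficients, hence expressible in $\RPL$ (using Amato--Di Nola--Gerla). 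Giving an explicit construction of this formula, or citing that representation theorem, is where I expect the real work to lie. With this in hand, choosing $K := \max\{K_s, \ldots\}$ finishes the proof. The $\LPHFF$ case then follows by Proposition~\ref{proposition: RPL = LPHFF}.
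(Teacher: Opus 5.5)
You have found a real obstacle: composing Lemma~\ref{lemma: neural network inflation} with Lemma~\ref{lemma: proto-neuron to RPL and LPHFF} gives a formula whose atoms must be scaled by $Dk$ while its output is scaled by $k$. The paper's proof makes exactly this composition. It then applies the $(i,kD)$-equivalence of $\varphi_s$ to the valuation $\lmodel(p)=\scale_k(\tmodel(x_p))$, which that equivalence does not cover.

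Your repair fails at its last step, because $v\mapsto \frac{v}{D}+\frac{D-1}{2D}$ is not definable in $\RPL$. The Amato--Di Nola--Gerla representation is for Rational {\L}ukasiewicz logic, which has division operators $\delta_n$; $\RPL$ only has rational constants. Fix all atoms but one. By induction on formulae, every formula of $\RPL$ then defines a continuous piecewise affine function of the remaining value, with finitely many pieces, each of integer slope. Constants and fixed atoms have slope $0$, the free atom has slope $1$, and $\min\{1,1-f+g\}$ only adds and subtracts slopes. By Proposition~\ref{proposition: RPL = LPHFF}, the same holds for $\LPHFF$.

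No repair is possible, because the statement as given fails for non-integer weights. Take $t=\overline{\half}\,x$; the neuron $\ReLU(\overline{\half}x+\overline{0})$ on inputs $a\geq 0$ behaves the same way. Suppose $\varphi$ were $(i,k)$-equivalent to $t$, and fix all other variables at $0$. Writing $v=\lmodel(p_x)=\scale_k(a)$, we would need $\lmodel(\varphi)=\scale_k(a/2)=\frac{v}{2}+\frac14$ for every $v$ in the nondegenerate interval $[\scale_k(-i),\scale_k(i)]$. Each integer-slope piece meets this slope-$\half$ line in at most one point, so finitely many pieces cannot cover the interval. Concretely, the paper's construction gives $D=2$, $s=\overline{1}\cdot x$ and $\varphi_s=p_x$ here, which is wrong.

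What does go through is the following.
\begin{itemize}
\item The composition is correct when $t$ contains only integers, since then $D=1$; this is just Lemma~\ref{lemma: proto-neuron to RPL and LPHFF}.
\item Your first idea works if you give up $\RPL$. Substitute $\chi_p=(p\odot\overline{1/D})\oplus\overline{\frac{D-1}{2D}}$ for each atom $p$ of the $(i,Dk)$-equivalent $\varphi_s$. Then $\lmodel(\chi_p)=\scale_{Dk}(a)$ whenever $\lmodel(p)=\scale_k(a)$ with $|a|\leq k$. This proves the statement with $\RPLM_{\leq 1}$ (equivalently $\LPHF_{\leq 1}$) in place of $\RPL$ and $\LPHFF$.
\end{itemize}
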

\begin{proof}
    Let $t$ be a proto-neuron. By Lemma~\ref{lemma: neural network inflation} there exists some $D \in \Z_+$ such that we can construct a 
    proto-neuron $s$ equivalent to $\overline{D}t$ such that $s$ contains only integers. Then by Lemma~\ref{lemma: proto-neuron to RPL and LPHFF}, for each $i \in \R_+$ there exists some $L_i \in \Z_+$ such that for each rational number $\ell \geq L_i$ we can construct an $(i,\ell)$-equivalent formula $\varphi_s$ of $\cL$. 
    
    Now let $i \in \R_+$, let $K := \left\lceil\frac{L_i}{D}\right\rceil$ and let 
    $k \geq K$ be a rational number. Now $kD \geq L_i$ and we can construct a formula $\varphi_s$ of $\cL$ that is $(i, kD)$-equivalent to $s$.
    It remains to show that $\varphi_{s}$ is $(i,k)$-equivalent to $t$. Let $\tmodel$ be an evaluation map such that $\tmodel(x) \in [-i,i]$ for each $x \in \VAR$, and let $\lmodel$ be the unique valuation such that 
    $\lmodel(p) = \scale_{k}(\tmodel(x_p))$ 
    for all $p \in \PROP$. We have to show that 
    $\lmodel(\varphi_{s}) = \scale_{k}(\tmodel(t))$. 
    Because $\varphi_{s}$ is $(i,kD)$-equivalent to $s$, we have $\lmodel(\varphi_{s}) = \scale_{kD}(\tmodel(s))$. Because $s$ is 
    equivalent to $\overline{D}t$
    we have $\tmodel(s) = \tmodel(\overline{D}t) = D\tmodel(t)$ and thus $\lmodel(\varphi_s) = \scale_{kD}(D\tmodel(t))$.
    Now by the definition of $\scale_k$, we get
    \[
        \begin{aligned}
            \textstyle{\lmodel(\varphi_{s})\,} 
            &\textstyle{= \scale_{kD}(D\tmodel(t))} \\
            &\textstyle{= \frac{kD + D\tmodel(t)}{2kD}} \\
            &\textstyle{= \frac{k + \tmodel(t)}{2k}} \\
            &\textstyle{= \scale_k(\tmodel(t)).}
        \end{aligned}
    \]
    Thus, $\varphi_{s}$ is $(i,k)$-equivalent to $t$.

    We note that if we are interested in translating $t$ into a formula of $\RPLM_{\leq 1}$ or $\LPHF_{\leq 1}$ instead, then the application of Lemmas~\ref{lemma: neural network inflation} and \ref{lemma: proto-neuron to RPL and LPHFF} is not necessary (in the case where $k$ is an integer). Instead, by Lemma~\ref{lemma: proto-neuron degrees}, applying Lemma~\ref{lemma: terms to logic} to $t$ would directly result in a formula of $\RPLM_{\leq 1}$ or $\LPHF_{\leq 1}$. 
\end{proof}

Now, we are ready to prove Theorem~\ref{theorem: neural networks = RPLM1 = LPHF1}.

\secondtheorem*
\begin{proof}
    Let $\cL$ be any of the mentioned logics.
    Given a neural network $(n_1, \dots, n_m)$, we can apply the same reasoning as in the proof of Theorem~\ref{theorem: R-algebra = RPLM = LPHF} to find for each $i \in \R_+$ some $K \in \Z_+$ such that for each rational number 
    $k \geq K$ we can construct a tuple $(\varphi_1, \dots, \varphi_m)$ of formulae of $\cL$ such that $\varphi_j$ is $(i,k)$-equivalent to $n_j$ for each $j \in \{1, \dots, m\}$.

    Given a tuple $(\varphi_1, \dots, \varphi_m)$ of formulae of $\cL$, we first use Lemma~\ref{lemma: logic to neurons} to obtain for each $\varphi_i$ an equivalent neuron $n_i$. Now, as each $n_i$ can be interpreted as a neural network with a single output neuron, by Lemma~\ref{lemma: combining networks} we can replace $n_1, \dots, n_m$ with neurons $n_1', \dots, n_m'$ such that each $n_j'$ is $[0,1]$-equivalent to $n_j$ (and thus equivalent to $\varphi_j$), and such that $n_1', \dots, n_m'$ all have the same depth $d$ and are identical up to depth $d-1$. This makes $(n_1', \dots, n_m')$ a neural network.
\end{proof}

\section{Translations between neurons and {\L}ukasiewicz logic}\label{appendix: lukasiewicz logic}

In this section, we will discuss the extent to which {\L}ukasiewicz logic can express neural networks. We will first show some interesting places where our previous constructions fail when restricting to {\L}ukasiewicz logic. Then, we will restrict our attention to a subclass of proto-neurons and the fragment of {\L}ukasiewicz logic obtained by omitting the truth constant $\overline{0}$ from the syntax, and show translations back and forth.
However, the result is not a fuzzy logic characterisation in the sense of Definition~\ref{definition: logical characterisation}, because the translation to logic is more restricted than demanded by the definition.

\subsection{Applying the previous translations to {\L}ukasiewicz logic}

By restricting Lemma~\ref{lemma: logic to neurons} to {\L}ukasiewicz logic,
we obtain the following corollary.

\begin{corollary}\label{corollary: Lukasiewicz logic to neurons}
    For each formula $\varphi$ of {\L}ukasiewicz logic, we can construct an equivalent neuron $n_{\varphi}$. 
\end{corollary}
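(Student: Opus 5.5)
The plan is to reduce the corollary to Lemma~\ref{lemma: logic to neurons}: exhibit every formula of {\L}ukasiewicz logic as a formula of $\RPLM_{\leq 1}$, and then invoke that lemma directly.

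\textbf{Step 1: the embedding.} The syntax of {\L}ukasiewicz logic is $\varphi ::= \overline{0} \,|\, p \,|\, \varphi \impl \varphi$. I will show, by induction on the structure of $\varphi$, that every such formula is a formula of $\RPLM_{\leq 1}$.
\begin{itemize}
    \item The constant $\overline{0}$ is a truth constant $\overline{r}$ with $r = 0 \in \Q \cap [0,1]$. Hence it is a formula of $\RPLM_{\leq 0}$, and therefore of $\RPLM_{\leq 1}$ by the hierarchy property, or equivalently by Lemma~\ref{lemma:degrees of formulae}.
    \item Each proposition symbol $p$ is a formula of $\RPLM_{\leq 1}$ by Lemma~\ref{lemma:degrees of formulae}.
    \item If $\psi$ and $\theta$ are formulae of $\RPLM_{\leq 1}$, then the grammar of $\RPLM_{\leq 1}$ gives that $\psi \impl \theta$ is one too. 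Alternatively, Lemma~\ref{lemma:degrees of formulae} places it in $\RPLM_{\leq \max\{1,1\}}$.
\end{itemize}

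\textbf{Step 2: the semantics agree.} The canonical semantics of $\RPLM$ restricted to $\overline{0}$, proposition symbols and $\impl$ coincide with those of {\L}ukasiewicz logic. So under the embedding, $\varphi$ read as an $\RPLM_{\leq 1}$-formula receives the same truth value $\lmodel(\varphi)$ in every valuation $\lmodel$. This means the notion of equivalence between a term and a formula is the same whichever logic $\varphi$ is regarded as belonging to.

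\textbf{Step 3: conclusion.} Applying Lemma~\ref{lemma: logic to neurons} with $\cL = \RPLM$ yields a neuron $n_\varphi$ equivalent to $\varphi$. That lemma's construction is explicit, so $n_\varphi$ is constructed, as the statement requires.

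There is no real obstacle here. The only point needing care is Step 2, checking that the equivalence notion does not depend on which logic's semantics are used; this is immediate because the clauses for the shared connectives are literally identical.
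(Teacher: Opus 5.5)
Your proposal is correct and follows the paper's route: the paper obtains this corollary simply by restricting Lemma~\ref{lemma: logic to neurons} to {\L}ukasiewicz logic, i.e., by treating {\L}ukasiewicz formulae as formulae of $\RPLM_{\leq 1}$ with the same semantics. Your Steps 1 and 2 just spell out that embedding, which the paper leaves implicit.
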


Next, we will consider the opposite direction and show some interesting places where the recursive construction from Lemmas~\ref{lemma: terms to logic} and \ref{lemma: proto-neuron to RPL and LPHFF} fails. The following proposition shows that the construction fails already in the atomic step for almost all rational constants.

\begin{proposition}\label{proposition: Lukasiewicz logic cannot handle rationals}
    Let $i,k \in \R_+$ such that $i \leq k$ and let $r \in \Q \setminus \{-k,k\}$. There exists no formula of {\L}ukasiewicz logic that is $(i,k)$-equivalent to $\overline{r}$.
\end{proposition}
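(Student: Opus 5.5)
The plan is to use the fact that the functions defined by formulae of {\L}ukasiewicz logic are McNaughton functions: continuous, piecewise linear, with finitely many pieces, each of which has integer coefficients. Combined with the fact that the admissible valuations fill an open region, this should force any formula $(i,k)$-equivalent to a constant to take an integer value, which pins $r$ to $\pm k$.

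Suppose towards a contradiction that $\varphi$ is $(i,k)$-equivalent to $\overline{r}$. Let $p_1,\dots,p_m$ be the proposition symbols occurring in $\varphi$.

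If $m=0$, then $\varphi$ is built from $\overline{0}$ and $\impl$ only. An immediate induction shows that $\lmodel(\varphi)\in\{0,1\}$, because $\{0,1\}$ is closed under $(a,b)\mapsto\min\{1,1-a+b\}$. So $\scale_k(r)\in\{0,1\}$, that is, $r\in\{-k,k\}$, which contradicts the hypothesis on $r$.

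If $m\geq 1$, consider the evaluation maps $\tmodel$ with $\tmodel(x_{p_j})\in[-i,i]$. Their associated valuations give $(\lmodel(p_1),\dots,\lmodel(p_m))$ ranging over the whole cube $C=\bigl[\frac{k-i}{2k},\frac{k+i}{2k}\bigr]^m$. Since $i>0$, this cube has nonempty interior. On $C$ the function $v\mapsto\lmodel(\varphi)$ is constantly $\scale_k(r)$.

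Next I would prove, by induction on $\varphi$, that $f_\varphi\colon[0,1]^m\to[0,1]$ is continuous and piecewise linear. Concretely, $[0,1]^m$ is covered by finitely many closed polyhedra, and on each one $f_\varphi(v)=c+\sum_j a_jv_j$ with $a_j,c\in\Z$. Alternatively I could simply cite McNaughton's theorem \cite{mcnaughton1951}. The induction steps are as follows.
\begin{itemize}
\item The constant $\overline{0}$ and the projections $p_j$ are linear with integer coefficients.
\item For $\psi\impl\theta$, the value is $\min\{1,\,1-f_\psi+f_\theta\}$. Refining the two piece decompositions by the hyperplane where $1-f_\psi+f_\theta=1$ keeps finitely many pieces, and on each piece the formula is either the constant $1$ or an integer-coefficient affine function.
\end{itemize}

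Since $C$ has nonempty interior and is covered by finitely many closed pieces, some piece meets $C$ in a set with nonempty interior; this is a Baire or measure argument, or simply the observation that finitely many nowhere-dense sets cannot cover an open set. On that open set, $c+\sum_j a_jv_j$ is constant, so every $a_j=0$ and the constant value equals $c\in\Z\cap[0,1]=\{0,1\}$. Hence $\scale_k(r)\in\{0,1\}$, so again $r=\pm k$, a contradiction.

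The main obstacle is rigorously establishing the integer piecewise-linear structure, together with the fact that some piece has interior inside $C$. I would rely on the cited McNaughton representation rather than redo it. The point to stress is that the case $r=0$, where $\scale_k(r)=\half$, is ruled out only by this interior argument, not by evaluating at a single point. Indeed, at the valuation with every $\lmodel(p_j)=\half$ all formulae take values in $\{0,\half,1\}$, so a single-point evaluation cannot exclude $\half$.
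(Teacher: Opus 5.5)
Your proof is correct and follows the same route as the paper. Both arguments use McNaughton's representation of {\L}ukasiewicz-definable functions as piecewise linear with integer coefficients, and both use the fact that the admissible valuations fill a nondegenerate subcube, which forces $\scale_k(r)\in\{0,1\}$ and hence $r=\pm k$. Your explicit interior argument (some closed piece has nonempty interior inside $C$, so the affine function there is an integer constant) justifies this step more carefully than the paper does. The paper instead remarks that an integer-coefficient polynomial attains the value $r$ at only finitely many inputs, which is literally true only in one variable.
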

\begin{proof}
    Theorem 2 in \cite{mcnaughton1951} states that for a function $[0,1]^n \to [0,1]$ to be equivalent to a formula of {\L}ukasiewicz logic, it must be representable by a finite number of polynomials with integer coefficients. In particular, this holds for constant functions. This implies that there is no formula of {\L}ukasiewicz logic that is equivalent to $\overline{r}$ for any $r \in \Q \,\cap\, ]0,1[$, as each polynomial with integer coefficients can only achieve the value $r$ with finitely many inputs. 
    This naturally holds even if we restrict to only those valuations where the truth values of proposition symbols belong to some subinterval $[a,b]$ of $[0,1]$ where $a \neq b$, as is the case with $(i,k)$-equivalence. Thus, for a formula $\varphi$ to be $(i,k)$-equivalent to a term $\overline{r}$ for some $r \in \Q$, we must have $\scale_k(r) \in \{0,1\}$, which implies that $r \in \{-k,k\}$.
\end{proof}

The construction from Lemmas~\ref{lemma: terms to logic} and \ref{lemma: proto-neuron to RPL and LPHFF} also fails in the recursion step. For example, the following proposition shows that even adding variables together is not possible to simulate in {\L}ukasiewicz logic due to the scaling involved.

\begin{proposition}\label{proposition: Lukasiewicz logic cannot handle sum}
    Let $i,k \in \R_+$ such that $i \leq k$ and let $x,y \in \VAR$.
    There exists no formula of {\L}ukasiewicz logic that is $(i,k)$-equivalent to the proto-neuron $x+y$.
\end{proposition}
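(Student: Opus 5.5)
We argue by contradiction: the scaling forces any candidate formula to compute an affine function whose constant term is $-\half$, which McNaughton's theorem (already used in the proof of Proposition~\ref{proposition: Lukasiewicz logic cannot handle rationals}) rules out. Suppose $\varphi$ is a formula of {\L}ukasiewicz logic that is $(i,k)$-equivalent to $x+y$.

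\textbf{Step 1: reformulate the hypothesis.} Let $q_1, \dots, q_n$ be the proposition symbols occurring in $\varphi$. We may assume they include $p_x$ and $p_y$; adding unused symbols changes nothing. For $a \in [-i,i]$, write $u = \scale_k(a)$, so that $a = 2ku - k$. A direct computation gives
\[
    \scale_k(a+b) = \frac{k + (2ku-k) + (2kv-k)}{2k} = u + v - \half ,
\]
where $u = \scale_k(a)$ and $v = \scale_k(b)$. Let $c = \scale_k(-i)$ and $c' = \scale_k(i)$. Since $i>0$ and $i \leq k$, we have $0 \leq c < \half < c' \leq 1$. Put $B := [c,c']^n$.

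Every point of $B$ arises as a valuation $\lmodel(p) = \scale_k(\tmodel(x_p))$ for some evaluation map $\tmodel$ with $\tmodel(x) \in [-i,i]$ for all $x$. Hence $(i,k)$-equivalence yields
\[
    f_\varphi(w) = w_{p_x} + w_{p_y} - \half \quad \text{for all } w \in B .
\]
Here $f_\varphi\colon[0,1]^n \to [0,1]$ is the truth function of $\varphi$. If $x$ and $y$ are the same variable, the right-hand side is instead $2w_{p_x} - \half$. This case is handled identically, since the only feature used below is the constant term $-\half$.

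\textbf{Step 2: apply McNaughton's theorem.} By \cite{mcnaughton1951}, as invoked in the proof of Proposition~\ref{proposition: Lukasiewicz logic cannot handle rationals}, there are finitely many linear polynomials $L_1, \dots, L_m$ with integer coefficients such that for every $w \in [0,1]^n$ we have $f_\varphi(w) = L_j(w)$ for some $j$. For each $j$, the set $C_j := \{w \in B \mid f_\varphi(w) = L_j(w)\}$ is closed, because $f_\varphi$ is continuous. The sets $C_1, \dots, C_m$ cover $B$, and $B$ has nonempty interior in $\R^n$. By the Baire category theorem, or simply because a finite union of closed sets with empty interior has empty interior, some $C_j$ contains a nonempty open set $U$.

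\textbf{Step 3: compare coefficients.} On $U$ the two affine functions $L_j$ and $w \mapsto w_{p_x} + w_{p_y} - \half$ agree. Affine functions that agree on a nonempty open subset of $\R^n$ have identical coefficients. In particular, the constant term of $L_j$ equals $-\half \notin \Z$, contradicting the integrality of the coefficients of $L_j$. Hence no such $\varphi$ exists.

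The main obstacle is Step 2: stating McNaughton's theorem precisely enough (finitely many integer-coefficient linear pieces covering the whole cube) and extracting a single piece on an open set. The rest is the scaling computation of Step 1 and linear algebra.
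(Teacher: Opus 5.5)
Your proof is correct, but it takes a different route from the paper.

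\textbf{The paper's route.} It never examines the piecewise-linear structure of $\varphi$ directly. Instead it substitutes $\negl p_x$ for $p_y$ and uses Lemma~\ref{lemma: negation lemma} to get a formula $\psi$ with $\lmodel(\psi) = \scale_k(\tmodel(x) - \tmodel(x)) = \half$ on all admissible inputs. So $\psi$ is $(i,k)$-equivalent to $\overline{0}$, which contradicts Proposition~\ref{proposition: Lukasiewicz logic cannot handle rationals}. McNaughton's theorem enters only through that proposition. This reduction is short and modular, and the same substitution trick is reused for $\ReLU(x)$ in Proposition~\ref{proposition: Lukasiewicz logic cannot handle ReLU}.

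\textbf{What your route adds.} Your direct argument is self-contained, and it supplies the rigorous core that the paper's proof of Proposition~\ref{proposition: Lukasiewicz logic cannot handle rationals} only sketches. In several variables an integer-coefficient linear piece can take a given value at infinitely many points. What is really needed is exactly your step of isolating one piece on an open set and then comparing coefficients.

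Your argument also covers $x = y$. The statement does not exclude this case, and there the paper's substitution would yield $\scale_k(-2\tmodel(x))$ rather than a constant.

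Finally, your argument generalises at once. For an affine target $\sum_l a_l x_l + b$, an $(i,k)$-equivalent formula can exist only if every $a_l \in \Z$ and $\frac{1-\sum_l a_l}{2} + \frac{b}{2k} \in \Z$.
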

\begin{proof}
    For the sake of contradiction, assume that $\varphi$ is a formula of {\L}ukasiewicz logic $(i,k)$-equivalent to $x+y$.
    Let $\tmodel$ be an evaluation map such that $\tmodel(z) \in [-i,i]$ for all $z \in \VAR$ and let $\lmodel$ be the valuation such that $\lmodel(p) = \scale_k(\tmodel(x_p))$ for all $p \in \PROP$. Because $\varphi$ is $(i,k)$-equivalent to $x+y$, we must have $\lmodel(\varphi) = \scale_k(\tmodel(x+y)) = \scale_k(\tmodel(x) + \tmodel(y))$.

    Now, let $\psi$ denote the formula obtained from $\varphi$ by replacing each instance of $p_y$ with $\negl p_x$. Now Lemma~\ref{lemma: negation lemma} gives us $\lmodel(\negl p_x) = \scale_k(-\tmodel(x)) = \scale_k(\tmodel(-x))$. This means that $\lmodel(\psi)$ depends on $\tmodel(-x)$ in exactly the same way that $\lmodel(\varphi)$ depends on $\tmodel(y)$, i.e., 
    \[
        \textstyle{\lmodel(\psi) = \scale_k(\tmodel(x) + \tmodel(-x)) = \scale_k(\tmodel(x) - \tmodel(x)) = \scale_k(0) = \scale_k\left(\tmodel\left(\overline{0}\right)\right).}
    \]
    This is a contradiction, because $\psi$ is now a formula of {\L}ukasiewicz logic that is $(i,k)$-equivalent to $\overline{0}$, which is not possible by Proposition~\ref{proposition: Lukasiewicz logic cannot handle rationals}.
\end{proof}

The construction from Lemmas~\ref{lemma: terms to logic} and \ref{lemma: proto-neuron to RPL and LPHFF} also fails when we try to apply the $\ReLU$-function to a simple variable, again due to the scaling involved.

\begin{proposition}\label{proposition: Lukasiewicz logic cannot handle ReLU}
    Let $i,k \in \R_+$ such that $i \leq k$. There exists no formula of {\L}ukasiewicz logic that is $(i,k)$-equivalent to the proto-neuron $\ReLU(x)$.
\end{proposition}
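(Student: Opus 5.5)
The plan is to reduce the claim to Proposition~\ref{proposition: Lukasiewicz logic cannot handle rationals}, using the same substitution idea as in the proof of Proposition~\ref{proposition: Lukasiewicz logic cannot handle sum}. Suppose for contradiction that $\varphi$ is a formula of {\L}ukasiewicz logic that is $(i,k)$-equivalent to $\ReLU(x)$. Fix $\tmodel$ with $\tmodel(z) \in [-i,i]$ for all $z \in \VAR$, and let $\lmodel$ be the corresponding scaled valuation. Then
\[
    \lmodel(\varphi) = \scale_k(\max\{0,\tmodel(x)\}) = \max\{\tfrac{1}{2}, \lmodel(p_x)\}.
\]

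Let $\varphi^{\neg}$ be the formula obtained from $\varphi$ by replacing every occurrence of $p_x$ with $\negl p_x$. By Lemma~\ref{lemma: negation lemma}, $\lmodel(\negl p_x) = \scale_k(-\tmodel(x))$, and $-\tmodel(x) \in [-i,i]$ since the interval is symmetric. So $\varphi^{\neg}$ is evaluated exactly as $\varphi$ would be on an admissible evaluation map sending $x$ to $-\tmodel(x)$, which gives $\lmodel(\varphi^{\neg}) = \max\{\tfrac{1}{2}, 1-\lmodel(p_x)\}$.

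Now set $\psi := \varphi \land \varphi^{\neg}$. This is still a formula of {\L}ukasiewicz logic, because $\land$ is defined from $\impl$ and $\overline{0}$. Its value is
\[
    \lmodel(\psi) = \min\{\max\{\tfrac{1}{2}, v\}, \max\{\tfrac{1}{2}, 1-v\}\}, \qquad v = \lmodel(p_x).
\]
One of $v$ and $1-v$ is at most $\tfrac{1}{2}$, and both maxima are at least $\tfrac{1}{2}$, so $\lmodel(\psi) = \tfrac{1}{2}$. Hence $\psi$ is $(i,k)$-equivalent to $\overline{0}$, since $\scale_k(0) = \tfrac{1}{2}$. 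This contradicts Proposition~\ref{proposition: Lukasiewicz logic cannot handle rationals}, because $0 \notin \{-k,k\}$ for $k > 0$.

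The one step that needs care is the appeal to Proposition~\ref{proposition: Lukasiewicz logic cannot handle rationals}. Its proof uses McNaughton's theorem restricted to valuations whose values lie in the nondegenerate box $[\scale_k(-i), \scale_k(i)]$; this box is nondegenerate because $i > 0$. On it, $\psi$ is piecewise given by finitely many linear polynomials with integer coefficients. At least one such piece agrees with the constant $\tfrac{1}{2}$ on a set with nonempty interior. That piece must then have all variable coefficients equal to $0$ and constant term $\tfrac{1}{2}$, which is not an integer. I expect no further obstacle: the substitution argument only uses the symmetry of $[-i,i]$ under negation, exactly as in Proposition~\ref{proposition: Lukasiewicz logic cannot handle sum}.
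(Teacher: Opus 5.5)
Your proof is correct. It ends where the paper's does, with a formula $(i,k)$-equivalent to $\overline{0}$ that contradicts Proposition~\ref{proposition: Lukasiewicz logic cannot handle rationals}, but it gets there by a different substitution.

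The paper replaces $p_x$ in $\varphi$ by $\negl\varphi$ itself. The new formula then simulates the composite $\ReLU(-\ReLU(x))$, which is identically $0$. You instead replace $p_x$ by $\negl p_x$, exactly as in the proof of Proposition~\ref{proposition: Lukasiewicz logic cannot handle sum}, and then take $\varphi \land \varphi^{\neg}$. That formula simulates $\min\{\ReLU(x), \ReLU(-x)\} = 0$.

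The paper's route needs no extra connective. However, it relies without comment on $-\max\{0,\tmodel(x)\}$ lying in $[-i,i]$, which is what allows the $(i,k)$-equivalence of $\varphi$ to be applied to the substituted valuation. In your route this admissibility check is just the symmetry of $[-i,i]$. Your closed form $\lmodel(\varphi) = \max\{\half, \lmodel(p_x)\}$ also reduces the computation to a one-line case split.

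Your remark on the McNaughton step is more careful than the paper's wording. An integer-coefficient linear function of several variables can take a non-integer value at infinitely many points; for example, $2x_1$ equals $\half$ on the whole hyperplane $x_1 = \frac{1}{4}$. So the right argument is the one you give. One of the finitely many linear pieces must agree with the constant $\half$ on a set with nonempty interior, so that piece would have to be the constant $\half$, which is not an integer.
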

\begin{proof}
    The proof is similar to the proof of Proposition~\ref{proposition: Lukasiewicz logic cannot handle sum}.
    For the sake of contradiction, let $\varphi$ be a formula of {\L}ukasiewicz logic that is $(i,k)$-equivalent to $\ReLU(x)$. 
    Let $\tmodel$ be an evaluation map such that $\tmodel(y) \in [-i,i]$ for all $y \in \VAR$, and let $\lmodel$ be a valuation such that $\lmodel(p) = \scale_k(\tmodel(x_p))$ for all $p \in \PROP$. Because $\varphi$ is $(i,k)$-equivalent to $\ReLU(x)$, we have $\lmodel(\varphi) = \scale_k(\tmodel(\ReLU(x))) = \scale_k(\max\{0,\tmodel(x)\})$.
    
    Now let $\psi$ be the formula obtained from $\varphi$ by replacing each instance of $p_x$ with $\negl\varphi$. By Lemma~\ref{lemma: negation lemma}, $\lmodel(\negl \varphi) = \scale_k(-\tmodel(\ReLU(x))) = \scale_k(\tmodel(-\ReLU(x)))$. Therefore, $\lmodel(\psi)$ depends on $\tmodel(-\ReLU(x))$ in the exact same way as $\lmodel(\varphi)$ depends on $\tmodel(x)$, that is, 
    \[
        \lmodel(\psi) = \scale_k(\max\{0,\tmodel(-\ReLU(x))\}).
    \]
    By the semantics of \coolname and since $-\max\{a,b\} = \min\{-a,-b\}$ for all $a,b \in \R$, we get $\tmodel(-\ReLU(x)) = -\tmodel(\ReLU(x)) = -\max\{0, \tmodel(x)\} = \min\{0, -\tmodel(x)\}$, 
    and therefore we have $\lmodel(\psi) = \scale_k(\max\{0, \min\{0,-\tmodel(x)\}\})$. Now 
    because $\max\{0, \min\{0, a\}\} = 0$ for all $a \in \R$, we have 
    $\lmodel(\psi) = \scale_k(0) = \scale_k\left(\tmodel\left(\overline{0}\right)\right)$.
    This is a contradiction, because now $\psi$ is $(i,k)$-equivalent to $\overline{0}$, which is not possible by Proposition~\ref{proposition: Lukasiewicz logic cannot handle rationals}.
\end{proof}

\subsection{{\L}ukasiewicz logic without any truth constants}

In spite of the fact that the recursive construction from Lemmas~\ref{lemma: terms to logic} and \ref{lemma: proto-neuron to RPL and LPHFF} fails for {\L}ukasiewicz logic, we can nevertheless reverse-engineer which proto-neurons can be expressed in {\L}ukasiewicz logic. %
Let $k \in \Q_+$.\footnote{We assume that $k$ is rational so that the constructed proto-neurons have rational constant symbols as in the most of the paper, but the definitions and results from here on extend in a natural way to the case where $k \in \R_+$.} \defstyle{{\L}ukasiewicz proto-neurons over $k$} are defined as follows.
\begin{itemize}
    \item Each $x \in \VAR$ is a {\L}ukasiewicz proto-neuron over $k$.
    \item If $t$ and $t'$ are {\L}ukasiewicz proto-neurons over $k$, then so is $\overline{k} - \ReLU(t-t')$.
\end{itemize}

\begin{proposition}\label{proposition: what can Lukasiewicz logic do then?}
    Let $k \in \Q_+$. For each {\L}ukasiewicz proto-neuron over $k$ and each $i \in \R_+$ such that $i \leq k$, we can construct an $(i,k)$-equivalent formula of {\L}ukasiewicz logic that does not contain the truth constant $\overline{0}$.
\end{proposition}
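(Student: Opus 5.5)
We build the formula by structural recursion on the {\L}ukasiewicz proto-neuron $t$ over $k$, and show by induction that the result is $(i,k)$-equivalent to $t$. For a variable $x$ we take $\varphi_x := p_x$. For $t = \overline{k} - \ReLU(t'-t'')$ we take $\varphi_t := \varphi_{t'} \impl \varphi_{t''}$. Neither clause ever introduces $\overline{0}$, so the resulting formula is free of truth constants.

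The invariant to carry through the induction is the following. For every evaluation map $\tmodel$ with $\tmodel(x)\in[-i,i]$ for all $x$, and the matching valuation $\lmodel(p)=\scale_k(\tmodel(x_p))$, we have $\lmodel(\varphi_t)=\scale_k(\tmodel(t))$. In addition, $\tmodel(t)\in[-k,k]$. The base case holds by the definition of $\lmodel$, using $|\tmodel(x)|\le i\le k$.

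For the inductive step, write $a=\tmodel(t')$ and $b=\tmodel(t'')$; by the induction hypothesis both lie in $[-k,k]$. The semantics of $\impl$ give
\[
\lmodel(\varphi_t)=\min\{1,1-\scale_k(a)+\scale_k(b)\}=\min\Bigl\{1,1-\tfrac{a-b}{2k}\Bigr\}=1-\tfrac{\max\{0,a-b\}}{2k}.
\]
On the term side, $\tmodel(t)=k-\max\{0,a-b\}$, so
\[
\scale_k(\tmodel(t))=\tfrac{2k-\max\{0,a-b\}}{2k}=1-\tfrac{\max\{0,a-b\}}{2k},
\]
and the two sides agree. For the range invariant, note that $a-b\le 2k$, so $\max\{0,a-b\}\in[0,2k]$ and hence $\tmodel(t)\in[-k,k]$.

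The only real obstacle is to choose the invariant strong enough that the range condition $[-k,k]$ propagates. Without it, $\scale_k(a)$ need not lie in $[0,1]$, and the identification with {\L}ukasiewicz semantics would fail. With the bound included, the computation above is a routine algebraic check.
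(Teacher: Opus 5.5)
Your proof is correct and follows the paper's argument. It uses the same recursion, $p_x$ for variables and $\varphi_{t'} \impl \varphi_{t''}$ for $\overline{k} - \ReLU(t'-t'')$, and the same algebraic check that $\min\{1, 1-\scale_k(a)+\scale_k(b)\} = \scale_k(k-\max\{0,a-b\})$. Your extra range invariant $\tmodel(t)\in[-k,k]$ is harmless but unnecessary, and it is not the "real obstacle" you describe. That identity holds for all reals $a,b$. The bound also follows automatically from $\lmodel(\varphi_t)=\scale_k(\tmodel(t))\in[0,1]$.
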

\begin{proof}
    We construct the formula recursively over the structure of the {\L}ukasiewicz proto-neuron.

    Let $\tmodel$ be an evaluation map such that $\tmodel(x) \in [-i,i]$ for each $x \in \VAR$ and let $\lmodel$ be the valuation such that $\lmodel(p) = \scale_k(\tmodel(x_p))$ for each $p \in \PROP$. Let $t$ be a {\L}ukasiewicz proto-neuron over $k$. We construct the $(i,k)$-equivalent formula $\varphi_{t}$ by recursion over the structure of $t$.

    \begin{itemize}
        \item If $t = x$ for some $x \in \VAR$, then $\varphi_t := p_x$.
        \item If $t = \overline{k} - \ReLU(t' - t'')$ for some {\L}ukasiewicz proto-neurons $t'$ and $t''$ over $k$, then we define $\varphi_t := \varphi_{t'} \impl \varphi_{t''}$.
    \end{itemize}

    Next, we prove the correctness of the translation. The atomic case where $t = x$ for some $x \in \VAR$ is trivial, so assume that the claim holds for {\L}ukasiewicz proto-neurons $t'$ and $t''$ over $k$ and let $t = \overline{k} - \ReLU(t'-t'')$. 
    It remains to show that $\lmodel(\varphi_t) = \scale_k(\tmodel(t))$.

    By the semantics of $\impl$, we have $\lmodel(\varphi_t) = \min\{1, 1-\lmodel(\varphi_{t'})+\lmodel(\varphi_{t''})\}$. Because $\varphi_{t'}$ and $\varphi_{t''}$ are $(i,k)$-equivalent to $t'$ and $t''$, we have $\lmodel(\varphi_t) = \min\{1, 1-\scale_k(\tmodel(t'))+\scale_k(\tmodel(t''))\}$. By the definition of $\scale_k$, we get 
    \[
        \begin{aligned}
            \textstyle{\lmodel(\varphi_t)\,} &\textstyle{= \min\left\{1, 1 - \frac{k+\tmodel(t')}{2k} + \frac{k+\tmodel(t'')}{2k}\right\}} \\
            &\textstyle{= \min\left\{\frac{2k}{2k}, \frac{2k}{2k} + \frac{-k-\tmodel(t')}{2k} + \frac{k+\tmodel(t'')}{2k}\right\}} \\
            &\textstyle{= \min\left\{\frac{k+k}{2k}, \frac{k+(k-\tmodel(t')+\tmodel(t''))}{2k}\right\}} \\
            &\textstyle{= \min\{\scale_k(k), \scale_k(k-\tmodel(t') + \tmodel(t''))\}.}
        \end{aligned}
    \]
    Because $\scale_k$ is an increasing function, $\min\{\scale_k(a), \scale_k(b)\} = \scale_k(\min\{a,b\})$ for all $a,b \in \R$ and thus $\lmodel(\varphi_t) = \scale_k(\min\{k, k-\tmodel(t') + \tmodel(t'')\})$. Because for all $a,b,c \in \R$ we have $\min\{a+b, a+c\} = a + \min\{b, c\}$, we get $\lmodel(\varphi_t) = \scale_k(k+\min\{0, -\tmodel(t') + \tmodel(t'')\})$. As $\min\{-a, -b\} = -\max\{a, b\}$ for all $a,b\in \R$, $\lmodel(\varphi_t) = \scale_k(k-\max\{0, \tmodel(t') - \tmodel(t'')\})$. Now because of the semantics of $\ReLU$, we have $\tmodel(t) = k-\max\{0, \tmodel(t') - \tmodel(t'')\}$ and thus $\lmodel(\varphi_t) = \scale_k(\tmodel(t))$. This means that $\varphi_t$ is $(i,k)$-equivalent to $t$.
\end{proof}

For {\L}ukasiewicz proto-neurons over $1$, we also obtain a reverse translation, expressed below.

\begin{proposition}\label{proposition: and back}
    For each formula of {\L}ukasiewicz logic that does not contain the truth constant $\overline{0}$, we can construct an equivalent {\L}ukasiewicz proto-neuron over $1$.
\end{proposition}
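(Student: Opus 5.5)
We construct the term by recursion on the structure of the formula $\varphi$; since the $\overline{0}$-free {\L}ukasiewicz formulae are generated from proposition symbols by $\impl$ alone, there are only two cases. If $\varphi = p$, we set $t_\varphi := x_p$. If $\varphi = \psi \impl \theta$, we set $t_\varphi := \overline{1} - \ReLU(t_\psi - t_\theta)$. By the definition of {\L}ukasiewicz proto-neurons over $1$ (taking $k = 1$), this term is again a {\L}ukasiewicz proto-neuron over $1$. So the construction stays inside the required class, and it mirrors the translation in the proof of Proposition~\ref{proposition: what can Lukasiewicz logic do then?} read backwards.

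For correctness we fix a valuation $\lmodel$ and the evaluation map $\tmodel$ with $\tmodel(x) = \lmodel(p_x)$ for all $x \in \VAR$. We show $\tmodel(t_\varphi) = \lmodel(\varphi)$ by induction on $\varphi$. The atomic case is immediate. For the implication case, the calculation is exactly the one already carried out for $\impl$ in the proof of Lemma~\ref{lemma: logic to terms}, where the same term $\overline{1} - \ReLU(t_\psi - t_\theta)$ was used:
\[
\tmodel(t_\varphi) = 1 - \max\{0, \lmodel(\psi) - \lmodel(\theta)\} = \min\{1, 1 - \lmodel(\psi) + \lmodel(\theta)\} = \lmodel(\psi \impl \theta).
\]
We may therefore simply invoke that argument, with the induction hypothesis supplying $\tmodel(t_\psi) = \lmodel(\psi)$ and $\tmodel(t_\theta) = \lmodel(\theta)$.

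There is no real obstacle. The only point needing care is checking that the produced term is syntactically a {\L}ukasiewicz proto-neuron over $1$, rather than merely an equivalent one. This holds because the notation $\overline{1} - \ReLU(t - t')$ used in the definition of {\L}ukasiewicz proto-neurons coincides with the abbreviation used in the construction. Note also that plain (unscaled) equivalence is the correct notion here, in contrast with the $(i,k)$-equivalence of the preceding proposition, because for $k = 1$ we translate from formulae to terms and, as in the rest of the paper, no scaling is applied in that direction.
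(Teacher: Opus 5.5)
Your proposal is correct and matches the paper's proof: the paper likewise restricts the recursive translation from the proof of Lemma~\ref{lemma: logic to terms} ($p \mapsto x_p$ and $\psi \impl \theta \mapsto \overline{1} - \ReLU(t_\psi - t_\theta)$) to $\overline{0}$-free formulae, which are built from proposition symbols and $\impl$ alone. Your explicit check that the resulting term is syntactically a {\L}ukasiewicz proto-neuron over $1$ is a detail the paper leaves implicit.
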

\begin{proof}
    This follows directly from the proof of Lemma~\ref{lemma: logic to terms}, when we restrict to formulae of {\L}ukasiewicz logic that do not contain the truth constant $\overline{0}$, which are simply built from proposition symbols and the connective $\impl$. 
\end{proof}

Propositions~\ref{proposition: what can Lukasiewicz logic do then?} and \ref{proposition: and back} together constitute a fuzzy logic characterisation of {\L}ukasiewicz proto-neurons over $1$ via the fragment $\cL$ of {\L}ukasiewicz logic obtained by omitting $\overline{0}$ from the grammar. 
However, it is not a characterisation in the sense of Definition~\ref{definition: logical characterisation}, since the definition requires that we obtain an $(i,k)$-equivalent formula for each $i \in \R_+$ while Proposition~\ref{proposition: what can Lukasiewicz logic do then?} only holds when $i \leq k$ and we have $k = 1$. It also remains an open question for which {\L}ukasiewicz proto-neurons over $1$ we can obtain an equivalent neuron.

Lastly, by Propositions~\ref{proposition: what can Lukasiewicz logic do then?} and \ref{proposition: and back}, we obtain the following perhaps amusing corollary.

\begin{corollary}\label{corollary: logic to itself}
    Each formula of {\L}ukasiewicz logic that does not contain the truth constant~$\overline{0}$ is $1$-equivalent to itself.
\end{corollary}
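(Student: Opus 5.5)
The corollary follows by composing Propositions~\ref{proposition: and back} and \ref{proposition: what can Lukasiewicz logic do then?} through Lemma~\ref{lemma: transitive equivalence}, checking that the two translations are mutually inverse on the syntax.

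First, let $\varphi$ be a formula of {\L}ukasiewicz logic without $\overline{0}$. It is built from proposition symbols using only $\impl$. Following the proof of Lemma~\ref{lemma: logic to terms} (the case used in Proposition~\ref{proposition: and back}), we translate $p$ to $x_p$ and $\psi \impl \theta$ to $\overline{1} - \ReLU(t_\psi - t_\theta)$. This yields a {\L}ukasiewicz proto-neuron $t_\varphi$ over $1$ that is equivalent to $\varphi$.

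Second, apply Proposition~\ref{proposition: what can Lukasiewicz logic do then?} to $t_\varphi$ with $k = 1$ and $i = 1$. Its recursive construction sends $x$ to $p_x$ and $\overline{1} - \ReLU(t' - t'')$ to $\varphi_{t'} \impl \varphi_{t''}$. A trivial induction on $\varphi$ shows this returns exactly $\varphi$, so $\varphi$ is $(1,1)$-equivalent to $t_\varphi$.

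Finally, invoke Lemma~\ref{lemma: transitive equivalence} with $i = k = 1$, taking the formula called $\varphi$ there to be $\varphi$, the formula called $\psi$ there also to be $\varphi$, and $t = t_\varphi$. The hypotheses are that $\varphi$ and $t_\varphi$ are equivalent and that $\varphi$ is $(1,1)$-equivalent to $t_\varphi$; the lemma gives that $\varphi$ is $1$-equivalent to itself. The only point needing care is that the lemma needs $1 \le i \le k$, which holds with equality, and that Proposition~\ref{proposition: what can Lukasiewicz logic do then?} needs $i \le k$, which also holds. Nothing else is a real obstacle.

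As a sanity check, $\scale_1(x) = \frac{1+x}{2}$, so the claim is that $\lmodel'(\varphi) = \frac{1+\lmodel(\varphi)}{2}$ whenever $\lmodel'(p) = \frac{1+\lmodel(p)}{2}$ for all $p$. This can be verified directly by induction:
\[
1 - \tfrac{1+a}{2} + \tfrac{1+b}{2} = \tfrac{2-a+b}{2},
\]
so $\min\{1, \tfrac{2-a+b}{2}\} = \tfrac{1+\min\{1,1-a+b\}}{2}$. I would mention this as an alternative one-line argument.
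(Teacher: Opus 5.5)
Your proof is correct and follows the paper's argument. You translate $\varphi$ to a {\L}ukasiewicz proto-neuron over $1$ via Proposition~\ref{proposition: and back}, translate back with $i=k=1$ via Proposition~\ref{proposition: what can Lukasiewicz logic do then?}, note by induction that the round trip returns $\varphi$ itself, and conclude with Lemma~\ref{lemma: transitive equivalence}. Your direct inductive check that $\scale_1$ commutes with $\impl$ is also correct; it gives a shorter self-contained proof that avoids the detour through proto-neurons, and it shows why $\overline{0}$ must be excluded, since $\scale_1(0) = \half \neq 0$.
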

\begin{proof}
    Let $\varphi$ be a formula of {\L}ukasiewicz logic that does not contain the truth constant $\overline{0}$. By Proposition~\ref{proposition: and back}, we can construct a {\L}ukasiewicz proto-neuron $t$ over $1$ that is equivalent to $\varphi$, and by Proposition~\ref{proposition: what can Lukasiewicz logic do then?}, we can construct a formula $\varphi'$ of {\L}ukasiewicz logic that does not contain the truth constant $\overline{0}$ such that $\varphi'$ is $(1,1)$-equivalent to $t$. 
    By Lemma~\ref{lemma: transitive equivalence}, $\varphi'$ is $1$-equivalent to $\varphi$.
    
    It remains to show that $\varphi' = \varphi$ when we construct $t$ and $\varphi'$ according to the proofs of Propositions~\ref{proposition: what can Lukasiewicz logic do then?} and \ref{proposition: and back}.
    We prove the claim by induction over the structure of $\varphi$. If $\varphi = p$ for some $p \in \PROP$, then by the construction from the proof of Lemma~\ref{lemma: logic to terms}, we have $t = x_p$, and by the construction in the proof of Proposition~\ref{proposition: what can Lukasiewicz logic do then?}, we get $\varphi' = p = \varphi$. Next, assume the claim holds for formulae $\psi$ and $\theta$, let $t_{\psi}$ and~$t_{\theta}$ denote the proto-neurons obtained by the construction in the proof of Lemma~\ref{lemma: logic to terms}, and let $\psi'$ and $\theta'$ denote the formulae obtained by the construction in the proof of Proposition~\ref{proposition: what can Lukasiewicz logic do then?}. Let $\varphi = \psi \impl \theta$. Now, by the construction in  Lemma~\ref{lemma: logic to terms}, we get $t = \overline{1} - \ReLU(t_{\psi} - t_{\theta})$. By the construction in the proof of Proposition~\ref{proposition: what can Lukasiewicz logic do then?}, we have $\varphi' = \psi' \impl \theta'$. By the induction hypothesis, we have $\psi' = \psi$ and $\theta' = \theta$ and thus $\varphi' = \psi \impl \theta = \varphi$.
\end{proof}

Corollary~\ref{corollary: logic to itself} shows that each formula of {\L}ukasiewicz logic that does not contain the truth constant $\overline{0}$ can use the interval $\left[\half,1\right]$ to simulate its own behaviour in the interval $[0,1]$.

\end{document}